\newif\ifarxiv
\arxivtrue

\ifarxiv
   \newcommand\IfApx[2]{#1}
\else
   \newcommand\IfApx[2]{#2}
\fi

\newif\ifdraft     %
\draftfalse

\newif\ifexternal  %
\externaltrue

\documentclass{article}
\title{How Hard is it to Decide if a Fact is Relevant to a Query?}
\author{%
Meghyn Bienvenu
\and
Diego Figueira
\and
Pierre Lafourcade
\\
\affiliations
Univ. Bordeaux, CNRS,  Bordeaux INP, LaBRI, UMR 5800, F-33400, Talence, France\\
\emails{
\{meghyn.bienvenu, diego.figueira, pierre.lafourcade\}@u-bordeaux.fr}
}

\usepackage{kr}

\usepackage{times}
\usepackage{soul}
\usepackage{url}
\usepackage[hidelinks]{hyperref}
\usepackage[utf8]{inputenc}
\usepackage[small]{caption}
\usepackage{graphicx}
\usepackage{amsmath}
\usepackage{amsthm}
\usepackage{booktabs}
\usepackage{algorithm}
\usepackage{algorithmic}
\ifarxiv%
\usepackage[bibliography=common,appendix=append]{apxproof} %
\else
\usepackage[bibliography=common,appendix=strip]{apxproof}
\fi

\newif\ifappendix  %
\appendixfalse

\usepackage{scrhack} %
\usepackage{listings} %
\usepackage{tabularx} %
\usepackage{multirow} %
\usepackage{subcaption}

\usepackage{graphicx} %
\usepackage[table,xcdraw]{xcolor}

\usepackage{tikz}
\usepackage{tikzit} %
\usetikzlibrary{calc}
\usetikzlibrary{decorations.pathreplacing,calligraphy}
\usetikzlibrary{decorations.pathmorphing}
\usetikzlibrary{positioning}
\usetikzlibrary{patterns.meta}
\usetikzlibrary {arrows.meta}

\usepackage{bbding} %
\usepackage{csquotes} %
\usepackage[inline,shortlabels]{enumitem} %
\usepackage{lineno} %
\usepackage{xspace} %

\usepackage{hyperref}
\usepackage{cleveref} %
\usepackage{crossreftools} %

\usepackage{amssymb} %
\usepackage{amsthm} %
\usepackage{centernot} %
\usepackage{mathcommand} %
\usepackage{mathtools} %
\usepackage{nicefrac} %
\usepackage{stmaryrd} %
\ifdraft %
\usepackage[backgroundcolor=orange!20, textcolor={Dark Ruby Red}, textsize=tiny]{todonotes}
\else
\usepackage[backgroundcolor=orange!20, textcolor={Dark Ruby Red}, textsize=tiny,disable]{todonotes}
\fi
\definecolor{green}{RGB}{0,120,0}
\definecolor{hlyellow}{RGB}{250, 250, 190}
\definecolor{diegoeditcolor}{RGB}{210,210,255}
\definecolor{meghyneditcolor}{RGB}{210,255,210}
\definecolor{pierreeditcolor}{RGB}{255, 225, 186}
\newcommand{\sidediego}[1]{}
\newcommand{\sidemeghyn}[1]{}
\newcommand{\sidepierre}[1]{}
\newcommand{\meghyn}[1]{}
\newcommand{\pierre}[1]{}

\newcommand{\diego}[1]{}

\newcommand{\aka}{a.k.a.}

\newcommand{\ie}{i.e.,}

\providecommand{\st}{}\renewcommand{\st}{s.t.} %

\DeclarePairedDelimiter\set{\{}{\}} %

\LoopCommands\lettersUppercase[l#1] %
   {\newmathcommand#2{\mathbb{#1}}}

\LoopCommands{ABCDEFGIJKMNQRTUVWXYZ}[#1] %
   {\newmathcommand#2{\mathcal{#1}}}
\LoopCommands{HLOPS}[#1] %
   {\renewmathcommand#2{\mathcal{#1}}}

\newcommand{\defeq}{\vcentcolon=}

\renewcommand{\le}{\leqslant}
\renewcommand{\ge}{\geqslant}

\newcommand{\inc}{\subseteq} %

\newcommand{\ic}{\sqsubseteq} %

\newrobustcmd\decisionproblem[3]{
	\begin{center}\AP
	\fbox{\begin{tabular}{rl}
	\multicolumn{2}{l}{\parbox[t]{.9\linewidth}{#1}} \\
    \hline
	{\emph{Input}}: & \parbox[t]{.7\linewidth}{#2} \\   
	{\emph{Question}}: & \parbox[t]{.7\linewidth}{#3}
	\end{tabular}} 
	\end{center}
}

\definecolor{light-gray}{gray}{0.9} %
\newcommand{\proofcase}[1]{\noindent\setlength{\fboxsep}{2pt}\colorbox{light-gray}{#1\hspace{.2em}}~} %

\newenvironment{equation-inline}{ %
\refstepcounter{equation}}{\hfill(\theequation)\\}

\theoremstyle{plain}

\newtheoremrep{theorem}{Theorem}[section]

\newtheoremrep{claim}[theorem]{Claim}
   \AddToHook{env/claim/begin}{\crefalias{theorem}{claim}} 
\newtheoremrep{corollary}[theorem]{Corollary}
   \AddToHook{env/corollary/begin}{\crefalias{theorem}{corollary}} 
\newtheoremrep{example}[theorem]{Example}
   \AddToHook{env/example/begin}{\crefalias{theorem}{example}} 
\newtheoremrep{lemma}[theorem]{Lemma}
   \AddToHook{env/lemma/begin}{\crefalias{theorem}{lemma}} 
\newtheoremrep{proposition}[theorem]{Proposition}
   \AddToHook{env/proposition/begin}{\crefalias{theorem}{proposition}} 

\theoremstyle{definition}
\newtheorem{definition}[theorem]{Definition}
   \AddToHook{env/definition/begin}{\crefalias{theorem}{definition}} 

   \AddToHook{env/open/begin}{\crefalias{theorem}{open}} 
\newtheoremrep{remark}[theorem]{Remark}
   \AddToHook{env/remark/begin}{\crefalias{theorem}{remark}} 

\crefname{theorem}{Theorem}{Theorems}
\crefname{claim}{Claim}{Claims}
\crefname{corollary}{Corollary}{Corollaries}
\crefname{example}{Example}{Examples}
\crefname{lemma}{Lemma}{Lemmas}
\crefname{proposition}{Proposition}{Propositions}
\crefname{remark}{Remark}{Remarks}

\NewCommandCopy{\proofqedsymbol}{\qedsymbol}%
\newcommand{\remarkqedsymbol}{{$\triangle$}}%
\AtBeginEnvironment{proof}{\renewcommand{\qedsymbol}{\proofqedsymbol}}
\AtBeginEnvironment{nestedproof}{\renewcommand{\qedsymbol}{\nestedproofqedsymbol}}
\AtBeginEnvironment{example}{%
  \pushQED{\qed}\renewcommand{\qedsymbol}{\exampleqedsymbol}%
}
\AtEndEnvironment{example}{\popQED\endexample}
\AtBeginEnvironment{remark}{%
  \pushQED{\qed}\renewcommand{\qedsymbol}{\remarkqedsymbol}%
}
\AtEndEnvironment{remark}{\popQED\endexample}

\ifexternal
   \usetikzlibrary{external}
   \tikzexternaldisable %

\else
   
\fi
\ifdraft
    \usepackage[xcolor, hyperref, cleveref, notion, quotation, composition]{knowledge}
\else
    \usepackage[xcolor, hyperref, cleveref, notion, quotation, electronic]{knowledge}
\fi

\usepackage{mathcommand}
\knowledgeconfigure{quotation, protect quotation={tikzcd}}
\knowledgeconfigure{diagnose line=true, diagnose bar=true}

\definecolor{Dark Ruby Red}{HTML}{5d1416}
\definecolor{Dark Blue Sapphire}{HTML}{004c5c} %
\definecolor{Dark Gamboge}{HTML}{be7c00}

\IfKnowledgePaperModeTF{
    \knowledgestyle{notion}{color=black}
    \hypersetup{}
}{
    \knowledgestyle{intro notion}{color={Dark Ruby Red}, emphasize}
    \knowledgestyle{notion}{color={Dark Blue Sapphire}}
    \hypersetup{ %
        colorlinks=true,
        breaklinks=true,
        linkcolor={Dark Blue Sapphire}, %
        citecolor={Dark Blue Sapphire}, %
        filecolor={Dark Blue Sapphire}, %
        urlcolor={Dark Blue Sapphire},
    }
    \IfKnowledgeElectronicModeTF{
    }{
        \knowledgeconfigure{anchor point color={Dark Ruby Red}, anchor point shape=corner}
        \knowledgestyle{intro unknown}{color={Dark Gamboge}, emphasize}
        \knowledgestyle{intro unknown cont}{color={Dark Gamboge}, emphasize}
        \knowledgestyle{kl unknown}{color={Dark Gamboge}}
        \knowledgestyle{kl unknown cont}{color={Dark Gamboge}}
    }
}

\makeindex

\newcommand{\fakekl}{} %
\knowledgenewrobustcmd{\BPP}{\cmdkl{\mathsf{BPP}}}
\newrobustcmd{\coNP}{\fakekl{\mathsf{coNP}}}
\knowledgenewrobustcmd{\FP}{\cmdkl{\mathsf{FP}}} 
\knowledgenewrobustcmd{\FPsNP}{\cmdkl{\mathsf{FP}^{\mathsf{\#NP}}}} 
\knowledgenewrobustcmd{\FPsP}{\cmdkl{\mathsf{FP}^\mathsf{\#P}}}
\knowledgenewrobustcmd{\FPsPH}{\cmdkl{\mathsf{FP}^{\mathsf{\#PH}}}}
\knowledgenewrobustcmd{\NP}{\cmdkl{\mathsf{NP}}}
\knowledgenewrobustcmd{\NL}{\cmdkl{\ensuremath{\mathsf{NL}}}}
\knowledgenewrobustcmd{\logSpace}{\cmdkl{\mathsf{L}}}
\knowledgenewrobustcmd{\logCFL}{\ensuremath{\cmdkl{\mathsf{LogCFL}}}}
\knowledgenewrobustcmd{\PH}{\cmdkl{\mathsf{PH}}}
\knowledgenewrobustcmd{\PsP}{\cmdkl{\mathsf{P}^\mathsf{\#P}}}
\newrobustcmd{\Ptime}{\fakekl{\mathsf{P}}}
\knowledgenewrobustcmd{\sigmaptwo}{\cmdkl{\Sigma^p_2}}
\let\Sigtwo\sigmaptwo
\knowledgenewrobustcmd{\sNP}{\cmdkl{\mathsf{\#NP}}}
\knowledgenewrobustcmd{\sP}{\cmdkl{\mathsf{\#P}}} 
\knowledgenewrobustcmd{\sPH}{\cmdkl{\mathsf{\#PH}}}

\knowledgenewmathcommand{\numBipVerCov}{\cmdkl{\mathsf{\#BipVerCov}}}
\knowledgenewrobustcmd{\numBipIndep}{\cmdkl{\mathsf{\#BipIndepSet}}} %
\knowledgenewmathcommand{\numMinVerCov}{\cmdkl{\mathsf{\#MinVerCov}}}
\knowledgenewrobustcmd{\numPerMatch}{\cmdkl{\mathsf{\#PerfMatch}}} %
\knowledgenewrobustcmd{\numSTConn}{\cmdkl{\mathsf{\#stConnect}}} %
\knowledgenewmathcommand{\HamPathst}{\cmdkl{\mathsf{HamP_{s,t}}}}
\newmathcommand{\SAT}{\fakekl{\mathsf{SAT}}}
\knowledgenewmathcommand{\threeCol}{\cmdkl{\mathsf{3Col}}}
\knowledgenewmathcommand{\threeColbar}{\cmdkl{\overline{\mathsf{3Col}}}}
\knowledgenewmathcommand{\threeSAT}{\cmdkl{\mathsf{SAT}}}
\knowledgenewmathcommand{\twoQCNF}{\cmdkl{\mathsf{2QCNF}}}
\knowledgenewmathcommand{\VerCov}{\cmdkl{\mathsf{VerCov}}}
\knowledgenewmathcommand{\VerCovFun}{\cmdkl{\mathrm{VerCov}}}
\knowledgenewmathcommand{\IS}{\cmdkl{\mathrm{IS}}} %

\knowledgenewmathcommandPIE{\countMS}{%
   \cmdkl{\#}^{\cmdkl{\mathsf{ms}}}#1#2#3}
\knowledgenewmathcommandPIE{\countFMS}{%
   \cmdkl{\#}^{\cmdkl{\mathsf{fms}}}#1#2#3}
\knowledgenewmathcommandPIE{\countcMS}{%
   \cmdkl{\#}^{\cmdkl{\mathsf{cms}}}#1#2#3}
\knowledgenewmathcommandPIE{\countcFMS}{%
   \cmdkl{\#}^{\cmdkl{\mathsf{cfms}}}#1#2#3}
\knowledgenewmathcommandPIE{\countAns}{%
   \cmdkl{\#}^{\cmdkl{\mathsf{hom}}}#1#2#3}

\knowledgenewmathcommandPIE{\evalCountMS}{%
   \cmdkl{\textsc{eval-}\#}^{\cmdkl{\mathsf{ms}}}#1#2#3}
\knowledgenewmathcommandPIE{\evalCountFMS}{%
   \cmdkl{\textsc{eval-}\#}^{\cmdkl{\mathsf{fms}}}#1#2#3}
\knowledgenewmathcommandPIE{\evalCountcMS}{%
   \cmdkl{\textsc{eval-}\#}^{\cmdkl{\mathsf{cms}}}#1#2#3}
\knowledgenewmathcommandPIE{\evalCountcFMS}{%
   \cmdkl{\textsc{eval-}\#}^{\cmdkl{\mathsf{cfms}}}#1#2#3}
\knowledgenewmathcommandPIE{\evalCountAns}{%
   \cmdkl{\textsc{eval-}\#}^{\cmdkl{\mathsf{hom}}}#1#2#3}

\knowledgenewmathcommandPIE{\GIMC}{%
   \cmdkl{\mathsf{GIMC}#1#2#3}}
\knowledgenewmathcommandPIE{\FGIMC}{%
   \cmdkl{\mathsf{FGIMC}#1#2#3}}

\knowledgenewmathcommandPIE{\PQE}{%
   \cmdkl{\mathsf{PQE}#1#2#3}}
\knowledgenewmathcommandPIE{\SPQE}{%
   \cmdkl{\mathsf{SPQE}#1#2#3}}
\knowledgenewmathcommandPIE{\SPPQE}{%
   \cmdkl{\mathsf{SPPQE}#1#2#3}}
\knowledgenewmathcommandPIE{\PQEPhalf}{%
   \cmdkl{\mathsf{PQE}#1#2#3\!\left(\nicefrac 1 2\right)}}
\knowledgenewmathcommandPIE{\PQEPhalfOne}{%
   \cmdkl{\mathsf{PQE}#1#2#3\!\left(\nicefrac 1 2 ; 1\right)}}
\knowledgenewmathcommandPIE{\CPQE}{%
   \cmdkl{\mathsf{CPQE}#1#2#3}}
\knowledgenewmathcommandPIE{\CSPQE}{
   \cmdkl{\mathsf{CSPQE}#1#2#3}}
\knowledgenewmathcommandPIE{\CSPPQE}{
   \cmdkl{\mathsf{CSPPQE}#1#2#3}}
\knowledgenewmathcommandPIE{\CPQEPhalf}{
   \cmdkl{\mathsf{CPQE}#1#2#3\!\left(\nicefrac 1 2\right)}}
\knowledgenewmathcommandPIE{\CPQEPhalfOne}{
   \cmdkl{\mathsf{CPQE}#1#2#3\!\left(\nicefrac 1 2 ; 1\right)}}

\knowledgenewmathcommand{\DMH}{%
   \cmdkl{\mathsf{MinHom}^\mathsf{d}}}
\knowledgenewmathcommand{\DMHpath}{%
   \cmdkl{\mathsf{MinHom}^\mathsf{d}_\mathsf{path}}}
\knowledgenewmathcommand{\DMHself}{%
   \cmdkl{\mathsf{MinHom}^\mathsf{d}_\mathsf{self}}}
\knowledgenewmathcommand{\DUMH}{%
   \cmdkl{\mathsf{MinHom}^\mathsf{d/u}}}
\knowledgenewmathcommand{\DUMHpath}{%
   \cmdkl{\mathsf{MinHom}^\mathsf{d/u}_\mathsf{path}}}
\knowledgenewmathcommand{\DUMHself}{%
   \cmdkl{\mathsf{MinHom}^\mathsf{d/u}_\mathsf{self}}}
\knowledgenewmathcommandPIE{\Eval}{%
   \cmdkl{\mathsf{Eval}#1#2#3}}
\knowledgenewmathcommandPIE{\Rvlt}{%
   \cmdkl{\mathsf{Rvlt}#1#2#3}}
\knowledgenewmathcommand{\UMH}{%
   \cmdkl{\mathsf{MinHom}^\mathsf{u}}}
\knowledgenewmathcommand{\UMHpath}{%
   \cmdkl{\mathsf{MinHom}^\mathsf{u}_\mathsf{path}}}
\knowledgenewmathcommand{\UMHself}{%
   \cmdkl{\mathsf{MinHom}^\mathsf{u}_\mathsf{self}}}

\knowledgenewmathcommandPIE{\stShapley}{%
   \cmdkl{\mathsf{SVC}}^{\cmdkl{\star}}#1#2#3}

\knowledgenewmathcommandPIE{\dShapley}{%
   \cmdkl{\mathsf{SVC}}^{\cmdkl{\mathsf{dr}}}#1#2#3}
\newmathcommandPIE{\dShapleyNoKL}{%
   \mathsf{SVC}^{\mathsf{dr}}#1#2#3}
\knowledgenewmathcommandPIE{\dnShapley}{%
   \cmdkl{\mathsf{nSVC}}^{\cmdkl{\mathsf{dr}}}#1#2#3}
\knowledgenewmathcommandPIE{\dcShapley}{%
   \cmdkl{\mathsf{CSVC}}^{\cmdkl{\mathsf{dr}}}#1#2#3}
\knowledgenewmathcommandPIE{\dcnShapley}{%
   \cmdkl{\mathsf{nCSVC}}^{\cmdkl{\mathsf{dr}}}#1#2#3}

\knowledgenewmathcommandPIE{\mcShapley}{%
   \cmdkl{\mathsf{SVC}}^{\cmdkl{\mathsf{MC}}}#1#2#3}
\knowledgenewmathcommandPIE{\pShapley}{%
   \cmdkl{\mathsf{SVC}}^{\cmdkl{\mathsf{P}}}#1#2#3}
\knowledgenewmathcommandPIE{\rShapley}{%
   \cmdkl{\mathsf{SVC}}^{\cmdkl{\mathsf{R}}}#1#2#3}
\knowledgenewmathcommandPIE{\saShapley}{%
   \cmdkl{\mathsf{SVC}}^{\cmdkl{\mathsf{hom}}}#1#2#3}
\knowledgenewmathcommandPIE{\isShapley}{%
   \cmdkl{\mathsf{SVC}}^{\cmdkl{\mathsf{is}}}#1#2#3}

\newcommand{\minsupindex}{\mathsf{ms}}
\knowledgenewmathcommandPIE{\msShapley}{%
   \cmdkl{\mathsf{SVC}}^{\cmdkl{\minsupindex}}#1#2#3}
\knowledgenewmathcommandPIE{\msnShapley}{%
   \cmdkl{\mathsf{nSVC}}^{\cmdkl{\minsupindex}}#1#2#3}
\knowledgenewmathcommandPIE{\mscShapley}{%
   \cmdkl{\mathsf{CSVC}}^{\cmdkl{\minsupindex}}#1#2#3}
\knowledgenewmathcommandPIE{\mscnShapley}{%
   \cmdkl{\mathsf{nCSVC}}^{\cmdkl{\minsupindex}}#1#2#3}

\knowledgenewmathcommandPIE{\wShapley}{%
   \cmdkl{\mathsf{SVC}}#1#2#3}
\knowledgenewmathcommandPIE{\wnShapley}{
   \cmdkl{\mathsf{nSVC}}#1#2#3}
\knowledgenewmathcommandPIE{\wcShapley}{
   \cmdkl{\mathsf{CSVC}}#1#2#3}
\knowledgenewmathcommandPIE{\wcnShapley}{
   \cmdkl{\mathsf{nCSVC}}#1#2#3}
\knowledgenewmathcommandPIE{\sShapley}{%
   \cmdkl{\mathsf{SVC}}^{\cmdkl{\mathsf{s}}}#1#2#3}
\knowledgenewmathcommandPIE{\sharpShapley}{%
   \cmdkl{\mathsf{SVC}}^{\cmdkl{\mathsf{\#}}}#1#2#3}

\knowledgenewrobustcmd{\Sh}{\cmdkl{\mathrm{Sh}}} %
\knowledgenewrobustcmd{\Bz}{\cmdkl{\mathrm{Bz}}} %
\knowledgenewrobustcmd{\Slike}{\cmdkl{\S^c}} %

\knowledgenewmathcommandPIE{\scorefun}{%
   \cmdkl{\xi}#1#2#3}
\knowledgenewmathcommandPIE{\STscorefun}{%
   \cmdkl{\Xi}^{\cmdkl{\mathsf{\star}}}#1#2#3}
\knowledgenewmathcommandPIE{\stscorefun}{%
   \cmdkl{\xi}^{\cmdkl{\mathsf{\star}}}#1#2#3}

\knowledgenewmathcommandPIE{\onems}{%
   \cmdkl{\xi}#1#2#3}%
\knowledgenewmathcommandPIE{\SHAPscorefun}{%
   \cmdkl{\Xi}^{\cmdkl{\mathsf{SHAP}}}#1#2#3}
\knowledgenewmathcommandPIE{\shapscorefun}{%
   \cmdkl{\xi}^{\cmdkl{\mathsf{SHAP}}}#1#2#3}
\knowledgenewmathcommandPIE{\zetascorefun}{%
   \cmdkl{\zeta}#1#2#3}
\knowledgenewmathcommandPIE{\isscorefun}{%
   \cmdkl{\xi}^{\cmdkl{\mathsf{is}}}#1#2#3}

\knowledgenewmathcommandPIE{\Dscorefun}{%
   \cmdkl{\Xi}^{\cmdkl{\mathsf{dr}}}#1#2#3}
\knowledgenewmathcommandPIE{\dscorefun}{%
   \cmdkl{\xi}^{\cmdkl{\mathsf{dr}}}#1#2#3}

\knowledgenewmathcommandPIE{\MCscorefun}{%
   \cmdkl{\Xi}^{\cmdkl{\mathsf{MC}}}#1#2#3}
\knowledgenewmathcommandPIE{\mcscorefun}{%
   \cmdkl{\xi}^{\cmdkl{\mathsf{MC}}}#1#2#3}
\knowledgenewmathcommandPIE{\Pscorefun}{%
   \cmdkl{\Xi}^{\cmdkl{\mathsf{P}}}#1#2#3}
\knowledgenewmathcommandPIE{\pscorefun}{%
   \cmdkl{\xi}^{\cmdkl{\mathsf{P}}}#1#2#3}
\knowledgenewmathcommandPIE{\Rscorefun}{%
   \cmdkl{\Xi}^{\cmdkl{\mathsf{R}}}#1#2#3}
\knowledgenewmathcommandPIE{\rscorefun}{%
   \cmdkl{\xi}^{\cmdkl{\mathsf{R}}}#1#2#3}
\knowledgenewmathcommandPIE{\SAscorefun}{%
   \cmdkl{\Xi}^{\cmdkl{\mathsf{hom}}}#1#2#3}
\knowledgenewmathcommandPIE{\sascorefun}{%
   \cmdkl{\xi}^{\cmdkl{\mathsf{hom}}}#1#2#3}

\knowledgenewmathcommandPIE{\MSscorefun}{%
   \cmdkl{\Xi}^{\cmdkl{\minsupindex}}#1#2#3}
\knowledgenewmathcommandPIE{\msscorefun}{%
   \cmdkl{\xi}^{\cmdkl{\minsupindex}}#1#2#3}
\knowledgenewmathcommandPIE{\CMSscorefun}{%
   \cmdkl{\Xi}^{\cmdkl{\mathsf{cms}}}#1#2#3}
\knowledgenewmathcommandPIE{\cmsscorefun}{%
   \cmdkl{\xi}^{\cmdkl{\mathsf{cms}}}#1#2#3}

\knowledgenewmathcommandPIE{\Wscorefun}{%
   \cmdkl{\Xi}#1#2#3}
\knowledgenewmathcommandPIE{\wscorefun}{%
   \cmdkl{\xi}#1#2#3}
\knowledgenewmathcommandPIE{\Sscorefun}{%
   \cmdkl{\Xi}^{\cmdkl{\mathsf{s}}}#1#2#3}
\knowledgenewmathcommandPIE{\sscorefun}{%
   \cmdkl{\xi}^{\cmdkl{\mathsf{s}}}#1#2#3}
\knowledgenewmathcommandPIE{\SHARPscorefun}{%
   \cmdkl{\Xi}^{\cmdkl{\mathsf{\#}}}#1#2#3}
\knowledgenewmathcommandPIE{\sharpscorefun}{%
   \cmdkl{\xi}^{\cmdkl{\mathsf{\#}}}#1#2#3}

\knowledgenewrobustcmd{\sigmaless}[1]{\cmdkl{\sigma}_{\!\cmdkl{<}#1}}
\knowledgenewrobustcmd{\sigmaleq}[1]{\cmdkl{\sigma}_{\!\cmdkl{\le}#1}}

\knowledgenewrobustcmd{\bse}[1]{\cmdkl{X_{#1}}} %

\knowledgenewrobustcmd{\ann}{\cmdkl{\nu}}
\knowledgenewrobustcmd{\oneann}{\cmdkl{\mathbf{1}}}
\knowledgenewrobustcmd{\adom}{\cmdkl{\textit{adom}}} %

\knowledgenewmathcommandPIE{\Minsups}{%
   \cmdkl{\mathsf{MS}}#1#2#3}
\knowledgenewmathcommandPIE{\CMinsups}{%
   \cmdkl{\mathsf{CMS}}#1#2#3}

\knowledgenewmathcommand{\imodels}{%
   \cmdkl{\models_{\mathsf{i}}}}
\knowledgenewmathcommand{\nimodels}{%
   \cmdkl{\not\models_{\mathsf{i}}}}

\knowledgenewrobustcmd{\aC}{\cmdkl{C}} %
\knowledgenewrobustcmd{\aPi}{\cmdkl{\Pi}} %
\knowledgenewcommand{\Pic}{\cmdkl{\Pi C}} %

\knowledgenewmathcommandPIE{\IsubA}{%
   \cmdkl{\I#1#2#3}}

\knowledgenewrobustcmd{\cnames}{\cmdkl{\mathsf{N_C}}}
\knowledgenewrobustcmd{\dnames}{\cmdkl{\mathsf{N_D}}}
\knowledgenewrobustcmd{\inames}{\cmdkl{\mathsf{N_I}}}
\knowledgenewrobustcmd{\rnames}{\cmdkl{\mathsf{N_R}}}
\knowledgenewrobustcmd{\vnames}{\cmdkl{\mathsf{N_V}}}
\knowledgenewrobustcmd{\nulls}{\cmdkl{\mathsf{N_U}}}

\knowledgenewrobustcmd{\NRpm}{\cmdkl{\mathsf{N}_R^{\pm}}}

\knowledgenewrobustcmd{\terms}{\cmdkl{\mathsf{terms}}}
\knowledgenewrobustcmd{\mods}{\cmdkl{\mathsf{Mod}}}

\knowledgenewrobustcmd{\withT}[1]{(\T,#1)} %

\knowledgenewrobustcmd{\omqsat}{\mathrel{\cmdkl{\models}}} %

\knowledgenewrobustcmd{\anon}{\cmdkl{\mathsf{anon}}} %
\knowledgenewrobustcmd{\withanon}[1]{{#1}^{\cmdkl{\circ}}} %

\knowledgenewrobustcmd{\modelsmu}[1][\mu]{\cmdkl{\models}_{#1}} %

\knowledgenewrobustcmd{\dllitec}{\cmdkl{\ensuremath{\mathsf{DL\text{-}Lite}_{\mathsf{core}}}}}
\knowledgenewrobustcmd{\dlliter}{\cmdkl{\ensuremath{\mathsf{DL\text{-}Lite}_{\mathcal{R}}}}}
\knowledgenewrobustcmd{\dlliteh}{\cmdkl{\ensuremath{\mathsf{DL\text{-}Lite}_{\mathsf{Horn}}}}}

\knowledgenewrobustcmd{\EL}{\cmdkl{\mathcal{EL}}}
\knowledgenewrobustcmd{\ELI}{\cmdkl{\mathcal{ELI}}}
\knowledgenewrobustcmd{\elhibot}{\cmdkl{\mathcal{ELHI}_{\bot}}}
\knowledgenewrobustcmd{\elhifbot}{\cmdkl{\mathcal{ELHIF}_{\bot}}}

\knowledgenewrobustcmd{\Lmin}{\cmdkl{\L_{\min}}} %
\knowledgenewrobustcmd{\LminH}{\cmdkl{\L_{\sqcap}}} %

\newcommand{\subendo}{\mathsf{n}}
\newcommand{\subexo}{\mathsf{x}}
\knowledgenewrobustcmd{\Dn}[1][\D]{#1_{\cmdkl{\subendo}}}
\knowledgenewrobustcmd{\Dx}[1][\D]{#1_{\cmdkl{\subexo}}}

\knowledgenewrobustcmd{\constn}{\cmdkl{\textit{const}_{\cmdkl{\subendo}}}}
\knowledgenewrobustcmd{\constx}{\cmdkl{\textit{const}_{\cmdkl{\subendo}}}}

\knowledgenewrobustcmd{\atoms}{\cmdkl{\textit{atoms}}}
\knowledgenewrobustcmd{\arity}{\cmdkl{\mathrm{arity}}}
\knowledgenewrobustcmd{\const}{\cmdkl{\textit{const}}}
\knowledgenewrobustcmd{\Const}{\cmdkl{\mathsf{Const}}} %
\knowledgenewrobustcmd{\dom}{\cmdkl{\mathrm{dom}}} %
\knowledgenewrobustcmd{\mterms}{\cmdkl{\textit{terms}}}
\knowledgenewrobustcmd{\vars}{\cmdkl{\textit{vars}}}
\knowledgenewrobustcmd{\Var}{\cmdkl{\mathsf{Var}}}

\knowledgenewrobustcmd{\Unif}[1][q]{\cmdkl{\mathcal{M}}^{\textit{old}}_{#1}}
\knowledgenewrobustcmd{\Mergealt}[1][q]{\cmdkl{\mathcal{M}}_{#1}}

\knowledgenewrobustcmd{\vect}[1]{\cmdkl{\mathbf{#1}}} %
\knowledgenewrobustcmd{\1}[1]{\cmdkl{[}#1\cmdkl{]}} %

\knowledgenewrobustcmd{\lb}{\cmdkl{\mathrm{lb}}} %

\knowledgenewrobustcmd{\Esp}{\mathbf{E}} %
\knowledgenewrobustcmd{\Prob}{\mathbf{P}} %
\knowledgenewrobustcmd{\Vrnce}{\mathbf{V}} %

\knowledgenewrobustcmd{\partsof}[1]{\cmdkl{\mathcal{P}}(#1)} %
\knowledgenewrobustcmd{\fpartsof}[1]{\cmdkl{\mathcal{P}_{\mathsf{f}}}(#1)} %
\knowledgenewrobustcmd{\Perm}{\cmdkl{\matfrak{S}}} %
\knowledgenewrobustcmd{\Totord}{\cmdkl{\mathfrak{O}}} %

\knowledgenewrobustcmd{\homto}[1][]{\mathrel{\cmdkl{\xrightarrow{\smash{\textit{\tiny #1 \!hom}}}}}} %
\knowledgenewrobustcmd{\Chomto}[1][]{\mathrel{\cmdkl{\xrightarrow{\smash{\textit{\tiny #1 \!$\C$-hom}}}}}} %
\knowledgenewrobustcmd{\Pichomto}[1][]{\mathrel{\cmdkl{\xrightarrow{\smash{\textit{\tiny #1 \!$\Pic$-h}}}}}} %

\knowledgenewrobustcmd{\polyrx}{ %
   \mathrel{\cmdkl{\le_{\mathsf{P}}}}}
\knowledgenewrobustcmd{\polyeq}{ %
   \mathrel{\cmdkl{\equiv_{\mathsf{P}}}}}
\knowledgenewrobustcmd{\polyrxm}{ %
   \mathrel{\cmdkl{\le^{\mathsf{m}}_{\mathsf{P}}}}}
\knowledgenewrobustcmd{\polyeqm}{ %
   \mathrel{\cmdkl{\equiv^{\mathsf{m}}_{\mathsf{P}}}}}

\knowledgenewrobustcmd{\dcup}{\mathbin{\cmdkl{\uplus}}} %
\knowledgenewrobustcmd{\bigdcup}{\mathop{\cmdkl{\biguplus}}} %

\knowledgenewrobustcmd{\class}{\mathcal{C}} %

\knowledgenewmathcommand{\ACQ}{\cmdkl{\mathsf{ACQ}}} %
\knowledgenewmathcommand{\AQ}{\cmdkl{\mathsf{AQ}}} %
\knowledgenewmathcommand{\CQ}{\cmdkl{\mathsf{CQ}}} %
\knowledgenewmathcommand{\CQeq}{\cmdkl{\mathsf{CQ}^{=}}} %
\knowledgenewmathcommand{\CQeqneq}{\cmdkl{\mathsf{CQ}^{\neq,=}}} %
\knowledgenewmathcommand{\CQneg}{\cmdkl{\mathsf{CQ}^\lnot}} %
   \knowledgenewtextcommand{\CQneg}{\cmdkl{CQ$^\lnot$}} %
\knowledgenewmathcommand{\CQneq}{\cmdkl{\mathsf{CQ}^{\neq}}} %
\knowledgenewmathcommand{\CRPQ}{\cmdkl{\mathsf{CRPQ}}} %
\knowledgenewmathcommand{\FO}{\cmdkl{\mathsf{FO}}} %
\knowledgenewmathcommand{\IQ}{\cmdkl{\mathsf{IQ}}} %
\knowledgenewmathcommand{\RPQ}{\cmdkl{\mathsf{RPQ}}} %
\knowledgenewmathcommand{\sjfACQ}{\cmdkl{\mathsf{sjfACQ}}} %
\knowledgenewmathcommand{\sjfCQ}{\cmdkl{\mathsf{sjfCQ}}} %
\knowledgenewmathcommand{\UCQ}{\cmdkl{\mathsf{UCQ}}} %
\knowledgenewmathcommand{\UCQneg}{\cmdkl{\mathsf{UCQ}^\lnot}} %
   \knowledgenewtextcommand{\UCQneg}{\cmdkl{UCQ$^\lnot$}} %
\knowledgenewmathcommand{\UCQneq}{\cmdkl{\mathsf{UCQ}^\neq}} %
   \knowledgenewtextcommand{\UCQneq}{\cmdkl{UCQ$^\lneq$}} %
\knowledgenewmathcommand{\UCRPQ}{\cmdkl{\mathsf{UCRPQ}}} %
\knowledgenewmathcommand{\UsjfACQ}{\cmdkl{\mathsf{UsjfACQ}}} %

\knowledgenewrobustcmd{\lincq}{\cmdkl{\mathsf{LinCQ}}}
\knowledgenewrobustcmd{\sjflincq}{\cmdkl{\mathsf{sjf\text{-}LinCQ}}}
\knowledgenewrobustcmd{\sjfstarcq}{\cmdkl{\mathsf{sjf\text{-}StarCQ}}}

\knowledgenewmathcommandPIE{\onemsq}{%
   \cmdkl{q#2}#1#3}%

\knowledgenewmathcommand{\qphi}{\cmdkl{q_\Phi}} %
\knowledgenewmathcommand{\Dphi}{\cmdkl{\D_\Phi}} %

\knowledgenewrobustcmd{\Equivs}[1][q]{\cmdkl{\E}_{#1}}
\knowledgenewrobustcmd{\niceEquivs}[1][q]{\cmdkl{\widetilde\E}_{#1}}
\knowledgenewrobustcmd{\qE}[1][E]{\cmdkl{q_{#1}}}
\knowledgenewrobustcmd{\qEneq}[1][E]{\cmdkl{q_{#1}^{\neq}}}

\knowledgenewrobustcmd{\classSJk}[1][k]{\cmdkl{\mathcal{S\!J}_{\!#1}}}

\knowledgenewrobustcmd{\starRep}[2]{#1\mathbin{\cmdkl{\raisebox{0.2ex}{$\ast$}}}#2}%
\knowledgenewrobustcmd{\homstarRep}[2]{#1^{\cmdkl{\raisebox{0.1ex}{$\ast$}\!#2}}}%
\knowledgenewrobustcmd{\goodJ}{\cmdkl{\mathbf{J}}}

\knowledgenewmathcommand{\intatoms}{\cmdkl{\textit{int-atoms}}} %
\knowledgenewmathcommand{\frontiervars}{\cmdkl{\textit{shared-vars}}}
\knowledgenewmathcommand{\itw}{\cmdkl{\mathsf{iw}}}
\knowledgenewmathcommand{\tw}{\cmdkl{\mathsf{tw}}}
\knowledgenewmathcommand{\leaves}{\cmdkl{\mathsf{leaves}}}

\newcommand{\myparagraph}[1]{\smallskip\noindent\textbf{#1}~}
\newcommand{\aFact}{f}
\input{kl/knowledges.kl} %

\tikzset{n-core/.pic={
\coordinate (00) at (-2, 2);
\coordinate (01) at (-1, 2);
\coordinate (02) at (-0.49647, 2);
\coordinate (03) at (-1.5, 2.86603);
\coordinate (04) at (-2.5, 2.86603);
\coordinate (05) at (-3, 2);
\coordinate (06) at (-2.5, 1.13397);
\coordinate (07) at (-1.5, 1.13397);
\coordinate (08) at (-1.24824, 3.30209);
\coordinate (09) at (-2.75176, 3.30209);
\coordinate (010) at (-3.50353, 2);
\coordinate (011) at (-2.75176, 0.69791);
\coordinate (012) at (-1.24824, 0.69791);
\coordinate (013) at (-2, 1);
\coordinate (014) at (0, 0);
\coordinate (015) at (-3.75, 3.75);
\coordinate (016) at (-0.25, -0.25);
\coordinate (017) at (-2, 0);

\coordinate (-base) at (00);
\coordinate (-true) at (04);
\coordinate (-false) at (03);
\coordinate (-y) at (05);
\coordinate (-ybar) at (06);
\coordinate (-x) at (01);
\coordinate (-xbar) at (07);
\coordinate (-exit) at (014);
\coordinate (-nw) at (015);
\coordinate (-se) at (016);
\begin{pgfonlayer}{nodelayer}
\node [draw,circle] (-bnode) at (00) {};
\node [draw,circle, minimum height=1.5em] (-xnode) at (01) {};
\node [] (2) at (02) {};
\node [draw,circle, minimum height=1.5em] (-fnode) at (03) {};
\node [draw,circle, minimum height=1.5em] (-tnode) at (04) {};
\node [draw,circle, minimum height=1.5em] (5) at (05) {};
\node [draw,circle, minimum height=1.5em] (6) at (06) {};
\node [draw,circle, minimum height=1.5em] (7) at (07) {};
\node [] (10) at (010) {};
\node [] (11) at (011) {};
\node [] (12) at (012) {};
\node [] (13) at (013) {...};
\node [] (17) at (017) {$G_\phi$};
\end{pgfonlayer}
\begin{pgfonlayer}{edgelayer}
\draw [line width=.2em] (-bnode) to (-xnode);
\draw [line width=.2em] (-xnode) to (7);
\draw [line width=.2em] (7) to (-bnode);
\draw [line width=.2em] (-bnode) to (6);
\draw [line width=.2em] (6) to (5);
\draw [line width=.2em] (5) to (-bnode);
\draw [line width=.2em] (-bnode) to (-tnode);
\draw [line width=.2em] (-tnode) to (-fnode);
\draw [line width=.2em] (-fnode) to (-bnode);
\draw [line width=.2em] (-xnode) to node[sloped, rotate=90, pos=1.5] {...} (2.center);
\draw [line width=.2em] (7) to node[sloped, rotate=90, pos=1.5] {...} (12.center);
\draw [line width=.2em] (6) to node[sloped, rotate=90, pos=1.5] {...} (11.center);
\draw [line width=.2em] (5) to node[sloped, rotate=90, pos=1.5] {...} (10.center);
\draw [dashed, rounded corners] (015) rectangle (016);
\end{pgfonlayer}
}}

\begin{document}
\maketitle

\begin{abstract}

We consider the following fundamental problem: given a database $D$, 
Boolean conjunctive query (CQ) $q$, and fact $\aFact \in D$, decide whether 
$\aFact$ is \emph{relevant} to $q$ w.r.t. $D$, i.e., does $\aFact$ belong to a
minimal subset $S \subseteq D$ such that $S \models q$. 
Despite being of central importance to query answer explanation, 
the combined complexity of deciding query relevance 
has not been studied in detail, leaving open what makes this problem hard,
and which restrictions can yield lower complexity. Relevance has already been shown to 
be harder than query evaluation: namely, $\sigmaptwo$-complete for CQs,
even over a binary signature. We further observe that $\NP$-hardness applies 
already to %
(acyclic) chain CQs.
Our work identifies self-joins (multiple atoms with the same relation) 
as the culprit. Indeed, we prove that %
if we forbid or bound the occurrence of self-joins,  
then relevance has the same complexity as query evaluation, namely, $\NP$ (without structural restrictions) and 
$\logCFL$ (for bounded hypertreewidth classes). 
In the ontology setting, we establish an analogous result for  
ontology-mediated queries consisting of a CQ and $\dlliter$ ontology, 
namely that %
relevance is no harder than query answering
provided that we bound the interaction width (which generalizes both self-join width and a recently introduced `interaction-free' condition).
Our results thus pinpoint what makes relevance harder than query evaluation
and identify natural classes of queries which admit efficient relevance computation. 

\end{abstract}

\noindent
\raisebox{-.4ex}{\HandRight}\ \ This pdf contains internal links: clicking on a "notion@@notice" leads to its \AP ""definition@@notice"".%

\ifarxiv
  This is the long version of the KR'26 paper \cite{ourKR26}.
\else
  \AP A ""long version"" of this paper can be found at \url{https://arxiv.org/abs/XXXXXXXXXXX}.%
\fi

\section{Introduction}
There has been considerable interest in both the database and knowledge representation 
communities on devising methods for explaining why a given query answer holds, resulting in a diversity of approaches. 
In the database area, proposals include qualitative notions based upon causality \cite{DBLP:journals/pvldb/MeliouGMS11} and provenance \cite{GreenT17},
as well as quantitative notions of explanation based upon (causal or Shapley-value-based) responsibility measures, which involve assigning scores to facts based upon their contribution to making the query hold \cite{livshitsShapleyValueTuples2021,ourpods24,ourpods25paper}. 
In the setting of ontology-mediated query answering  (OMQA) \cite{poggiLinkingDataOntologies2008,bienvenuOntologyMediatedQueryAnswering2015,xiaoOntologyBasedDataAccess2018}, where query answers are computed by reasoning over the information contained in the data and the ontology, there have also been a diversity of explanation notions advanced in the literature. Some of these rely upon proofs which show step-by-step how a query answer is obtained \cite{DBLP:conf/otm/BorgidaCR08,DBLP:conf/ruleml/AlrabbaaBKK22}, others compute minimal subsets of facts that make the query hold \cite{DBLP:journals/jair/BienvenuBG19,DBLP:conf/ijcai/CeylanLMV19,DBLP:conf/ecai/CeylanLMV20,DBLP:journals/ai/CeylanLMV25}, while some recent work has begun to explore quantitative responsibility measures \cite{ourKR24,ourKR25}. 

In the present paper, our focus will be on explanations given as \emph{minimal supports} (\aka\ causes, MinEXes), 
i.e.\ subset-minimal sets of facts that suffice to obtain the query answer. %
More precisely, we shall be interested in the fundamental problem of \emph{query relevance}, 
which is to decide whether a given fact from the dataset is \emph{relevant}
to the query, in the sense that it belongs to at least one minimal support. 
Indeed, identifying the relevant facts can be considered a central task in query answer explanation, 
which is useful both for summarizing the minimal supports
and for aiding users in debugging unexpected query results. 
Relevance is also closely related to recently explored Shapley-based responsibility measures. Indeed, 
for classes of monotone queries such as conjunctive queries (with or without an ontology), 
a fact will be assigned a positive score iff it is relevant (cf.\ discussions in \cite{ourpods25paper}). 
Finally, it is worth noting that notions of relevance have also been considered 
for other kinds of explanations, including abductive explanations \cite{DBLP:journals/jacm/EiterG95}
justifications for ontology axiom entailment \cite{DBLP:journals/ai/PenalozaS17,DBLP:conf/esws/ChenMPY22}, 
explanations for query non-answers \cite{DBLP:journals/jair/CalvaneseOSS13}, and explanations for 
query (non)answers under various inconsistency-tolerant semantics \cite{DBLP:journals/jair/BienvenuBG19}. 

We briefly summarize here what is known about the complexity of the query relevance task, 
which has been explored for ontology-mediated queries (OMQs) formulated in 
description logic and existential rule ontology languages \cite{DBLP:conf/ijcai/CeylanLMV19,DBLP:conf/ecai/CeylanLMV20,DBLP:journals/ai/CeylanLMV25}. 
For data complexity, in which the size of the query and ontology are treated as fixed, 
the task is known to be tractable for (unions) of conjunctive queries (CQs), as well as for ontology-mediated queries that can be rewritten as UCQs (i.e.\ for so-called FO-rewritable ontology languages).
Indeed, this is an easy consequence of the fact that the size of minimal supports is bounded by a constant, 
making it possible to solve the relevance problem by reduction to first-order query evaluation. 
By contrast, for (ontology-mediated) query languages for which the size of minimal supports is unbounded, 
the relevance problem becomes $\NP$-hard in data complexity. For combined complexity, the relevance task has been proven 
$\sigmaptwo$-complete for conjunctive queries, even in the absence of an ontology and 
with only binary relations
\cite{DBLP:phd/ethos/Vaicenavicius20}. Additional combined complexity results have been established for various kinds of OMQs, 
with complexities ranging from $\sigmaptwo$ to $2\mathsf{EXPTIME}$, 
depending on the expressivity of the ontology language. 

While the preceding results show that relevance is generally harder than query evaluation,
to the best of our knowledge, there has not been any exploration of what precisely makes relevance harder than query evaluation,
nor the related question of which restrictions %
can lower the complexity. 
The purpose of our paper will thus be to pinpoint the sources of difficulty and identify classes of queries 
for which relevance is no harder than query evaluation, in particular, leading to the identification of tractable classes.  

\myparagraph{Contributions}
Our first contribution is to show that for conjunctive queries (without an ontology), 
it is the presence of self-joins (i.e.\ multiple query atoms using the same relation)
that make relevance harder than query evaluation. We start by observing that 
for `well-behaved' classes of conjunctive queries, like acyclic CQs and CQs of bounded hypertreewidth, 
for which query evaluation is tractable, the relevance problem is $\NP$-complete. 
In fact, relevance is $\NP$-hard even for acyclic chain CQs of the form 
$R(x_1,x_2) \land R(x_2,x_3) \land \dotsb \land R(x_{n-1},x_n)$ on a single relation (\Cref{th:cqbtw}),
which demonstrates that tractability cannot be regained by simplifying the query structure. 
However, if we restrict to CQs with `few' self-joins (via a notion of self-join width, recently introduced by \citeauthor{ourpods25paper} (\citeyear{ourpods25paper})), 
then the complexity of "relevance" matches that of "query evaluation",
namely $\NP$ for general CQs or $\logCFL$ if queries have bounded "hypertree width"
(\Cref{thm:boundedSJ-upper}). These results are summarized in Table \ref{tbl:db}. 

\newcommand{\shadedcell}{\cellcolor[HTML]{D0D0D0}}
\begin{table}[t]
\centering
\begin{tabular}{r|c|cc|}
\cline{2-4}
& \multirow{2}{*}{"Query evaluation"} & \multicolumn{2}{c|}{"Relevance"}\\ \cline{3-4} 
& & \multicolumn{1}{c|}{"sjw@self-join width"${<}\infty$} & "sjw@self-join width" $\infty$ \\ \hline
\multicolumn{1}{|l|}{"hw@hypertree width"${<}\infty$} & $\logCFL$-c & \multicolumn{1}{c|}{\shadedcell$\logCFL$-c}& \shadedcell$\NP$-c$^\dagger$\\ \hline
\multicolumn{1}{|l|}{"hw@hypertree width" $\infty$} & $\NP$-c & \multicolumn{1}{c|}{\shadedcell$\NP$-c} & $\Sigtwo$-c$^\ddagger$\\ \hline
\end{tabular}
\caption{Complexity results for classes of "conjunctive queries" with (un)bounded "hypertree width" ("hw@hypertree width") and "self-join width" ("sjw@self-join width"). Shaded cells indicate new results.
$\dagger$ $\NP$-hardness applies to "chain CQs" over a single binary relation. $\ddagger$ General $\Sigtwo$-hardness is known, we show it holds even for "CQs" with a single binary relation.}
\label{tbl:db}
\end{table}

\begin{table}[t]
\setlength{\tabcolsep}{4.1pt}
\begin{tabular}{l|cc|cc|}
\cline{2-5}
\multicolumn{1}{r|}{} & \multicolumn{2}{c|}{"OMQ evaluation"} & \multicolumn{2}{c|}{"Relevance"} \\ \cline{2-5} 
\multicolumn{1}{r|}{} & \multicolumn{1}{c|}{"iw@interaction width"${<}\infty$} & \multicolumn{1}{c|}{"iw@interaction width" $\infty$} & \multicolumn{1}{c|}{"iw@interaction width"${<}\infty$} & "iw@interaction width" $\infty$ \\ \hline
\multicolumn{1}{|l|}{"tw@treewidth"=1,  $\ell{<}\infty$} & \multicolumn{1}{c|}{\shadedcell$\NL$-c} & $\logCFL$-c & \multicolumn{1}{c|}{\shadedcell$\NL$-c} & \multicolumn{1}{c|}{\shadedcell$\NP$-c} \\ \hline
\multicolumn{1}{|l|}{"tw@treewidth" ${<}\infty$} & \multicolumn{1}{c|}{\shadedcell$\logCFL$-c} & $\NP$-c & \multicolumn{1}{c|}{\shadedcell$\logCFL$-c} & \multicolumn{1}{c|}{$\NP$-h}  \\ \hline
\multicolumn{1}{|l|}{"tw@treewidth" $\infty$} & \multicolumn{2}{c|}{$\NP$-c} & \multicolumn{1}{c|}{\shadedcell $\NP$-c} & $\Sigtwo$-c \\ \hline
\end{tabular}
\caption{Complexity results for classes of "ontology-mediated queries" in $(\dlliter, \CQ)$, with (un)bounded "treewidth" ("tw@treewidth"), "interaction width" ("iw@interaction width"),
or number of     ($\ell$) in "acyclic CQs" ("iw@interaction width"$=1$). Shaded cells indicate new results.
}
\label{tbl:omq}
\end{table}

We then move to considering the relevance problem for OMQs
consisting of a CQ and a description logic (DL) ontology. 
After observing that both conjunction ($\sqcap$) 
and qualified existential restrictions ($\exists R. C$) can make relevance hard even for 
atomic queries, we focus on ontologies formulated in $\dlliter$, a prominent 
lightweight DL often used for OMQA \cite{calvaneseetal:dllite}. We introduce 
a novel notion of interaction width, which can be seen as generalizing both self-join width 
and a recently introduced  `interaction-free' property \cite{ourKR25}. Our second main 
contribution is to show that for OMQs consisting of a CQ and $\dlliter$ ontology, 
bounding the interaction width makes relevance no harder than query evaluation.
The obtained results, summarized in \Cref{tbl:omq}, show that 
simultaneously bounding the interaction width and treewidth enables efficient ($\logCFL$)
relevance computation, and for acyclic bounded-leaf CQs, relevance drops to $\NL$.

Finally, as a third and last contribution, we investigate a more abstract notion of relevance based on graph homomorphisms, which underlies the "CQ" "relevance" problem. Concretely, we study the question of, given two graphs $G$ and $G'$, whether a given edge belongs to a minimal homomorphic image of $G$ in $G'$. We consider both directed and undirected variants of this problem, finding that they are both $\sigmaptwo$-complete, thereby providing natural and simple $\sigmaptwo$-complete problems of independent interest.
As a corollary, we obtain that $\sigmaptwo$-hardness of "CQ" "relevance" already holds over signatures consisting of a single binary relation.

\myparagraph{Organization} 
After preliminaries in \Cref{sec:prelims}, we introduce the relevance problem for "CQs" and "OMQs" in \Cref{sec:relpb} together with some basic observations and reductions.
We study the relevance problem for classes of "CQs" in \Cref{sec:relevance.cq}  and the relevance problem for "OMQs" in \Cref{sec:relevance.omq}. 
In \Cref{sec:graph}, we introduce and study the "Minimal Homomorphism Problem" for graphs.
We conclude in \Cref{sec:discussion} with a discussion of future work. 
\IfApx{}{Omitted proofs are in the "long version".}

\section{Preliminaries}\label{sec:prelims}
We introduce the key notions, terminology, and notation that will be used in this paper.
For detailed introductions to databases and description logics, we refer readers to \cite{abiteboul1995foundations,DBLP:books/daglib/0041477}. 

\myparagraph{Complexity}
We shall refer by \AP$\intro*\logSpace$ and $\intro*\NL$ the complexity classes of deterministic and nondeterministic logarithmic space, respectively. \AP$\intro*\NP$ is the 
class of decision problems solvable by a nondeterministic polynomial-time Turing machine, and 
\AP$\intro*\sigmaptwo$ the class of decision problems solvable by a nondeterministic polynomial-time Turing machine with access to an $\NP$ oracle.
The class \AP$\intro*\logCFL$ consists of all decision problems reducible to a context-free language, considered highly parallelizable, where
\[\logSpace \subseteq \NL \subseteq \logCFL \subseteq \mathsf{AC^1} \subseteq \mathsf{NC^2} \subseteq \Ptime \subseteq \NP \subseteq \sigmaptwo.\]

\myparagraph{Databases}
\AP
A (relational) ""database"" $D$  is a finite set of relational ""facts"" $R(\vect{a})$, where $R$ is a ""relation name"" of arity $k\geq 1$  and $\intro*\vect{a}$ is a $k$-ary vector of ""constants"". 
The ""signature"" of $D$ is the set of "relation names" that occur in the facts of $D$ together with their arity, and we speak of a ""binary signature"" if only unary and binary predicates are used.

\myparagraph{Conjunctive Queries} 
\AP ""Conjunctive queries"" ($\intro*{\CQ}$) are first-order formulas of the form $q(\vect x) = \exists \vect y ~ \alpha_1 \land \dotsb \land \alpha_n$ where the $\alpha_i$ are relational ""atoms"" that can contain "constants" and/or ""variables"", where $\vect{x}$ is the vector of ""free variables"" of the formula. By $\intro*\atoms(q)$ we denote $\set{\alpha_i}_i$.
We shall only be interested in ""Boolean"" "CQs", that is, "CQs" with no "free variables", and we will henceforth assume that all "CQs" are "Boolean". 

\AP
For any syntactic object $O$ (e.g., database, query), we will use $\intro*\vars(O)$ and $\intro*\const(O)$ 
to denote the sets of "variables" and "constants" contained in $O$, and let $\intro*\mterms(O)\defeq \vars(O) \cup \const(O)$ denote its set of ""terms"".

\AP
A ""homomorphism"" $q \intro*\homto D$ from a "CQ" $q$ to a "database" $D$ is a function $h: \mterms(q) %
 \to \const(D)$ such that $h(c)=c$ for every $c \in \const(q)$ and $R(h(t_1),\dotsc, h(t_n)) \in D$ for every "atom" $R(t_1, \dotsc, t_n)$ in $q$.
A \reintro{homomorphism} $h:q \homto q'$ between "CQs" is defined similarly. 
A "CQ" is a \AP""core"" if all the "homomorphisms" $h:q \homto q$ are injective. 
It is known that for any ("Boolean") "CQ" $q$,  $D \models q$  if{f} there is a homomorphism $h: q \homto D$, and that a "CQ" $q$ entails another "CQ" $q'$ if{f} $q' \homto q$. \AP We use $h(q)$ for the result of replacing each $x \in \vars(q)$ with $h(x)$ and call $h(q)$
a ""homomorphic image@@cq"" of $q$.

\myparagraph{Inequalities} A "CQ" $q$ with inequalities, or ""CQneq"", is a "CQ" extended with ""inequality atoms"" of the form $t \neq t'$ where $t,t'$ are "terms" (i.e., variables or constants). For any "database" or "CQ" $X$, the notion of "homomorphism" $h:q \homto X$ is restricted in the expected way, by further requiring that $h(t) \neq h(t')$ whenever $q$ contains the "inequality atom" $t \neq t'$.

\myparagraph{Evaluation}
\AP
For a class $\class$ of "CQs" (or of "CQneq"s), the ""query evaluation problem"" for $\class$ is the problem of, given a "database" $D$ and a "CQ" $q \in \class$, deciding whether $D \models q$. The problem is well known to be $\NP$-complete for the class of all "CQs" \cite{ChandraM77}, but it becomes tractable under structural restrictions. Except when explicitly stated otherwise, we always mean ""combined complexity"",
where both the query and database are part of the input (as compared to ""data complexity"", where only $D$ is considered as input). 

\myparagraph{Tree-like Queries}
\AP
An ""acyclic CQ"" is one in which there are no cycles in its underlying (hyper)graph. In the case of "binary signatures" this means that the underlying undirected graph, containing the edge $\{t,t'\}$ for each query atom $R(t,t')$, has no cycles. For "signatures" of bounded arity, the acyclicity is generalized to a measure of tree-likeness via the notion of ""treewidth"". We refer the reader to \cite[§3.1]{GottlobGLS16} for a definition. 
For unbounded signatures, several generalizations have been proposed, one of the most prominent being ""hypertree width""; in particular, "acyclic CQs" are precisely the queries of "hypertree width" 1.
We refer readers to \cite[Definition~3.1 \& following paragraph]{GottlobGLS16} for a definition. The "hypertree width" of a "CQneq" is defined to be that of the "CQ" obtained by considering $\neq$ as any ordinary binary relation.

\begin{theorem}[\cite{GottlobLS02}]\label{thm:boundedtw-logcfl}
  For every fixed $k>0$, the "query evaluation problem" for "CQneq"s of "hypertree width" at most $k$ is $\logCFL$-complete under log-space reductions. 
  The result holds also for "self-join free" "CQs".
  Further, checking if a "CQneq" has "hypertree width" at most $k$ is in $\logCFL$.\footnote{While the literature does not cover the class of "CQneq" strictly speaking, it can easily be seen that $\neq$-atoms can be handled in $\logCFL$ for the "query evaluation problem".}
\end{theorem}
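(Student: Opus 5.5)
This result is cited from \cite{GottlobLS02}, so the plan is to recover its three parts from the known results on "hypertree width", adding only what is needed for the "inequality atoms" and the "self-join free" restriction.

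\emph{Membership.} Fix $k$. I would first compute a hypertree decomposition of width at most $k$ for the "CQneq" $q$, treating each $t\neq t'$ as an ordinary binary atom. For fixed $k$ this is possible with a $\logCFL$ transducer, by the Gottlob--Leone--Scarcello result, which also yields the last claim of the statement. Each decomposition node is labelled by at most $k$ atoms. For such a node I would materialise the join of its $\le k$ relations, projected to the node's bag; this has size $|D|^{O(k)}$ and is computable in logspace. A $\neq$-atom $t\neq t'$ has no stored relation, so it is never instantiated as a relation. Instead, whenever both of its terms lie in a bag, it acts as a filter on that bag's tuples. Its two terms occur together in some atom-edge of the hypergraph, so the decomposition has a bag containing both, and every such inequality is therefore checked at some node.

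This produces an equivalent acyclic instance, namely a join tree with local relations. Checking nonemptiness of the semijoin reduction of an acyclic instance is in $\logCFL$: it is the classical $\mathsf{SAC}^1$ evaluation by an alternating logspace machine with polynomial tree size. Since $\logCFL$ is closed under logspace reductions and compositions of $\logCFL$ functions, membership follows.

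\emph{Hardness.} I would reduce a known $\logCFL$-complete problem, for instance evaluation of acyclic CQs (Gottlob et al.), or directly a $\mathsf{SAC}^1$ circuit evaluation encoded as an acyclic query. To get the "self-join free" claim, I would rename every atom apart with a fresh relation name $R_i$ and copy the corresponding relation of $D$ into $R_i$. This is a logspace operation, preserves acyclicity, and preserves the answer. Hardness therefore holds already for self-join free acyclic "CQs", which have width $1\le k$.

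\emph{Main obstacle.} The delicate point is how $\neq$ interacts with the decomposition. Treating $\neq$ as a relation would mean its complement-of-equality relation over $\const(D)$ has quadratic size, which a logspace transducer could in principle produce. Even so, the filtering approach above is cleaner, and I would verify that it preserves correctness of the join-tree evaluation.
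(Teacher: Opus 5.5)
\textbf{Verdict.} The paper gives no proof here beyond the citation and the footnote saying $\neq$-atoms are easy to handle, so your proposal is a reconstruction. Your hardness argument is fine: renaming atoms apart preserves the hypergraph and yields self-join freeness. Your width-checking argument is also fine. The gap is in your preferred treatment of $\neq$-atoms as filters.

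\textbf{Where the filtering step fails.} The decomposition is computed for the hypergraph in which each $t\neq t'$ is an ordinary hyperedge. Nothing therefore stops a width-$k$ decomposition from placing a $\neq$-atom in a label $\lambda(p)$ to cover variables of $\chi(p)$ (recall $\chi(p)\subseteq\mathrm{vars}(\lambda(p))$). A variable may even occur only in $\neq$-atoms, as in $R(x)\wedge x\neq y$. If you exclude such atoms from the join, some variables of $\chi(p)$ are unbound, so ``the join projected to the bag'' is not defined. If you simply omit those variables from the local relation at $p$, the resulting join tree can violate the connectedness condition for them. Semijoin evaluation then no longer enforces a consistent value across the subtree, so it becomes unsound (it can accept false instances). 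The repair is to let the uncovered variables range over $\mathrm{const}(D)$, which is exactly materialising the $\neq$ relation: the option you dismissed. A similar, standard caveat applies to relational atoms that never occur in any $\lambda$-label. For those, use a complete decomposition, or semijoin each bag with every atom whose variables it contains.

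\textbf{The simpler route.} The dismissed option is in fact the whole argument, and it makes the filtering machinery unnecessary. Add a fresh binary relation $N$ interpreted as $\{(a,b)\in\mathrm{const}(D)^2 : a\neq b\}$; this has quadratically many tuples and is logspace-computable. Replace each $t\neq t'$ by $N(t,t')$. Homomorphisms of the query with inequalities into $D$ are then exactly homomorphisms of the resulting plain conjunctive query into the extended database. The paper defines the hypertree width of such a query by treating $\neq$ as an ordinary binary relation, so the hypergraph, and hence the width, is unchanged. Membership, hardness and the width-checking claim then follow directly from the cited results for conjunctive queries, which is presumably what the footnote intends.
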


\myparagraph{Description Logics}
\AP
A ""description logic"" ("DL") ""knowledge base"" ("KB") $\K =(\A,\T)$ consists of an "ABox" $\A$ and a "TBox" $\T$, 
constructed from mutually disjoint sets $\intro*\cnames$ of ""concept names"" (unary "relation names"), $\intro*\rnames$ of ""role names"" (binary relation names), 
and $\intro*\inames$ of ""individual names"" (here called "constants"). 
\AP An ""inverse role"" has the form $R^-$, with $R \in \rnames$,
and we use $\intro*\NRpm =  \rnames \cup \{R^- \mid R \in \rnames\}$ for the set of ""roles"". %
\AP
The ""ABox"" %
is a finite set of facts (i.e., a "database") over the "binary signature" given by $\cnames$ and $\rnames$.
\AP
The ""TBox"" (ontology) 
is a finite set of ""axioms"". Its
 form depends on the "DL" in question. 

\AP
Our results primarily concern lightweight DLs of the ""DL-Lite"" family \cite{calvaneseetal:dllite}.
We shall in particular consider the $\intro*{\dlliter}$ dialect, whose TBox
axioms take the form of \AP""concept inclusions"" $B \sqsubseteq C$
and \emph{role inclusions} $P \sqsubseteq S$,
built according to the following grammar
\begin{align*}
B := A \mid \exists P \quad C:= B \mid \neg B\quad
S:= P \mid \neg P
\end{align*}
where $A\in \cnames$ and $P \in \NRpm$. 
\AP
The logic $\intro*{\dllitec}$ is obtained from $\dlliter$ by disallowing role inclusions. 

\AP
Another prominent lightweight DL that we shall briefly mention is $\intro*{\EL}$,
where the
TBox consists of \AP""general concept inclusions"" ("GCIs") $D_1 \sqsubseteq D_2$
between concepts %
of the form:%
$$D:= \top \mid A \mid D \sqcap D \mid \exists R.D \qquad A\in \cnames, R\in \rnames  $$

\AP
The semantics of DL KBs is defined using 
""interpretations"" $\I=(\Delta^{\I},\cdot^{\I})$, 
 where the ""domain"" $\Delta^{\I}$ is a non-empty set 
 and %
 ${.}^{\I}$ maps each $a \in \inames$ to $a^{\I} \in \Delta^{\I}$, 
 each $A \in \cnames$ to $A^{\I} \subseteq \Delta^{\I}$, 
 and each $R \in \rnames$ to $R^{\I} \subseteq \Delta^{\I} \times \Delta^{\I}$.
 The function  $\cdot^{\I}$ is extended to complex concepts and roles: %
 $\top^{\I}=\Delta^{\I}$,
 $(\exists P)^{\I}= \{d \mid \exists e \in \Delta^{\I}, (d, e) \in P^{\I}\}$,
$(R^-)^\I=\{(e,d) \mid (d,e)\in R^\I \}$, $(C \sqcap D)^\I = C^\I \cap D^\I$. 
An interpretation $\I$ \emph{satisfies a fact} $A(a)$ (resp.\ $P(b, c)$) if $b^\I \in A^{\I}$ (resp.\ $(b^\I, c^\I) \in P^{\I}$). 
$\I$ \emph{satisfies a (concept or role) inclusion} $G \sqsubseteq H$ if $G^{\I} \subseteq H^{\I}$. 
\AP
We call  $\I$ a ""model of a KB"" $\K$, denoted $\I\models\K$, 
if $\I$ satisfies all axioms in $\T$ %
and all facts in $\A$. %
\AP
A KB $\K$ is ""consistent"" if it has a model. A KB $\K$ entails an inclusion or fact $\gamma$, written $\K \models \gamma$,
if every model $\I$ of $\K$ satisfies $\gamma$. Likewise, we write $\K \models \exists P (a)$ to mean $a^\I \in (\exists P)^\I$ in every model $\I$ of $\K$. 

\myparagraph{Ontology-Mediated Queries}
We say that a ("Boolean") "CQ" $q$ is \AP""entailed from a KB""~$\K=(\A, \T)$,
written $\K \models q$, if $\I \models q$ for every model $\I$ of $\K$. %
We may alternatively group $\T$ and $q$ together as a 
\AP ""ontology-mediated query"" (\reintro{OMQ}) $Q=(\T, q)$,
in which case we may write $\A \models Q$ to mean $(\A,\T) \models q$. 
The notation $(\mathcal{L}, \mathcal{Q})$ is used for the class of all OMQs $(\T, q)$ 
consisting of a TBox formulated in the DL $\L$
and a query $q \in \mathcal{Q}$. Aside from conjunctive queries ($\CQ$), we also consider 
classes of OMQs
whose component CQs are \AP""atomic queries"" ($\intro*{\AQ}$), i.e.\ CQs with a single atom. 

The \reintro{query evaluation problem for a class of OMQs} is defined analogously as before: 
given an "ABox" $\A$ and an OMQ $Q$ from the considered class, decide whether $\A \models Q$. 
The following theorem recalls some complexity results for query evaluation in subclasses of $(\dlliter,\CQ)$ (see \cite{DBLP:journals/jair/ArtaleCKZ09,DBLP:journals/jacm/BienvenuKKPZ18} and references therein). 

\begin{theorem}\label{omq-compl-eval}
Query evaluation for $(\dlliter,\CQ)$ is NP-complete. For fixed $m\geq 1$, $\ell>1$, 
the problem is $\logCFL$-complete for the classes obtained by restricting to 
$(\T,q)$ s.t.\  %
\begin{itemize}
\item %
$q$ is acyclic and has at most $\ell$ leaves, or %
\item %
$\T$ is a $\dllitec$ TBox and 
$q$ has treewidth at most $m$ %
\end{itemize}
Query evaluation is  $\NL$-complete for $(\dlliter,\AQ)$. 
\end{theorem}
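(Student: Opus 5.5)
The statement to prove is \Cref{omq-compl-eval}, which collects known complexity bounds for query evaluation in $(\dlliter,\CQ)$. My plan is to assemble it from the cited literature and the standard rewriting / canonical-model technique, treating each item separately.

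\emph{General NP bound.} For membership in $\NP$, I would use the canonical (chase) model of $(\A,\T)$. For $\dlliter$, the chase is tree-shaped, with anonymous elements hanging off constants. A Boolean CQ $q$ holds iff it has a homomorphism into this canonical model, and any such homomorphism sends each variable to an element at depth at most $|q|$ from the ABox. A certificate is therefore a map assigning each variable either a constant or a pair (constant, word of roles of length at most $|q|$). The role words must be realisable: each step must be generated by the TBox, which is checkable in polynomial time via role-inclusion closure. Verifying the certificate is then polynomial. Hardness is inherited from plain CQ evaluation with $\T=\emptyset$ \cite{ChandraM77}.

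\emph{$\logCFL$ bounds.} For $\logCFL$ membership in both restricted classes, I would follow \cite{DBLP:journals/jacm/BienvenuKKPZ18}. Their approach is to construct, in log space, a polynomial-size nonrecursive datalog rewriting, or equivalently a tree-shaped query over a modified database, whose evaluation reduces to a bounded-treewidth CQ evaluation. \Cref{thm:boundedtw-logcfl} (Gottlob et al.) then places that evaluation in $\logCFL$. Lower bounds again come from the OMQ-free case, since acyclic CQ evaluation is already $\logCFL$-hard.

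\emph{Main obstacle.} The delicate part is the $\dlliter$ case with acyclic bounded-leaf queries. Role inclusions allow subqueries to fold into anonymous subtrees, and the leaf bound is what keeps the number of tree-witness combinations polynomial. I would defer entirely to \cite{DBLP:journals/jacm/BienvenuKKPZ18} for this step rather than reprove it.

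\emph{$\AQ$ case.} Entailment of $A(a)$ or $P(a,b)$ amounts to reachability in the TBox's inclusion graph, started from the facts of $\A$. Reachability is in $\NL$, and directed reachability reduces to it, which gives $\NL$-hardness.
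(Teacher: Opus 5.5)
The paper gives no proof of this theorem: it is recalled from \cite{DBLP:journals/jair/ArtaleCKZ09,DBLP:journals/jacm/BienvenuKKPZ18} and the references therein, so deferring to the literature is the intended route. Several of the justifications you add, however, would fail. The clearest is the $\logCFL$ lower bound for the first class, which cannot come from the ontology-free case. Over plain databases, an acyclic CQ with at most $\ell$ leaves can be evaluated in $\NL$: traverse the tree guessing images, and keep in memory only the current image and those of the $O(\ell)$ branching variables whose subtrees are still pending. Deriving $\logCFL$-hardness from that case would therefore give $\NL=\logCFL$; the hardness behind \Cref{thm:boundedtw-logcfl} relies on queries with unboundedly many leaves. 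For this class the hardness is genuinely caused by the TBox. With $\dlliter$ ontologies of unbounded depth, a query with few leaves can walk down and back up the anonymous tree $aP_1\cdots P_n$, which intuitively serves as a pushdown store, and this is what the cited hardness results exploit. For the $\dllitec$ class, inheriting hardness from tree-shaped CQs with $\T=\emptyset$ is fine.

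Your $\logCFL$ upper-bound sketch never uses the hypothesis that $\T$ is a $\dllitec$ TBox in the bounded-treewidth case. As written it would apply verbatim to $\dlliter$ with bounded-treewidth CQs, where evaluation is $\NP$-hard (the $\NP$-c entry of Table~\ref{tbl:omq}). The bounded-treewidth $\logCFL$ result of \cite{DBLP:journals/jacm/BienvenuKKPZ18} assumes ontologies of bounded depth. For arbitrary depth, what makes bounded treewidth tractable is the absence of role inclusions: an atom matched in the anonymous part can then only use edges generated by its own role, which removes the freedom that underlies the $\NP$-hardness. That is a separate result from the literature referenced there, not an instance of a generic reduction. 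Relatedly, a nonrecursive datalog rewriting is not a CQ, and its evaluation does not in general reduce to \Cref{thm:boundedtw-logcfl}; the $\logCFL$ bounds depend on the specific shape of the constructed rewritings.

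Finally, in your $\NP$ argument the claim that every match stays within depth $|q|$ is false. Take $\T=\{A\sqsubseteq\exists P_1\}\cup\{\exists P_i^-\sqsubseteq\exists P_{i+1}\mid 1\le i<n\}\cup\{\exists P_n^-\sqsubseteq B\}$, $\A=\{A(a)\}$ and $q=\exists x\,B(x)$: the only match is at depth $n$. The fix uses the fact that the subtree below an anonymous element $wP$ depends only on $P$. A query component mapped entirely into the anonymous part can thus be relocated below an element of depth at most $2|\T|$, giving the bound $|q|+2|\T|$, exactly as in the proof of \Cref{queryeval-biw}. You also need a preliminary consistency check, since an inconsistent KB entails every query.
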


\myparagraph{Canonical Model} Every "consistent"  $\dlliter$ "KB" $\K=(\A, \T)$ admits a so-called \AP ""canonical model"" $\intro*\IsubA_{\A,\T}$,
which maps homomorphically into every model of $\K$. As we shall use $\IsubA_{\A,\T}$ in some of our results, 
we recall here its definition. %
For the "domain" $\Delta^{\IsubA_{\A,\T}}$ of $\IsubA_{\A, \T}$, we use the set of all words $a P_{1} \ldots P_{n}$ ($n \geq 0$) such that $a \in \const(\A)$, $P_{i} \in \NRpm$, and: 
\begin{itemize}
\item if $n \geq 1$, then $(\A, \T) \models \exists P_1 (a)$ %
\item for $1 \leq i < n$, $\mathcal{T} \models \exists P_i^-
  \sqsubseteq \exists P_{i+1}$ and \mbox{$P_i^- \ne P_{i+1}$}. %
\end{itemize}
Elements in $\Delta^{\IsubA_{\A,\T}} \setminus \const(\A)$ will be 
called \AP""anonymous elements"". 
The interpretation function maps each $a \in  \const(\A)$ to itself and 
interprets concept and role names as follows: 
\begin{align*}
 A^{\IsubA_{\T,\A}} =   &   \,
 \{ a \in \const(\A) \mid (\A,\T) \models A(a) \} \cup  \\ &
 \{ a P_{1} \ldots P_{n}
\mid n \geq 1 \mbox{ and }
\T \models \exists{P_n^-} \sqsubseteq A \}{\color{green} )}  \\
 R^{\IsubA_{\T,\A}} =   &  \,
 \{ (a,b) \mid R(a,b) \in \A \} \, \cup  \\ & 
  \,   \{ (w_{1},w_{2})
\mid  
w_{2} = w_{1}P  \text{ and } \T \models P \sqsubseteq R \} \,\cup   \\
&   \,  \{ (w_{2},w_{1}) 
\mid \
w_{2} = w_{1}P  \text{ and } \T \models P \sqsubseteq R^{-} \}{\color{green} )} 
\end{align*}
Our results %
exploit the following well-known property: 

\begin{theorem}\label{canmod-prop} For every $\dlliter$ KB $(\A,\T)$, 
$(\A,\T) \models q$ iff $\IsubA_{\A,\T} \models q$ iff there is a "homomorphism" $q \homto \IsubA_{\A,\T}$. 
\end{theorem}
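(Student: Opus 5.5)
The plan is the standard universal-model argument, under the assumption (implicit in the definition of $\IsubA_{\A,\T}$) that the KB $(\A,\T)$ is "consistent". The second equivalence holds by definition: a Boolean "CQ" $q$ is satisfied in an interpretation exactly when there is a "homomorphism" from $q$ into it, mapping constants to themselves. So it suffices to show that $(\A,\T)\models q$ iff $\IsubA_{\A,\T}\models q$, and this splits into two lemmas: (i) $\IsubA_{\A,\T}$ is a "model of a KB" $(\A,\T)$, and (ii) for every model $\I$ of $(\A,\T)$ there is a homomorphism $g:\IsubA_{\A,\T}\to\I$ fixing every constant.

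Given these lemmas, both directions are short. If $(\A,\T)\models q$, then by (i) $\IsubA_{\A,\T}$ is a model, so it satisfies $q$. Conversely, suppose $h:q\homto\IsubA_{\A,\T}$ and let $\I$ be any model. Then $g\circ h$ is a homomorphism from $q$ into $\I$ that still fixes constants, because $g$ fixes constants. Hence $\I\models q$, and since $\I$ was arbitrary, $(\A,\T)\models q$.

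For (ii), I define $g$ by induction on word length.
\begin{itemize}
\item Set $g(a)=a^\I$ for each constant $a$.
\item Given $g(w)$ with $w=aP_1\cdots P_n$, and $wP\in\Delta^{\IsubA_{\A,\T}}$, we have $g(w)\in(\exists P)^\I$. For $n=0$ this holds because $(\A,\T)\models\exists P(a)$. For $n\ge1$ it uses the inductive invariant $g(w)\in(\exists P_n^-)^\I$ together with $\T\models\exists P_n^-\sqsubseteq\exists P$. Choose a witness $e$ with $(g(w),e)\in P^\I$ and set $g(wP)=e$.
\end{itemize}
Preservation of the interpretation is then checked case by case. For constants this uses that $\I$ satisfies every entailed fact. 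For anonymous elements, if $\T\models\exists P_n^-\sqsubseteq A$, then $g(w)\in A^\I$ since $g(w)\in(\exists P_n^-)^\I$. For roles we use $\T\models P\sqsubseteq R$ and $(g(w),g(wP))\in P^\I$.

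The main obstacle is (i), namely that $\IsubA_{\A,\T}$ satisfies the negative inclusions and all of $\T$. The positive inclusions $B\sqsubseteq C$ and $P\sqsubseteq S$ follow from the closure built into the definition: membership of an element in a basic concept is determined by entailment from $\T$ (and from $\A$ for constants). The one case requiring care is an element $wP$ that satisfies $\exists P^-$ only through its parent; this is why $P_n^-$ itself is allowed to generate the child $wP_nP$ only when $P_n^-\neq P_{n+1}$. For negative axioms, I would first show that each anonymous element's concept and role memberships coincide with what $\T$ entails from $\exists P_n^-$, and each constant's with what $(\A,\T)$ entails. A violation of $B\sqsubseteq\neg B'$ or $P\sqsubseteq\neg S$ would then yield an entailed contradiction. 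For anonymous elements I would propagate that contradiction back along the word to the root constant $a$, showing $(\A,\T)$ inconsistent, which contradicts the "consistent" assumption.
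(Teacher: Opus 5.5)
The paper gives no proof of this statement: it recalls it as a well-known property of DL-Lite canonical models, with pointers to the literature. Your argument is the standard proof from those sources. You show that $\mathcal{I}_{\mathcal{A},\mathcal{T}}$ is a model of the consistent KB, and that it maps homomorphically into every model while fixing individuals. You also handle the delicate points correctly:
\begin{enumerate}
\item The element $wP_n$ has no child $wP_nP_n^-$, so its witness for $\exists P_n^-$ must be the parent $w$, via $(w,wP_n)\in P_n^{\mathcal{I}_{\mathcal{A},\mathcal{T}}}$. Your sentence about the condition $P_i^-\neq P_{i+1}$ is muddled, but this is the right idea.
\item For negative inclusions, you push the unsatisfiability of $\exists P_n^-$ back along the word to the root individual, which contradicts consistency.
\end{enumerate}

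One thing you must make explicit. In (i) you claim that the role inclusions $P \sqsubseteq S$ hold ``by the closure built into the definition''. You also claim that each individual's role memberships coincide with what $(\mathcal{A},\mathcal{T})$ entails. Both claims need the following clause on pairs of individuals, as in the standard definition:
\[(a,b) \in R^{\mathcal{I}_{\mathcal{A},\mathcal{T}}} \text{ iff } (\mathcal{A},\mathcal{T}) \models R(a,b).\]
The paper's displayed definition only puts $\{(a,b) \mid R(a,b) \in \mathcal{A}\}$ there. Under that literal reading, step (i) is false, and so is the statement itself. Take $\mathcal{A}=\{S(a,b)\}$, $\mathcal{T}=\{S \sqsubseteq R\}$ and the Boolean CQ $q=R(a,b)$. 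Then $(\mathcal{A},\mathcal{T}) \models q$, but $(a,b)\notin R^{\mathcal{I}_{\mathcal{A},\mathcal{T}}}$. A constant-free variant is $q=\exists xy\, A_1(x)\wedge R(x,y)\wedge B(y)$ with $A_1(a),B(b)$ added to $\mathcal{A}$.

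For concept names the paper already uses entailment at individuals, so only the role clause is affected. With the entailment-based clause, your proof goes through as written. Part (ii) is unaffected by this choice, since every model satisfies all entailed role assertions.
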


\smallskip

\section{Query Relevance Problem}\label{sec:relpb}

\begin{figure}[tb]
\centering
      \begin{tikzpicture}
\coordinate (00) at (-1.25, 0);
\coordinate (01) at (-0.5, 0);
\coordinate (02) at (1, 0);
\coordinate (03) at (2, 0);
\coordinate (04) at (4.5, 0);
\coordinate (05) at (3.25, 0);
\coordinate (06) at (-2, 0);
\coordinate (07) at (-2.5, 0);
\coordinate (08) at (0.5, 0);
\begin{pgfonlayer}{nodelayer}
\node [draw, circle] (0) at (00) {};
\node [draw, circle] (1) at (01) {};
\node [draw, circle] (2) at (02) {};
\node [draw, circle] (3) at (03) {};
\node [draw, circle] (4) at (04) {};
\node [draw, circle] (5) at (05) {};
\node [draw, circle] (6) at (06) {};
\node [] (7) at (07) {$q$:};
\node [] (8) at (08) {$D$:};
\end{pgfonlayer}
\begin{pgfonlayer}{edgelayer}
\draw [->,>=stealth] (6) to (0);
\draw [->,>=stealth] (0) to (1);
\draw [->,>=stealth] (2) to node[midway,above,color=blue] {$f_1$} (3);
\draw [->,>=stealth] (3) to node[pos=.4,above,color=blue] {$f_2$} (5);
\draw [->,>=stealth, in=135, out=45, loop] (5) to node[midway,above,color=blue] {$f_3$} ();
\draw [->,>=stealth] (4) to node[pos=.4,above,color=blue] {$f_4$} (5);
\end{pgfonlayer}
\end{tikzpicture}%
   
\caption{An example query and database to illustrate the notion of relevance. Arrows represent $R$ atoms/facts.}
\label{fig:relevance}
\end{figure}
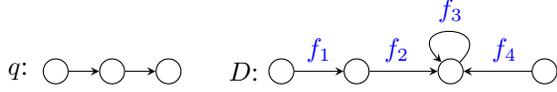

To improve the usability of information systems, it is important to be able 
to explain to users why a given answer $\vect{a}$ was obtained for the query $q$. 
Equivalently, we can rephrase this question in terms of Boolean queries, asking 
why the Boolean query $q(\vect{a})$ obtained by instantiating the 
free variables of the query with the answer tuple holds. 
Therefore, in what follows, we will focus w.l.o.g. on Boolean queries. 

For queries that are monotone (i.e., such that adding "facts" can never change the result from true to false), such as "CQs", %
a simple and natural way to explain why a query $q$ holds is to exhibit minimal subset(s) of the data that make $q$  true.

\begin{definition}
A \AP""support"" of a "CQ" $q$ in a "database" $D$ is any subset $S \subseteq D$ such that $S \models q$. 
It is a \AP""minimal support"" if further it does not strictly contain another "support".
\end{definition}

However, when a query has many "minimal supports", it may be more useful to first 
present users with the set of facts that occur in at least one "support". This is especially the case
if a query (answer) unexpectedly holds, in which case one will need to determine which 
facts are at fault. 

\begin{definition}
For a "CQ" $q$, "database" $D$ and fact $f \in D$, we say that $f$ is \AP""relevant"" to $q$ in $D$ if it belongs to some "minimal support" of $q$ in $D$.
\end{definition}

\begin{example}
Consider the CQ $q$ and database $D$ depicted in \Cref{fig:relevance}. 
There are two "minimal supports" for $q$ in $D$, which are $\{f_1,f_2\}$ and $\set{f_3}$. 
This shows that $f_1$, $f_2$, and $f_3$ are all "relevant". However, $f_4$ is not "relevant" 
despite appearing in a "homomorphic image@@cq" of $q$ in $D$, because the only "support" containing $f_4$
is $\{f_3, f_4\}$, which is not minimal. 
\end{example}

To define analogous notions for ontology-mediated queries, one must decide how to treat the case where the knowledge base is "inconsistent". 
Namely, do we want to consider minimal inconsistent subsets as supports for an OMQ? 
We believe it is more natural to separate debugging of inconsistencies from explaining query results, 
and so we shall require in our definitions that the KB is "consistent" 
(we shall discuss the case of "inconsistent" KBs in Section \ref{sec:discussion}). 

\begin{definition} Consider an "ontology-mediated query" $Q=(\T,q)$ with $\T$ a DL TBox and $q$ a "CQ" $q$, and an ABox $\A$ such that $(\A,\T)$ is "consistent". A \reintro{minimal support} for $Q$ in $\A$ is an inclusion-minimal subset $S \subseteq \A$ such that $S \models Q$,
and a "fact" $f \in \A$ is "relevant" to $Q$ in $\A$ if $f$ belongs to some "minimal support" of $Q$ in $\A$. 
\end{definition}

Our focus will be on analyzing the complexity of  deciding relevance for different classes of (ontology-mediated) "CQs". The decision problems are formally defined as follows. 
\AP\phantomintro{relevance problem}
\decisionproblem{\reintro{Relevance problem} for a class $\Q \subseteq \CQ$}%
                {Query $q\in \Q$, "database" $D$, and "fact" $f \in D$}%
                {Is $f$ "relevant" for $q$ in $D$?}

\decisionproblem{\reintro{Relevance problem} for "OMQ" class $(\L,\Q)$ \, ($\Q \subseteq \CQ$)}%
                {OMQ $(\T,q) \in (\L, \Q)$, ABox $\A$ such that $(\A,\T)$ is "consistent", and "fact" $f \in \A$}%
                {Is $f$ "relevant" for $(\T,q)$ in $\A$?}
\begin{remark}\label{rk:not-a-promise-pb}
   We treat these as standard decision problems in the complexity-theoretic sense, rather than `promise problems'. That is, an algorithm for $\mathcal{Q}$ must first verify that the input string is a valid encoding of a query from $\mathcal{Q}$.
\end{remark}

We conclude this section by stating some easy upper and lower bounds. 
First, we can observe that the "relevance problem" can be decided using a straightforward guess-and-check procedure:
guess a subset $S \subseteq D$ that contains the given fact $f$,
and verify that $S$ is a "minimal support". 
The latter can be checked by making repeated calls to a "query evaluation" oracle: we first check that the query holds in $S$, then check that the 
query no longer holds if any fact in $S$ is removed. We thus obtain the following generic 
upper bounds: %
\begin{proposition}\label{prop:rvltinsig2}
   For any class $\mathcal{Q}$ of (ontology-mediated) "CQs", the "relevance problem" for $\mathcal{Q}$ is in $\NP^{\mathsf{C}}$, where $\mathsf{C}$ is the complexity class of the "query evaluation problem" for $\mathcal{Q}$.
\end{proposition}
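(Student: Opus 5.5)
The plan is to give a nondeterministic polynomial-time algorithm with an oracle for the "query evaluation problem" for $\mathcal{Q}$. The main structural fact I will use is that minimality of a "support" only needs to be checked for single-fact removals. This holds because the query is monotone: for "CQs" this follows from the homomorphism characterisation, and for "OMQs" over a "consistent" KB it follows from the monotonicity of entailment. I first note that any subset of $\A$ is again consistent with $\T$ (models of $(\A,\T)$ are models of $(S,\T)$ for $S\subseteq \A$), so every oracle query posed below is a legitimate instance of "query evaluation". Concretely, a set $S$ is a "minimal support" iff (i) $S\models q$ (resp.\ $S\models Q$) and (ii) for every $g\in S$, $S\setminus\{g\}\not\models q$. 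For the non-trivial direction, suppose $S'\subsetneq S$ is a "support". Pick $g\in S\setminus S'$. Then $S'\subseteq S\setminus\{g\}$, so by monotonicity $S\setminus\{g\}$ is also a "support", which contradicts (ii).

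The algorithm works as follows. On input $(q,D,f)$ (resp.\ $(\T,q),\A,f$), it first checks that the input is well formed and that $q\in\mathcal{Q}$, as required by \Cref{rk:not-a-promise-pb}. I will assume that membership in $\mathcal{Q}$ is decidable within the resources at hand, or else fold this check into the oracle class $\mathsf{C}$; this is the one point that might need a remark. It then nondeterministically guesses a subset $S\subseteq D$ with $f\in S$, which has polynomial size. Next it makes one oracle call to check $S\models q$, and then $|S|$ further oracle calls, checking for each $g\in S$ that $S\setminus\{g\}\not\models q$. It accepts iff all these checks succeed. All instances passed to the oracle use the same query (and ontology) and a sub-database, so each is an instance of the "query evaluation problem" for $\mathcal{Q}$ of size at most that of the input.

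For correctness, some accepting run exists iff there is a set $S\ni f$ satisfying (i) and (ii), that is, by the observation above, iff $f$ lies in some "minimal support", which is exactly "relevance". The machine runs in polynomial time and makes polynomially many oracle queries to a $\mathsf{C}$-complete problem, using negated answers for (ii). Negated answers cause no difficulty, since the oracle returns both yes and no answers. Hence the problem is in $\NP^{\mathsf{C}}$.

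The only real obstacle is justifying the reduction of minimality to single-fact deletions via monotonicity, together with the consistency preservation that keeps oracle calls well formed in the OMQ case. Everything else is a routine guess-and-check argument.
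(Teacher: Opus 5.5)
Your proposal is correct and follows the paper's argument: guess $S \ni f$, then make one oracle call to check that $S$ is a support and one call per fact $g \in S$ to check that $S \setminus \{g\}$ is not, which suffices by monotonicity. You also spell out details the paper leaves implicit, namely why single-fact deletions suffice and why sub-ABoxes of a consistent ABox remain consistent.
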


Regarding lower bounds, we cannot reasonably hope to achieve better complexity than for 
"query evaluation". Indeed, a query $q$ holds in a "database" $D$ 
just in the case that some "fact" in $D$ is "relevant" for $q$, which allows us to relate the 
complexities of the two problems as follows: 

\begin{propositionrep}\label{another-eval-easier}
 For any class $\mathcal{Q}$ of (ontology-mediated) "CQs", the  "query evaluation problem" for $\mathcal{Q}$ is in $\logSpace^{\mathsf{C}}$, where $\mathsf{C}$ is the complexity class of the "relevance problem" for $\mathcal{Q}$.
\end{propositionrep}
\begin{proof}
It suffices to iterate, in logspace, over all facts in the database and test whether the considered fact is "relevant". 
Query evaluation succeeds if{f} some fact is found to be "relevant". 
\end{proof}

While the preceding result can be used to show that the "relevance problem" is intractable
whenever "query evaluation" is, its formulation in terms of oracle calls is not the most convenient 
for establishing precise lower bounds. The following result provides a more direct (many-one) reduction: 

\begin{proposition}\label{prop:evaleasier}
Let $\Q$ be any class of "CQs" such that $q \in \Q$  implies that 
there exists $q \wedge \exists x. A(x) \in \Q$, for some relation $A$ that does not occur in $q$. 
Then there is a many-one logspace reduction from the "query evaluation problem" for $\Q$ %
to the "relevance problem" for $\Q$. An analogous statement holds for "OMQ" classes, except that we must further require that the relation $A$ does not occur in the TBox. %
\end{proposition}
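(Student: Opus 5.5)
The reduction maps an instance $(q,D)$ of the "query evaluation problem" for $\Q$ to the "relevance problem" instance $(q',D',f)$ with
\[
q' \defeq q \wedge \exists x.\, A(x), \qquad D' \defeq D \cup \{A(c)\}, \qquad f \defeq A(c).
\]
Here $A$ is the fresh relation guaranteed by the hypothesis (so $q'\in\Q$), and $c$ is an arbitrary constant, fresh or taken from $D$. I would first check that the map is logspace computable. Outputting $D'$ and $f$ just means copying $D$ and appending one fact. Producing $q'$ means appending the atom $A(x)$ with a fresh variable $x$, and the hypothesis says this query belongs to $\Q$.

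Correctness: $D\models q$ iff $A(c)$ is "relevant" for $q'$ in $D'$.
\begin{itemize}
\item Only if: suppose $D\models q$. Choose an inclusion-minimal $S_0\subseteq D$ with $S_0\models q$; it exists since $D$ is finite. Let $S\defeq S_0\cup\{A(c)\}$, which satisfies $q'$.
  \begin{itemize}
  \item Removing $A(c)$ leaves a set with no $A$-fact, since $A$ does not occur in $D$ (as $A$ is fresh). So $\exists x.A(x)$ fails.
  \item Removing any other fact $g$ gives $S_0\setminus\{g\}\cup\{A(c)\}$. Since $A$ does not occur in $q$, the extra $A$-fact is irrelevant to satisfying $q$. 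Hence $q$ fails there, by minimality of $S_0$.
  \end{itemize}
  Thus $S$ is a "minimal support" containing $f$.
\item If: a "minimal support" containing $f$ is a subset of $D'$ satisfying $q$. Discarding the $A$-fact preserves $q$ (again because $A\notin q$), so $D\models q$.
\end{itemize}
The only point needing care is this independence of $q$ from $A$-facts. Formally: a "homomorphism" from $q$ into $S\cup\{A(c)\}$ only uses facts over relations of $q$, hence lands in $S$.

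For the "OMQ" version, the map sends $(\T,q)$ and $\A$ to $(\T,q')$, $\A\cup\{A(c)\}$ and $f=A(c)$. Now $A$ must be fresh for both $q$ and $\T$. The extra work is to establish two facts.
\begin{itemize}
\item (i) Adding $A(c)$ preserves "consistency" of $(\A,\T)$, so the output is a legal instance in the sense of Remark~\ref{rk:not-a-promise-pb}. Any model of $(\A,\T)$ remains a model after reinterpreting $A$ as $\{c\}$, because $\T$ does not mention $A$.
\item (ii) For every $S$, we have $(S\cup\{A(c)\},\T)\models q$ iff $(S,\T)\models q$.
  \begin{itemize}
  \item If $(S,\T)$ is consistent, models of $(S,\T)$ and of $(S\cup\{A(c)\},\T)$ correspond by changing only $A^\I$, and $q$ does not mention $A$. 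For $\dlliter$, one can alternatively observe that the "canonical model" of $S\cup\{A(c)\}$ equals that of $S$ plus the single fact $A(c)$, and then apply Theorem~\ref{canmod-prop}.
  \item If $(S,\T)$ is inconsistent, then so is $(S\cup\{A(c)\},\T)$, and both sides entail $q$ vacuously.
  \end{itemize}
  Subsets of the consistent $\A\cup\{A(c)\}$ are anyway consistent.
\end{itemize}
Similarly, $(S\setminus\{A(c)\},\T)\not\models\exists x.A(x)$: the model construction above with $A^\I=\emptyset$ works, since $\T$ does not mention $A$. With (i) and (ii), the database argument carries over verbatim, with $\T$-entailment in place of satisfaction. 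I expect this semantic independence step to be the only real obstacle, and it is routine.
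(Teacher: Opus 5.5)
Your reduction is exactly the paper's ($q'=q\wedge\exists x.A(x)$, $D'=D\cup\{A(c)\}$, $f=A(c)$), and your correctness argument is sound, including the consistency check in the OMQ case that the paper leaves implicit. One small repair: the hypothesis makes $A$ fresh only for $q$ (and $\mathcal{T}$), not for $D$ or $\mathcal{A}$, so derive that $S_0$ has no $A$-fact from its minimality (any $A$-fact could be dropped without affecting $q$) rather than from freshness, and in (i) use $A^{\mathcal{I}}\cup\{c^{\mathcal{I}}\}$ rather than $\{c\}$ as the new extension of $A$.
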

\begin{proof}
Given $q \in \Q$ and a "database" $D$, let $A$ be any unary relation not appearing in $q$. Clearly, $q'=q \wedge \exists x. A(x)$ and 
$D' = D \cup \set{A(c)}$ can be constructed from $q, D$ using a logspace transducer. It then suffices to observe that 
$D \models q$ if{f} 
$A(c)$ is "relevant" to $q'$ in $D$. 
\end{proof}

The preceding proposition applies in particular to the whole class $\CQ$,
classes of "CQs" having "hypertree width" $< m$ (for any $m>0$), and 
classes of "acyclic CQs" with at most $\ell$ leaves. It does not however 
cover classes of "CQs" with a fixed finite "signature", nor the class of connected "CQs".

\section{Relevance for Conjunctive Queries}
\label{sec:relevance.cq}
The "relevance problem" for unrestricted "CQs" is known to be harder than the "evaluation problem", in fact one step higher in the polynomial hierarchy.

\begin{theorem}\label{th:cqsighard}\textup{\cite[Thm. 4.30]{DBLP:phd/ethos/Vaicenavicius20}\footnote{While the theorem statement in the 
    dissertation is on \emph{unions} of "CQs", the 
    proof makes use only of "CQs" over a "binary signature".}}
   The "relevance problem" for "CQs" is $\sigmaptwo$-complete, and $\sigmaptwo$-hardness holds already for "constant-free" "CQs" over a "binary signature".
\end{theorem}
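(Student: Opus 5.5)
The upper bound is immediate. By \Cref{prop:rvltinsig2}, "relevance" for "CQs" lies in $\NP^{\NP}=\sigmaptwo$, because "query evaluation" for "CQs" is in $\NP$. We guess $S\subseteq D$ containing $f$. We then verify with one $\NP$ call that $S\models q$, and with $|S|$ $\coNP$ calls that $S\setminus\{g\}\not\models q$ for every $g\in S$.

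For hardness, I would reduce from $\forall\exists$-free validity of $\exists\vect{x}\forall\vect{y}\,\neg\varphi(\vect{x},\vect{y})$ with $\varphi$ a 3-CNF. Equivalently, the problem is: does there exist an assignment $\alpha$ of $\vect{x}$ such that $\varphi(\alpha,\vect{y})$ is unsatisfiable? This is $\sigmaptwo$-complete.

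The intended correspondence is the following. Minimal supports should encode choices of $\alpha$, and minimality should fail exactly when some strictly smaller subset still satisfies the query. This happens precisely when $\varphi(\alpha,\cdot)$ is satisfiable.

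Concretely, I would build a "constant-free" query $q$ over a "binary signature". It consists of three parts:
\begin{itemize}
\item a gadget for each $x_i$, mapping onto either a "true" or a "false" fact chosen in the database;
\item a component that evaluates the clauses of $\varphi$ over the $y$-variables using the standard $\NP$-hardness encoding, with truth-table relations for each clause type;
\item a rigid "marker" part that must map onto $f$.
\end{itemize}
The database $D$ is designed as follows. Any support containing $f$ must contain the facts chosen by the $x$-gadgets, so it fixes an $\alpha$. It must also contain a "full" copy of the clause truth tables, so that the clause component always maps using the full tables. In addition, $D$ contains an alternative, cheaper way to satisfy the query that avoids some facts (in particular $f$), using only the facts coding $\alpha$ together with a satisfying assignment to $\vect{y}$.

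The key lemma to prove is then: $f$ belongs to a minimal support iff there is $\alpha$ for which no proper subset of $S_\alpha$ satisfies $q$. The latter holds iff $\varphi(\alpha,\cdot)$ is unsatisfiable. The first direction follows because every support containing $f$ is essentially of the form $S_\alpha$. The second follows by arguing that every proper sub-support must use the "shortcut" route, and such a route exists exactly when the formula is satisfiable under $\alpha$.

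To stay within binary relations without constants, I would encode the constants "0" and "1", the clause-tables, and the identity of $f$ using distinguished unary predicates and/or rigid "core" gadgets.

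The main obstacle is controlling unintended homomorphisms. I must ensure that the $x$-gadgets cannot mix true and false values, that the query cannot fold onto the shortcut part in unexpected ways, and that removing a single table fact never yields a support except via a genuine satisfying $\vect{y}$. I would handle this by choosing $q$ to be a "core" with distinct relation names per role, and by proving a normal-form lemma describing all homomorphisms $q\homto D$. Since the statement is attributed to \cite{DBLP:phd/ethos/Vaicenavicius20}, one could alternatively invoke that construction and only verify that it uses binary, constant-free CQs.
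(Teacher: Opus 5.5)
Your upper bound is exactly what the paper relies on: \Cref{prop:rvltinsig2} together with $\mathsf{NP}$ membership of CQ evaluation. For hardness the paper gives no argument of its own. It imports \cite[Thm.~4.30]{DBLP:phd/ethos/Vaicenavicius20} and only remarks that the construction there uses CQs over a binary signature, so your closing fallback is precisely the paper's route. The self-contained reduction you outline, however, is not yet a proof. What it leaves out is the entire technical content of the hardness result.

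First, the design as stated pulls against itself.
\begin{itemize}
\item If the marker \emph{must} map onto $f$, then $f$ lies in every support, and relevance of $f$ collapses to $D\models q$. The marker therefore needs an alternative target inside $S_\alpha$ that becomes available exactly when $\varphi(\alpha,\cdot)$ is satisfiable, and you do not say what it is.
\item A clause component that evaluates one assignment touches only one row per clause. Forcing the full tables into $S_\alpha$ therefore needs a further part of $q$, in effect a rigid copy of the tables. But then the shortcut, being an image of all of $q$, contains the full tables too. It cannot consist of \emph{only the facts coding $\alpha$ together with a satisfying assignment}.
\end{itemize}
A consistent design keeps the tables in both images. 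The shortcut should save only on a copy of the clause structure in $D$ that carries $f$, by mapping the clause part of $q$ into the tables.

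Second, and more seriously, your step \emph{every proper sub-support must use the shortcut route} fails for naive encodings. Groups of clauses of $q$ that share no $y$-variable with the rest can be folded into the tables independently. So $S_\alpha$ already has a strictly smaller image as soon as \emph{some} such group is satisfiable under $\alpha$. For example, $\varphi(\alpha,\cdot)$ may be unsatisfiable only because of an $x$-only clause falsified by $\alpha$, while the remaining clauses fold. This breaks the direction ``$\varphi(\alpha,\cdot)$ unsatisfiable $\Rightarrow$ $f$ relevant''.

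Making $q$ a core with distinct relation names does not prevent this. The fold sends clause atoms to table facts of the same relations, and minimality concerns homomorphisms $q\to S_\alpha$, not endomorphisms of $q$. You need an explicit all-or-nothing gadget. For instance, link all clause nodes of $q$ by a directed path in a fresh relation. In $D$, this relation should be a directed path on the copy carrying $f$, complete with loops on the table rows, and have no edges between the two. Then the clause part of $q$ maps either identically onto that copy or entirely into the tables. This, together with rigidity of the table copy and of the $x$-gadgets, is the normal-form lemma your argument rests on and does not supply.
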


We will see in the next subsections how two quantitative structural notions of queries affect this problem: the "hypertree width" and the amount of variables from different atoms that can be homomorphically merged into one fact.
\subsection{CQs with Bounded Treewidth}\label{ssec:cqbtw}
While "query evaluation" of "CQs" is $\NP$-hard, 
structural restrictions have been identified that make query evaluation tractable. %
This is in particular the case for any class of "CQs" with bounded "hypertree width" (\Cref{thm:boundedtw-logcfl}). 
By \Cref{prop:rvltinsig2}, this lowers the complexity of "relevance" to $\NP$. Interestingly, the $\NP$ lower bound applies even to (acyclic) \AP""chain CQs"" of the form $R(x_1,x_2) \land R(x_2,x_3) \land \dotsb \land R(x_{n-1},x_n)$ on a single binary relation $R$. 

\begin{theoremrep}\label{th:cqbtw}
For any fixed $k \geq 1$, the "relevance problem" for the class of all "CQs" of "hypertree width" at most $k$ is $\NP$-complete. Hardness holds even for the class of "chain CQs". 
\end{theoremrep}
\begin{proofsketch}
The "minimal supports" for the $n$-"atom" "chain CQ" $q_n$ are either simple cycles or simple paths of length $n$: indeed, any set strictly containing a cycle cannot be a "minimal support" since the cycle satisfies $q_n$.  We can then reduce from the existence of a Hamiltonian path on directed graphs. Given a directed graph $G$ with $n$ vertices we can produce a graph $G'$ by adding two new vertices $u,v$ and the edges $(u,v)$ and $(v,v')$ for every $v' \in V(G)$. Since $(u,v)$ is in no cycle, checking whether $(u,v)$ is "relevant" for $q_{n+1}$ in $G'$ is then equivalent to testing whether $G$ contains a simple path of length $n-1$ (i.e., a Hamiltonian path).
\end{proofsketch}
\begin{proof}
The fact that the problem is in $\NP$ is a direct consequence of \Cref{prop:rvltinsig2},
since the evaluation of "CQs" with bounded "hypertree width" is polynomial, see \Cref{thm:boundedtw-logcfl}.

\begin{figure}[tb]
\centering
      \begin{tikzpicture}
\coordinate (00) at (-1.25, 0);
\coordinate (01) at (0, 0);
\coordinate (02) at (2.75, 0);
\coordinate (03) at (0, 1.25);
\coordinate (04) at (2.75, -1);
\coordinate (05) at (1.5, 1);
\coordinate (06) at (1, 0.75);
\coordinate (07) at (2, -0.75);
\coordinate (08) at (1.5, 0.25);
\coordinate (09) at (2, 0.5);
\coordinate (010) at (-3, 1);
\coordinate (011) at (-2.5, 0);
\coordinate (012) at (4, 0);
\begin{pgfonlayer}{nodelayer}
\node [draw, circle, minimum height=1.5em] (0) at (00) {};
\node [draw, circle, minimum height=1.5em, fill=white] (1) at (01) {};
\node at (01) {$s$};
\node [draw, circle, minimum height=1.5em, fill=white] (2) at (02) {};
\node at (02) {$t$};
\node [] (5) at (05) {$G$};
\node [draw, circle, minimum height=1.5em] (6) at (06) {};
\node [draw, circle, minimum height=1.5em] (7) at (07) {};
\node [] (8) at (08) {...};
\node [draw, circle, minimum height=1.5em] (9) at (09) {};
\node [draw, circle, minimum height=1.5em] (12) at (012) {};
\end{pgfonlayer}
\begin{pgfonlayer}{edgelayer}
\draw [->,>=stealth] (0) to node[midway, above] {$e$} (1);
\draw [dashed, rounded corners] (03) rectangle (04);
\draw [->,>=stealth] (1) to (6);
\draw [->,>=stealth, bend right=15] (1) to (7);
\draw [->,>=stealth, bend right=15] (7) to (1);
\draw [->,>=stealth] (9) to (2);
\draw [->,>=stealth] (2) to (12);
\end{pgfonlayer}
\end{tikzpicture}%
   
\caption{Database $D_G$ for \Cref{th:cqbtw}. Arrows represent $r$ facts.}
\label{fig:hamil}
\end{figure}

Moving on to the $\NP$-hardness, we reduce from the $\AP\intro*\HamPathst$ problem of deciding, given a directed graph $G=(V,E)$ and distinguished vertices $s,t\in E$, if there exists a \AP""hamiltonian path"" from $s$ to $t$, that is a path $s \to \dots \to t$ that visits each vertex exactly once.%
Consider an input instance $G,s,t$, and denote by $n\defeq |V|$ the number of vertices in $G$. We build the database $D_G \defeq \{R(s',s),R(t,t')\}\cup\{R(u,v) \mid (u,v)\in E\}$, where $s',t'$ are fresh constants, as depicted in \Cref{fig:hamil},
and the query $q_G\defeq r^{n+1}(x,y)$. Denote by $e\defeq R(s',s)$ the first extra edge.

If there exists a "hamiltonian path" from $s$ to $t$, it forms a simple path $\P$ in $D_G$ of the form $r^{n-1}(s,t)$. $\P\not\models q$ because it is short of two edges, however $\P\cup\{e,R(t,t')\}$ is a "minimal support" for $q_G$, witnessing that $e$ is "relevant".

If there exists no "hamiltonian path" from $s$ to $t$, any "homomorphic image@@cq" of $q_G$ that contains $e$ will necessarily contain repeated vertices, hence a cycle. Now a "minimal support" for $q_G$ cannot strictly contain a cycle because any cycle satisfies $q$ alone. The fact $e$ is therefore "irrelevant".
\end{proof}

\subsection{CQs with Bounded Self-Join Width}
Observe that the "chain CQs" of \Cref{th:cqbtw} need unboundedly many "atoms" on the same relation name, or `"self-joins"'.
A \AP""self-join"" is a pair of "atoms" on the same "relation name", and a \AP""self-join free"" "CQ" is a "CQ" with no "self-joins".

We show that if the number of "self-join" "atoms" in "CQs" is bounded, then the complexity of the "relevance problem" 
matches that of "query evaluation", namely $\NP$ or even $\logCFL$ if queries have bounded "hypertree width" (such as "chain CQs"). 
Instead of counting the number of "self-joins", we use a more fine-grained notion of `"self-join width"', which will allow showing tractability for broad classes of "CQs".

For a tuple $\vect t$ or "atom" $\alpha=R(\vect t)$ we write $\vect t \AP\intro*\1i$ and $\alpha \reintro*\1i$ to denote the $i$-th element of the $\vect t$-tuple.
Two "atoms" $\alpha,\beta$ of a "CQ" $q$ are \AP""mergeable""  if there are two "homomorphisms" $h_\alpha : \alpha \homto D$, $h_\beta : \beta \homto D$
to an arbitrary "database" $D$  such that $h_\alpha(\alpha)=h_\beta(\beta)$ (in particular they must have the same "relation name"). 
An "atom" $\alpha$ is (individually) "mergeable" if it is "mergeable" with some other "atom" in $q$.
\begin{definition}[Self-join width]
    The \AP""self-join width"" of a "CQ" $q$ is the cardinality of the following set of variables
    \AP
    \begin{align*}
        \intro*\Mergealt \defeq \set{ \alpha\1 i : {}&\alpha, \alpha' \text{ are "mergeable" "atoms" of $q$} \\ &
        \text{  with } \alpha\1 i \neq \alpha'\1 i \text{ and } \alpha\1 i \in \vars(q) }.
    \end{align*}     
\end{definition}

This definition can be viewed as a generalization and improvement of %
a prior notion of "self-join width" introduced in \cite{ourpods25paper}%
\IfApx{, see \Cref{rk:old-vs-new-selfjoinwidth} for more details.}{.} Note that any "self-join free" query has "self-join width" 0, but the converse does not hold for "CQs" with constants.   
\begin{toappendix}
    \begin{remark}[On a prior definition of "self-join width"]\label{rk:old-vs-new-selfjoinwidth}
    A prior notion of \reintro{self-join width} for a "CQ" $q$ was introduced and studied in \cite{ourpods25paper}, which is defined as the size of
    \AP
    \begin{align*}
        \intro*\Unif \defeq \set{ t \in \mterms(\alpha) : \alpha \text{ is a "mergeable" "atom" of $q$}}.
    \end{align*}
    Observe how our current definition of "self-join width" is more general, in the sense that $\Mergealt \subseteq \Unif$.
    For example, the class $\class = \set{q_n : n > 0}$
    of Boolean "CQs" 
    \begin{align*}
        q_n \defeq \exists xx'y_1 \dotsc y_n ~ & R_n(x,y_1,\dotsc, y_n) \land {}\\
        & R_n(x',y_1,\dotsc, y_n) \land S(x,x')
    \end{align*}
    has unbounded "self-join width" with the definition above of \cite{ourpods25paper} since $\Unif[q_n] = \set{x,x',y_1, \dotsc, y_n}$. However, it has "self-join width" 2 with our definition, since $\Mergealt[q_n] = \set{x,x'}$.
    In this way, the alternative definition we propose yields larger classes of bounded "self-join width" and thus larger islands of tractability.
    \end{remark}    
\end{toappendix}
\begin{example}
The "CQ"  $q\defeq \exists xy. R(c,x) \land R(c',y)$, where $c,c'$ are distinct constants, %
has "self-join width" 0 since it has no two "mergeable" "atoms".
The "CQ" $q' \defeq \exists xyz ~ R(x,y) \land R(x,c) \land S(y,z)$ has "self-join width" $1$ as $\Mergealt[q'] = \set{y}$. 
\end{example}

For classes of bounded "self-join width" "CQs", the complexity of "relevance@relevance problem" matches that of "query evaluation@query evaluation problem":
\begin{theoremrep}\label{thm:boundedSJ-upper}
    For any fixed $k \geq 0, \ell > 0$:
    \begin{itemize}
        \item The "relevance problem" for "CQs" of "self-join width" at most $k$ is $\NP$-complete.
        \item The "relevance problem" for "CQs" of "self-join width" at most $k$ and "hypertree width" at most $\ell$ is $\logCFL$-complete.
    \end{itemize}
\end{theoremrep}
\begin{proof}
    We are given a "fact" $\aFact=R(\vect c)$, "database" $D$ and a "CQ" $q$.
    We first need to test if $q \in \classSJk$, which can be done in $\NL$ due to \Cref{lem:self-join-width-testing}-(a). 
    For the second statement, we have to further check that $q$ is of "hypertree width" at most $\ell$. This can be done in $\logCFL$ due to \Cref{thm:boundedtw-logcfl}.
    If $q$ does not meet the "self-join width" hypothesis (and the "hypertree width" hypothesis for the second statement), we reject the input (cf.\ \Cref{rk:not-a-promise-pb}).

    Otherwise, we compute the set $\Mergealt$ in $\logSpace$ via \Cref{lem:self-join-width-testing}-(b), and we then iterate, using logarithmic space, over all the possible $E \in \Equivs$ (of logarithmic size by \Cref{rk:constant-size-equivalences}) checking if there is one such $E$ meeting the criteria we describe next. 
    
    We first check that  $E \in \niceEquivs$ in $\logSpace$ by \Cref{lem:niceEquiv-exists}. We next check that there is some $R$-"atom" $R(\vect t)$ in $\qE$ \AP""consistent@@sjf"" with our input fact $R(\vect c)$, that is, such that 
    \begin{itemize}
        \item $\vect t\1i = \vect c\1i$ if $\vect t\1i$ is a constant,
        \item if $\vect t\1i = \vect t\1j$ then $\vect c\1i = \vect c\1j$.
    \end{itemize}
    If there is more than one such "consistent@@sjf" atom, then we iterate, using logarithmic space, over all of them testing if there is one meeting the properties that follow next.

    Consider now the result $\hat q$ of replacing in $\qEneq$ every $\vect t\1i$ which is a variable with the constant $\vect c\1i$. 
    We then have that any "homomorphic image@@cq" of $\hat q$ is a "minimal support" of $q$ containing $\aFact$.%
    Indeed, if $S$ is a "homomorphic image@@cq" of $\hat q$, there is a "homomorphism" $h: \hat q \homto S$ such that $h(\hat q)=S$. Extending $h$ to the variables of $\qE$ in the expected way (i.e., by mapping $\vect t \1i$ to $\vect c \1i$ for every variable $\vect t \1i$) then yields a "homomorphism" $h'$ such that $h': \qEneq \homto S$ and further $h'(\qE) = S$.
    Then, by \Cref{lem:niceEquiv-exists} (\ref{it:niceEquiv-exists:2} $\Rightarrow$ \ref{it:niceEquiv-exists:1}), $S$ is a "minimal support" of $q$.
    Further, if there exists a "minimal support" of $q$ containing $\aFact$, there must be some $E \in \niceEquivs$ and "atom" "consistent@@sjf" with $\aFact$ as described before.

    All these operations can be performed in logarithmic space since there are logspace transductions from $q$ to $\qEneq$, and from $\qEneq$ to $\hat q$, which can be composed.
    
    Observe that the "hypertree width" of $\hat q$ is at most that of $q$ plus $k$: collapsing variables does not increase the width and each added "inequality atom" $x \neq y$ may induce adding $x,y$ to some bags which increase the width by at most 1.

    We finally check $D \models \hat q$ by a call to the "query evaluation problem". This is in $\NP$ in general \cite[Theorem 7]{ChandraM77}, or in $\logCFL$ if we started with a class of bounded "hypertree width" by \Cref{thm:boundedtw-logcfl}.
    
    The $\NP$-hardness result follows via \Cref{prop:evaleasier} from the fact that "query evaluation" is already $\NP$-hard for "self-join free" "CQs".
    The $\logCFL$-hardness result follows via \Cref{prop:evaleasier} from the fact that "query evaluation" is $\logCFL$-hard for "self-join free" "acyclic" "CQs" (i.e., of "hypertree width" 1) by \Cref{thm:boundedtw-logcfl}.
\end{proof}

The rest %
of this section is devoted to proving this result.
Let us fix the class $\AP\intro*\classSJk$ of "CQs" of "self-join width" at most $k$.

\begin{lemmarep}\label{lem:self-join-width-testing}
    (a) Detecting if two given "atoms" are "mergeable" and more generally testing whether a "CQ" $q$ is in $\classSJk$ belongs to $\NL$. 
    (b) Computing $\Mergealt$ from a "CQ" $q \in \classSJk$ is in $\logSpace$.
\end{lemmarep}
\begin{proof}
    \proofcase{(a)}
    Observe that testing whether two "atoms" $R(t_1,\dotsc, t_n)$ and $R'(t'_1, \dotsc, t'_m)$  are "mergeable" is equivalent to testing that 
    (i) $R=R'$, 
    (ii) $n=m$, and 
    (iii) the graph $G = (V,E)$ has no two distinct constants in the same connected component, where $V = \set{t_1,\dotsc, t_n, t'_1,\dotsc, t'_m}$ and $E = \set{\set{t_i,t'_i} : i \in [n]}$.
    For testing (the negation of) (iii) one can non-deterministically guess a path between two distinct constants, and use the fact that $\NL$ is closed under complement by the Immerman-Szelepcsényi Theorem.

    For testing whether $q \in \classSJk$ we compute $\Mergealt$ as follows. We initialize a set $A =  \emptyset$. 
    Then, we iterate over every pair of "atoms" $\alpha,\alpha'$ in $q$, and if they are "mergeable" (which can be tested in $\NL$), we iterate over all indices $i$ of the arity of $\alpha$ and add $\alpha\1i$ to the set $A$ if $\alpha\1i$ is a variable such that $\alpha\1i \neq \alpha'\1i$.

    \proofcase{(b)}
    Note that the proof of $\NL$ from the previous point (a) for testing whether two "atoms" are "mergeable" becomes $\logSpace$ if we further assume that $q \in \classSJk$, since the lengths of paths in the graph is bounded by a constant (namely, $k$). Hence, the procedure to compute $\Mergealt$ of the previous paragraph becomes $\logSpace$.
\end{proof}

We shall need to consider all the different possible ways of `merging' two "mergeable" "atoms" by means of making some variables of $\Mergealt$ equal among them or among the "constants" contained in the query. For this, consider the set $\AP\intro*\Equivs$ of all equivalence relations over $\Mergealt \cup \const(q)$ such that no two distinct constants are in the same equivalence class. 
\begin{remark}\label{rk:constant-size-equivalences}
There is only a polynomial number of such equivalence relations $E \in \Equivs$ and further the cardinality $|E|$ of each equivalence relation is of constant size $\leq k$, under the suitable encoding where singleton classes are implicit, and hence $E$ can be stored in logarithmic space.
\end{remark}
For any equivalence relation $E \in \Equivs$ let $\AP\intro*\qE$ be the result of collapsing all the terms in the same equivalence class in $q$ and removing repeated "atoms". Let us further define $\AP\intro*\qEneq$ as the result of adding a "inequality atom" $t \neq t'$ to $\qE$ for each pair $(t,t') \in (\Mergealt \times (\Mergealt \cup \const(q))) \setminus E$. Intuitively, $\qEneq$ is the query stating that variables of $\Mergealt$ shall be collapsed\footnote{If the equivalence class $X$ contains a "constant" $c$, then all "variables" of $X$ are replaced by $c$; otherwise, we choose a representative "variable" $x \in X$ and replace all "variables" of $X$ by $x$.}  exactly according to $E$.

\begin{lemmarep}\label{lem:testing-qE-homs}
    The problems of testing, given $q \in \classSJk$ and two equivalence relations $E,E'$ from $\Equivs$, whether $\qE \homto \qE[E']$ holds, whether $\qEneq \homto \qEneq[E']$ holds, and whether $\qE$ is a "core", are all in $\logSpace$.
\end{lemmarep}
\begin{proof}
    If such $\qE \homto \qE[E']$ exists, then it must be identitary on all variables outside $\Mergealt$. There are constantly many remaining variables, and we can then go through all possible $\mathcal{O}(|\vars(\qE[E']) \cup const(\qE[E'])|^k)$ mappings for these in logarithmic space and checking whether some of these mappings yields a "homomorphism". The same argument applies for checking $\qEneq \homto \qEneq[E']$ by restricting the verification to mappings which are injective on $\Mergealt$.

    Finally, note that $\qE$ is not a "core" if, and only if, there is $E' \in \Equivs$ such that $\qE \homto \qE[E']$, $\qE[E'] \homto \qE$, and $\qE[E']$ has fewer "atoms" than $\qE$. Using only logarithmic space, we can go one by one all $E' \in \Equivs$ checking if there is one verifying these three conditions, in particular by using the algorithm for $\homto$ testing described above.
\end{proof}

The interest of the $\qEneq$ queries stems from the fact that, for a well-chosen subset of $\Equivs$, they completely characterize the "minimal supports" of $q$, as the next lemma shows. 
\begin{lemma}\label{lem:niceEquiv-exists}
For every $q \in \classSJk$ there exists a set $\AP\reintro*\niceEquivs \subseteq \Equivs$ of equivalence relations, such that the following statements are equivalent for every "database" $D$:
\begin{enumerate}
    \item \label{it:niceEquiv-exists:1} $S \subseteq D$ is a "minimal support" for $q$,
    \item \label{it:niceEquiv-exists:2} $S=h(\qE)$ for some $E \in \niceEquivs$ and "homomorphism" $h:\qEneq \homto D$. %
\end{enumerate}
Further, given an equivalence class $E \in \Equivs$, one can test in $\logSpace$ whether $E \in \niceEquivs$ (under the encoding of \Cref{rk:constant-size-equivalences}).
\end{lemma}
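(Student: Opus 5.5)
The plan is to give an explicit description of $\niceEquivs$ and prove the equivalence by analysing the equivalence relation that a homomorphism induces on $\Mergealt \cup \const(q)$. For a homomorphism $h : q \homto D$, let $E_h \in \Equivs$ relate two terms of $\Mergealt \cup \const(q)$ iff $h$ maps them to the same constant. Then $h$ factors as $h = h' \circ \pi_{E_h}$, where $\pi_{E_h}$ is the collapsing map $q \to \qE[E_h]$ and $h' : \qEneq[E_h] \homto D$; moreover $h(q) = h'(\qE[E_h])$.

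\emph{Step 1 (atom-injectivity).} I will show that $h'$ is injective on the atoms of $\qE[E_h]$. Suppose two distinct atoms $\alpha,\beta$ of $q$ are sent by $h$ to the same fact. Then they are "mergeable". At every position $i$ with $\alpha\1i \neq \beta\1i$, any variable occurring there lies in $\Mergealt$ by definition. If both entries are constants, they must be equal, since they are sent to the same fact. Hence all differing entries are terms of $\Mergealt \cup \const(q)$ that $h$ identifies, so $E_h$ merges them, and $\alpha,\beta$ become the same atom in $\qE[E_h]$. Consequently $|h(q)| = |\atoms(\qE[E_h])|$.

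\emph{Step 2 (definition of $\niceEquivs$).} I define $E \in \niceEquivs$ iff two conditions hold. First, $\qEneq$ is satisfiable, i.e.\ it contains no atom $t \neq t$. Second, no $E' \in \Equivs$ satisfies $|\atoms(\qE[E'])| < |\atoms(\qE)|$ together with $\qEneq[E'] \homto \qEneq$, or more liberally $\qE[E'] \homto \qE$. Intuitively, the frozen query $\qE$ must be a "minimal support" of $q$. The direction (\ref{it:niceEquiv-exists:2} $\Rightarrow$ \ref{it:niceEquiv-exists:1}) goes as follows. If $S = h(\qE)$ were not minimal, take $g : q \homto S$ with image a proper subset of $S$. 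By Step 1, $g(q) = g'(\qE[E_g])$ has exactly $|\atoms(\qE[E_g])|$ facts, so $|\atoms(\qE[E_g])| < |\atoms(\qE)|$. We then need to pull $g$ back to a homomorphism $\qE[E_g] \homto \qE$, which contradicts niceness. For (\ref{it:niceEquiv-exists:1} $\Rightarrow$ \ref{it:niceEquiv-exists:2}), take a minimal support $S$ and any $h$ with $h(q) = S$, and set $E = E_h$. If $E$ were not nice, a witness $\qE[E'] \homto \qE$ composed with $h'$ gives a homomorphism $q \homto S$. By Step 1 its image has at most $|\atoms(\qE[E'])| < |S|$ facts, so it is a proper sub-support, contradicting minimality.

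\emph{Main obstacle.} The hard step is the pull-back in the first direction. The map $h'$ may identify variables outside $\Mergealt$, so a homomorphism into $S$ need not lift to $\qE$. My first attempt will exploit that $h'$ is atom-injective and equals the identity pattern outside $\Mergealt$. Concretely, I would compose $g$ with a suitable "inverse" of $h'$ on atoms and check, position by position, that conflicts can only occur at $\Mergealt$-positions. If this fails, I would strengthen the niceness test to quantify over the finitely many ways $E_g$ can interact with $E$, which still involves only boundedly many variables. For the complexity claim, iterating over all $E' \in \Equivs$ (polynomially many, each of logarithmic size by \Cref{rk:constant-size-equivalences}) and deciding $\qE[E'] \homto \qE$ is in $\logSpace$ by \Cref{lem:testing-qE-homs}. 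Satisfiability of $\qEneq$ and the atom counts are also $\logSpace$-computable, so membership $E \in \niceEquivs$ is decidable in $\logSpace$.
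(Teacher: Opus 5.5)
There is a genuine gap, and it is exactly the step you flag. In (2)$\Rightarrow$(1) you need a homomorphism $q_{E_g}\to q_E$ obtained by lifting $g\colon q\to S$ back through $h$. Your proposal offers only a tentative route plus a fallback (strengthening the niceness test), not an argument, and this lifting statement is the whole content of that direction. The paper does not prove it in-line either: it imports it as \Cref{cl:compose-homs} (from $h'(q_{E'})\subsetneq h(q_E)$ infer $q^{\neq}_{E'}\to q^{\neq}_E$) and pairs it with a different niceness predicate ($q_E$ is a core, and $q^{\neq}_{E'}\to q^{\neq}_E$ implies $q^{\neq}_E\to q^{\neq}_{E'}$ for every $E'$).

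Your sketch also never says where the inequality atoms of $q^{\neq}_E$ enter, yet without them the direction is false. For the chain $R(x,y)\wedge R(y,z)\wedge R(z,w)$ with $E$ the identity, the map $x,y\mapsto a$, $z\mapsto b$, $w\mapsto c$ gives the non-minimal image $\{R(a,a),R(a,b),R(b,c)\}$. Relatedly, to get $|S|=|\mathrm{atoms}(q_E)|$ from your Step~1 you should note that, thanks to these inequalities, $h\circ\pi_E$ induces exactly $E$ on $\mathcal{M}_q\cup\mathit{const}(q)$.

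Your \emph{first attempt} does go through once you add two observations.
\begin{itemize}
\item[] \emph{Setup.} For each atom $\beta$ of $q$, let $\sigma(\beta)$ be the unique atom of $q_E$ with $h(\sigma(\beta))=g(\beta)$. Pick an atom $\gamma$ of $q$ with $\pi_E(\gamma)=\sigma(\beta)$, and set $\phi(\beta[i]):=\sigma(\beta)[i]$. Since $g$ and $h\circ\pi_E$ send $\beta$ and $\gamma$ to the same fact, these atoms are mergeable. So every position where they differ carries only variables of $\mathcal{M}_q$ or constants.
\item[] \emph{(a)} If $t=\beta[i]$ is a variable outside $\mathcal{M}_q$, then $\gamma[i]=t$, so $\phi(t)=t$ at every occurrence. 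The non-injectivity of $h$ outside $\mathcal{M}_q$ that worried you is therefore never exercised.
\item[] \emph{(b)} If $t\in\mathcal{M}_q\cup\mathit{const}(q)$, then every candidate value $\pi_E(\gamma[i])$ is an $E$-class of $\mathcal{M}_q\cup\mathit{const}(q)$ sent by $h$ to $g(t)$. The inequality atoms of $q^{\neq}_E$, together with distinctness of constants, make $h$ injective on such classes.
\end{itemize}
The injectivity in (b) gives well-definedness of $\phi$, $\phi(c)=c$ for constants, constancy of $\phi$ on $E_g$-classes, and distinct images for the inequalities of $q^{\neq}_{E_g}$. Hence $\phi$ induces a homomorphism $q_{E_g}\to q_E$, even one respecting $q^{\neq}_{E_g}$. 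This contradicts your niceness condition, because $|\mathrm{atoms}(q_{E_g})|=|g(q)|<|S|=|\mathrm{atoms}(q_E)|$.

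With this lemma in place, the rest of your plan is sound: Step~1, the direction (1)$\Rightarrow$(2), and the logspace test via \Cref{lem:testing-qE-homs}. Your predicate is then a legitimate and somewhat simpler alternative to the paper's.
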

\begin{proof}
    The existence of $\niceEquivs$ essentially follows from the developments in \cite[proof of Theorem~6.6]{ourpods25paper}, but here we need to adapt definitions to have a better space complexity.
    Concretely, we define 
    \AP
    \begin{align*}
    \intro*\niceEquivs \defeq \set{ E \in \Equivs : {}& \qE \text{ is a "core", and for all $E' \in \Equivs$, }\\
    & \text{if $\qEneq[E'] \homto \qEneq$ then $\qEneq \homto \qEneq[E']$} }.
    \end{align*}

    \begin{claimrep}\label{cl:niceEquivs-satisfy-iff}
        $\niceEquivs$ satisfies the \ref{it:niceEquiv-exists:1} $\Leftrightarrow$ \ref{it:niceEquiv-exists:2} equivalence.
    \end{claimrep}
    \begin{proof}
        \proofcase{\ref{it:niceEquiv-exists:1} $\Rightarrow$ \ref{it:niceEquiv-exists:2}} 
        If $S \subseteq D$ is a "minimal support" for $q$, let $h: q \homto D$ be a "homomorphism" realizing it, i.e., such that $h(q) = S$.
        Consider the equivalence relation $E \in \Equivs$ \AP""induced by@@equivrel"" $h$ (i.e., where $(t,t') \in E$ if{f} $h(t)= h(t')$ for all $t,t' \in \Mergealt \cup \const(q)$).
        Hence, $\hat h: \qEneq \homto S$ via the relativization $\hat h$ of $h$ onto the variables of $\qEneq$. Observe that $\hat h : \qEneq \homto S$ \AP""atom injective"", in the sense that the size of $S$ equals the number of "atoms" of $\qE$\IfApx{ (cf.\ \Cref{cl:induced-size})}{}.
        We need to show that $E \in \niceEquivs$. 
        If $\qE$ is not a "core", then it is easy to see that $S$ cannot be a "minimal support"; let us hence assume that $\qE$ is a "core". 
        By contradiction, suppose that $g:\qEneq[E'] \homto \qEneq$ 
        but $\qEneq \not\homto \qEneq[E']$ for some $E' \in \Equivs$. 
        This means that $g(\qE[E'])$ does not contain all "atoms" of $\qE$: let $\alpha$ be one of these missing "atoms". See \Cref{fig:boundedsjw} for an illustration.
        Note that $h(\alpha) \in S$ cannot be in $h(g(\qE[E']))$ since $h$ "atom injective"; hence $h(g(\qE[E'])) \subsetneq S$ contradicting the fact that $S$ is a "minimal support" of $q$.
        \begin{figure}[tb]
      \begin{tikzpicture}
\coordinate (00) at (0.25, -0.25);
\coordinate (01) at (-1.25, 1.75);
\coordinate (02) at (-1, 1);
\coordinate (03) at (0, 0);
\coordinate (04) at (-0.5, 1.5);
\coordinate (05) at (-3.25, 1.5);
\coordinate (06) at (-2.25, 0.5);
\coordinate (07) at (-2.25, 1);
\coordinate (08) at (-1, 0.5);
\coordinate (09) at (-2.5, 1.5);
\coordinate (010) at (-0.75, 1.75);
\coordinate (011) at (4.5, -0.25);
\coordinate (012) at (2.5, 1.75);
\coordinate (013) at (2.75, 1);
\coordinate (014) at (4.25, 0);
\coordinate (015) at (3.5, 1.5);
\coordinate (016) at (2.75, 0.5);
\coordinate (017) at (3.5, 2);
\coordinate (018) at (0.5, 0.75);
\coordinate (019) at (2.25, 0.75);
\coordinate (020) at (-3, 1.75);
\coordinate (021) at (-0.25, 2);
\begin{pgfonlayer}{nodelayer}
\node [] (0) at (00) {};
\node [] (1) at (01) {};
\node [] (2) at ($(02)+(-.125,0)$) {};
\node [] (3) at ($(03)+(.125,-.125)$) {};
\node [color=red, draw] (4) at (04) {\footnotesize $\alpha$};
\node [] (5) at (05) {};
\node [] (6) at (06) {};
\node [] (7) at (07) {};
\node [] (8) at ($(08)+(-.125,0)$) {};
\node [] (9) at (09) {};
\node [] (10) at (010) {};
\node [] (11) at (011) {};
\node [] (12) at (012) {};
\node [] (13) at ($(013)+(-.125,0)$) {};
\node [] (14) at ($(014)+(.125,-.125)$) {};
\node [color=red, draw] (15) at ($(015)+(0,-.1)$) {\footnotesize $h(\alpha)$};
\node [] (16) at (016) {};
\node [] (17) at (017) {$h(q_E)=S$};
\node [] (18) at (018) {};
\node [] (19) at (019) {};
\node [color=blue] (20) at (020) {$q_{E'}^{\neq}$};
\node [] (21) at (021) {$q_{E'}^{\neq}$};
\end{pgfonlayer}
\begin{pgfonlayer}{edgelayer}
\draw [rounded corners] (1.center) rectangle (0.center);
\draw [color=blue, rounded corners] (2.center) rectangle node[midway] {$g(q_{E'})$} (3.center);
\draw [color=blue, rounded corners] (5.center) rectangle (6.center);
\draw [rounded corners] (12.center) rectangle (11.center);
\draw [color=blue, rounded corners] (13.center) rectangle node[midway] {$h(g(q_{E'}))$} (14.center);
\draw [->,>=stealth] (18.center) to node[midway, yshift=2em] {\footnotesize (atom injective)} node[midway, above] {\footnotesize hom} node[midway, below] {\footnotesize $h$} (19.center);
\draw [->,>=stealth, bend right=75] (10.center) to node[midway] {/} (9.center);
\draw [color=blue,->,>=stealth] (7.center) to node[sloped, midway, above] {\footnotesize hom} node[sloped, midway, below] {\footnotesize $g$} (8.center);
\end{pgfonlayer}
\end{tikzpicture}%
   
            \caption{Visual support for proof of \Cref{lem:niceEquiv-exists}.}
            \label{fig:boundedsjw}
        \end{figure}

        \smallskip

        \proofcase{\ref{it:niceEquiv-exists:2} $\Rightarrow$ \ref{it:niceEquiv-exists:1}} 
        By means of contradiction, suppose $S$ is the $h$-image for $h : \qEneq \homto D$ (i.e., $h(\qE) = S$) where $E \in \niceEquivs$, but $S$ is not minimal since there is some $g: q \homto S'$ with $S' \subsetneq S$.
        Let $E' \in \Equivs$ be the equivalence relation "induced by@@equivrel" $g$, so that $f(\qE[E'])=S'$ for some $f: \qEneq[E'] \homto D$ -- as before, note that $f$ "atom injective": $|\atoms(\qE[E'])| = |S'|$\IfApx{ (cf.\ \Cref{cl:induced-size})}{}.
        Since $f(\qE[E']) \subseteq h(\qE)$, it follows that $\qEneq[E'] \homto \qEneq$\IfApx{ (cf.\ \Cref{cl:compose-homs})}{}, and hence that, since $E \in \niceEquivs$, we must also have $\qEneq \homto \qEneq[E']$.    
        By composing the "homomorphisms" $\qEneq \homto \qEneq[E']$ and $f : \qEneq[E'] \homto S'$ we obtain that $|S'|$ is the number of "atoms" of $\qE$\IfApx{ (by \Cref{cl:induced-size})}{}.
        But this is in contradiction with what we have already established, namely, that the number of "atoms" of $\qE$ is $|S|$ via the "atom injective" "homomorphism" $h$, and that $|S|\neq |S'|$ since $S' \subsetneq S$.    
    \end{proof}

    \begin{toappendix}
        \begin{claim}[taken from {\cite[Claim~50]{ourpods25arxivV3}}]
            \label{cl:induced-size}
        If a homomorphism $h: \qEneq \homto D$ is so that $h(\qE)=S$, then $|S|$ is the number of "atoms" of $\qE$
        \end{claim}
        \begin{nestedproof}
            The proof can be found in the cited paper.
        \end{nestedproof}
        \begin{claim}[taken from {\cite[Claim~54]{ourpods25arxivV3}}]
            \label{cl:compose-homs}
            If for $E,E' \in \Equivs$ there are "homomorphisms" $h:\qEneq \homto D$ and $h':\qEneq[E'] \homto D$ with $h'(\qE[E']) \subsetneq h(\qE)$, then there is a "homomorphism" $\qEneq[E'] \homto \qEneq$.
        \end{claim}
        \begin{nestedproof}
            The proof can be found in the cited paper.
        \end{nestedproof}
    \end{toappendix}

    Since the cardinalities of the equivalence classes in $\Equivs$ and $\niceEquivs$ are bounded by a constant (cf.\ \Cref{rk:constant-size-equivalences}), they can be stored and manipulated using only logarithmic space. The complexity statement then follows from \Cref{lem:testing-qE-homs}.
\end{proof}

With the previous lemma in place, \Cref{thm:boundedSJ-upper} follows:
\begin{proof}[Proof sketch of \Cref{thm:boundedSJ-upper}]
    We are given a "fact" $\aFact=R(\vect c)$, "database" $D$ and a "CQ" $q$.
    We first need to test if  $q$ meets the hypothesis. We test $q \in \classSJk$ in $\NL$ due to \Cref{lem:self-join-width-testing}-(a), and for the second statement we test "hypertree width" in $\logCFL$ due to \Cref{thm:boundedtw-logcfl}.

    We now compute the set $\Mergealt$ in $\logSpace$ via \Cref{lem:self-join-width-testing}-(b), and we then iterate in logspace over all $E \in \Equivs$  checking if there is one such $E$ meeting the following criteria. 
    We first check that  $E \in \niceEquivs$ in $\logSpace$ by \Cref{lem:niceEquiv-exists}. We next check that there is some $R$-"atom" $R(\vect t)$ in $\qE$ `"consistent@@sjf"' with our input fact $R(\vect c)$, that is, so that $R(\vect t) \homto R(\vect c)$.
    Consider now the result $\hat q$ of replacing in $\qEneq$ every $\vect t\1i$ which is a variable with the constant $\vect c\1i$. Observe that, due to \Cref{lem:niceEquiv-exists}, any "homomorphic image@@cq" of $\hat q$ is a "minimal support" of $q$ containing $\aFact$, and further if there exists a "minimal support" of $q$ containing $\aFact$, there must be some $E \in \niceEquivs$ and "atom" "consistent@@sjf" with $\aFact$ as described before.
    All these operations can be performed in logarithmic space since there are logspace transductions from $q$ to $\qEneq$, and from $\qEneq$ to $\hat q$, which can be composed.%
    Note that the "hypertree width" of $\hat q$ is at most that of $q$ plus $k$.

    We finally check $D \models \hat q$ by a call to the "query evaluation problem". This is in $\NP$ in general \cite[Theorem 7]{ChandraM77}, or in $\logCFL$ if we started with a class of bounded "hypertree width" by \Cref{thm:boundedtw-logcfl}.
    
    The lower bounds hold by \Cref{prop:evaleasier} due to equivalent bounds for "query evaluation" of "self-join free" "CQs".
\end{proof}

\section{Relevance for Ontology-Mediated Queries}\label{sec:relevance.omq}
Now that we have a clear picture of the combined complexity
of "relevance" for different classes of "CQs", we shall push further 
and consider "ontology-mediated queries" $(\T, q)$ where $\T$ is a description logic "ontology" and $q$ is a "CQ". 
We will again be interested in understanding how the complexity of the
"relevance" task compares to that of "query evaluation". 

\subsection{New Sources of Hardness}\label{sec:sources}
The addition of an "ontology" introduces new sources of hardness for the "relevance problem",
which make the problem difficult even when restricted to "atomic queries". Indeed, 
a first source of hardness stems from the ability to capture reachability in the data,
which in DLs can be done using qualified existential restrictions, %
present in $\mathcal{EL}$ and its extensions. This makes relevance 
intractable even in data complexity: 

\begin{proposition}[\cite{DBLP:conf/ecai/CeylanLMV20}]\label{prop:elhard}
"Relevance" for %
$(\L,\AQ)$ is $\NP$-hard in  data complexity (hence also in combined complexity)
for any DL $\L$ that %
can express $\exists R. A \sqsubseteq A$.
\end{proposition}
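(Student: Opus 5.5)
We reduce from $\mathsf{SAT}$ (or 3SAT) in data complexity, fixing a TBox $\T$ containing the axiom $\exists R.A \sqsubseteq A$ and the atomic query $q = A(x)$. We may use $\exists x.A(x)$, or $A(a)$ for a distinguished individual $a$. The key effect of this axiom is that $A$ propagates backward along $R$-edges. Consequently, a minimal support for $A(a)$ is either a single fact $A(a)$, or an $R$-path from $a$ to some individual $b$ with $A(b)\in\A$ together with that fact $A(b)$. Minimality forces the path to be simple. Moreover, it must not pass through any other individual carrying an $A$-fact, since otherwise a shorter prefix would already suffice. So relevance of a fact $R(u,v)$ amounts to asking whether some simple $R$-path $a \to \dots \to u \to v \to \dots \to b$ ends at an $A$-marked vertex and has no earlier $A$-marked vertex.

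This reduces the problem to the known $\NP$-hard problem of deciding whether a directed graph contains a simple path between two fixed vertices that passes through a given edge (or vertex). That problem is $\NP$-hard by a standard reduction from $\mathsf{HamPath}$, or from 3SAT via the two-disjoint-paths problem of Fortune, Hopcroft and Wyllie. Since the query and TBox are fixed, only the ABox varies, which gives hardness in data complexity.

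\textbf{Construction.} Given an instance of the simple-path-through-edge problem, namely a digraph $G$ with vertices $s,t$ and an edge $e=(u,v)$, we build the ABox $\A_G = \{R(x,y) \mid (x,y)\in E(G)\} \cup \{A(t)\}$. We query $A(s)$, or equivalently use $\exists x.A(x)$ after making $s$ the only start by adding a fresh concept $B$ and taking the query $\exists x. B(x)\wedge A(x)$ with $B(s)$ in $\A_G$. Here $B$ does not occur in $\T$ and $B(s)$ belongs to every support, so this stays within a simple query shape. If the class requires a strictly atomic query, we instead use the TBox-free variant in which $A$ is asserted only at $t$ and the query is $A(s)$, which is atomic. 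The KB is consistent because there are no negative axioms.

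\textbf{Correctness.} Minimal supports correspond exactly to simple $s$–$t$ paths together with $\{A(t)\}$: any support contains an $s$–$t$ path, and a non-simple walk contains a simple subpath. Hence $R(u,v)$ is relevant iff some simple $s$–$t$ path uses $(u,v)$.

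\textbf{Main obstacle.} The step needing care is the $\NP$-hardness of deciding a simple $s$–$t$ path through a prescribed edge in a directed graph. We would cite it as $\NP$-complete, since it generalizes 2-disjoint paths ($s\to u$ and $v\to t$ vertex-disjoint). We would also check that the characterization of minimal supports really requires simple paths, which it does because cycles can be cut.
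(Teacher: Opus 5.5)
Your proof is correct and is essentially the paper's own: the same ABox $\mathcal{A}_G\cup\{A(t)\}$, the TBox $\{\exists R.A\sqsubseteq A\}$ and the atomic query $A(s)$, reducing from the question of whether an edge lies on a simple $s$--$t$ path in a digraph (the paper cites a lemma of Khalil and Kimelfeld for that problem's $\mathsf{NP}$-hardness, while you derive it from directed two-disjoint-paths, which is also fine). Drop the detour through $B$, since $\exists x.\,B(x)\wedge A(x)$ is not an atomic query, and drop the phrase ``TBox-free'', since the axiom $\exists R.A\sqsubseteq A$ is exactly what propagates $A$ back from $t$ to $s$.
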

\begin{proof} %
Any directed graph $G$ can be represented as an "ABox" $\A_G$ which contains %
$R(v_1,v_2)$ for each directed edge $(v_1,v_2)$. %
Then, to decide whether an edge $(v_1,v_2)$ in $G$ lies on a simple path from $s$ to $t$ (an $\NP$-hard problem, cf.\ \cite[Lemma 5.3]{khalilComplexityShapleyValue2023}),
it suffices to check whether $R(v_1,v_2)$ is "relevant" for the "OMQ" $(\{\exists R. A \sqsubseteq A\}, A(s))$  
w.r.t.\  the "ABox" $\A_G \cup \{A(t)\}$. 
\end{proof}
\noindent Observe  that the preceding result precludes the possibility of obtaining tractability for classes of OMQs defined by bounding any combination of width notions or any other parameters that assign a finite value to each OMQ.

In fact, $\mathcal{EL}$ contains a further source of hardness: concept conjunction. 
Indeed, if the ontology language can express propositional definite Horn clauses, %
then deciding "relevance" is hard (in combined complexity) even for "atomic queries": %

\begin{proposition}\label{prop:omqhard} %
Relevance for the "OMQ" class $(\L,\AQ)$ is $\NP$-hard for any "DL" $\L$ with concept conjunction and "GCIs".
\end{proposition}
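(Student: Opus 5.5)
We reduce from an $\NP$-hard problem about minimal models of propositional definite Horn clauses: given a set $\Phi$ of definite Horn clauses (each of the form $p_1 \wedge \dots \wedge p_m \rightarrow p$, with $m\ge 0$), a goal atom $g$ and an atom $p_0$, decide whether $p_0$ belongs to some inclusion-minimal set of facts $F$ from which $g$ is derivable. The standard source of hardness here is the relevance problem for propositional abduction with definite Horn theories, shown $\NP$-complete by \cite{DBLP:journals/jacm/EiterG95}. If we prefer a self-contained route, we reduce from $\threeSAT$ directly; this route is sketched at the end.

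The encoding into $(\L,\AQ)$ is as follows. For every propositional variable $p$ we take a concept name $A_p$. Each clause $p_1 \wedge \dots \wedge p_m \rightarrow p$ becomes the GCI $A_{p_1} \sqcap \dots \sqcap A_{p_m} \sqsubseteq A_p$. The candidate hypotheses $H$ become the ABox $\A=\{A_h(a) \mid h\in H\}$ over a single constant $a$. The query is the atomic query $A_g(a)$ and the fact is $A_{p_0}(a)$. Consistency of $(\A,\T)$ is trivial because there is no negation. Since only one individual is involved, $S\subseteq \A$ entails $A_g(a)$ iff $g$ follows propositionally from $\Phi \cup \{h \mid A_h(a)\in S\}$. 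Hence minimal supports correspond exactly to minimal explanations, and relevance corresponds to relevance. The reduction is clearly logspace.

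For the self-contained hardness argument from $\threeSAT$, let $\varphi$ have variables $x_1,\dots,x_n$ and clauses $c_1,\dots,c_m$. The hypotheses are the literals $x_i,\bar x_i$, together with a fresh hypothesis $z$ whose fact is the one we query. We add rules $\ell \rightarrow c_j$ for each literal $\ell$ occurring in $c_j$, and $x_i \rightarrow v_i$, $\bar x_i \rightarrow v_i$. We add the rule $v_1\wedge\dots\wedge v_n\wedge c_1\wedge \dots\wedge c_m\wedge z \rightarrow g$. We also add rules $x_i\wedge \bar x_i \rightarrow g$, so that any set containing a complementary pair already derives $g$ without $z$. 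The intended equivalence is then: $z$ lies in a minimal support iff there is a consistent literal set choosing exactly one of $x_i,\bar x_i$ for each $i$ and hitting every clause, i.e.\ iff $\varphi$ is satisfiable. Minimality of such a set holds because removing $z$ or any literal kills some $v_i$ or the $z$-premise. Conversely, any minimal support containing $z$ cannot contain a complementary pair, for otherwise removing $z$ still leaves a support. It must contain some literal for each $i$ to obtain $v_i$, so it encodes an assignment, and it must derive every $c_j$, so that assignment satisfies $\varphi$.

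The main obstacle is this minimality bookkeeping. We must check that the shortcut rules $x_i\wedge\bar x_i\rightarrow g$ do not create unintended minimal supports containing $z$, and that no proper subset of the intended support derives $g$. Both checks reduce to the observation that $g$ has only the two kinds of derivation listed above. Everything else is a direct syntactic translation that uses only conjunction on the left-hand side of GCIs.
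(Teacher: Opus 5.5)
Your proposal is correct, and your self-contained route from 3SAT is essentially the paper's own proof. The paper uses literal concepts $P_i,N_i$ and clause concepts $C_j$ with $P_i \sqsubseteq C_j$ and $N_i \sqsubseteq C_j$. It adds the shortcut axioms $P_i \sqcap N_i \sqsubseteq A$ and the single conjunctive axiom $X \sqcap \bigsqcap_{j} C_j \sqsubseteq A$, and asks whether $X(d)$ is relevant for $A(d)$.

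The only difference is your atoms $v_i$, which force exactly one literal per variable. They let you check directly that the intended set $L\cup\{z\}$ is itself a minimal support. The paper does without them for two reasons. First, when $L$ has no complementary pair, every support inside $\{X(d)\}\cup L$ must contain $X(d)$, so some minimal one does. Second, in the converse direction, a consistent partial assignment that hits every clause already witnesses satisfiability.

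Your primary route is genuinely different. Instead of building a gadget, you observe that, over a single individual, the minimal supports of $A_g(a)$ coincide with the $\subseteq$-minimal explanations of definite Horn abduction. You then import the known $\mathsf{NP}$-hardness of relevance for that problem. This explains conceptually why conjunction causes hardness, since the DL problem literally contains Horn abduction relevance. The cost is dependence on the cited theorem, plus two small adjustments you should state explicitly:
\begin{itemize}
\item Several manifestations need a fresh goal atom with the rule $m_1\wedge\dots\wedge m_k\rightarrow g$.
\item A body-less clause $\rightarrow p$ becomes $\top\sqsubseteq A_p$, which the hypothesis on $\mathcal{L}$ does not guarantee. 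You can either assume $\top$, or first eliminate the atoms derivable from $\Phi$ alone. That preprocessing is polynomial-time but not obviously logspace, which is still enough for $\mathsf{NP}$-hardness.
\end{itemize}
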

\begin{proof}
We reduce from $\SAT$. Consider a "CNF" formula $\phi \defeq \bigwedge_{j=1}^m c_j$ with variables in $V\defeq\{v_1 \dots v_n\}$.
We build the "AQ" $q\defeq A(d)$, the "ABox" $\A\defeq\{X(d),P_i(d),N_i(d) \mid 1 \leq i \leq n\}$, and the %
"TBox" $\T\defeq \T_t \cup \T_p \cup \T_n \cup \T_c$ with:
\[%
\begin{array}{ll}
\T_t : \{P_i \sqcap N_i \ic A \mid i\in [n]\} &\T_p : \{P_i \ic C_j \mid v_i \in c_j\}\\
\T_c : \{X\sqcap\bigsqcap_{j=1}^m C_j \ic A\} & \T_n : \{N_i \ic C_j \mid \overline{v}_i \in c_j\}
\end{array}%
\]
It can be verified that the
fact $X(d)$ is "relevant" to $(\T,q)$ w.r.t.\ $\A$  iff $\phi$ is satisfiable.
\end{proof}
\noindent This implies that  if the "DL" allows for concept conjunction, then we cannot hope to identify tractable classes of "OMQs" by bounding parameters which are dominated by the query size, since NP-hardness holds already for "atomic queries". 

In view of these results, the only "DLs" %
for which we can hope to obtain tractability 
results by imposing conditions on the query 
are logics that contain neither conjunction nor admit %
qualified existential restrictions on the left-hand-side of axioms. 
This naturally leads us to explore core fragments of the "DL-Lite" family, which verify these requirements. 

\subsection{General Results for DL-Lite}
We shall henceforth focus on "OMQs" $(\T,q)$ whose "TBox" $\T$ is formulated in $\dlliter$. %
We consider $\dlliter$ since it is a well-known core dialect (notably underlying the OWL 2 QL profile)
for which the combined complexity of "OMQ evaluation" has been well explored \cite{DBLP:journals/jacm/BienvenuKKPZ18}.  

First, we determine the combined complexity of testing "relevance" for arbitrary "OMQs" in $(\dlliter, \CQ)$,
obtaining the same complexity as for "CQs" in databases:

\begin{proposition}\label{prop:omqsighard}
The "relevance problem" for $(\dlliter, \CQ)$ is $\Sigtwo$-complete.
\end{proposition}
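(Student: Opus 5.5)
The plan is to prove the two bounds separately, each by invoking an earlier result.

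\emph{Membership.} I would apply \Cref{prop:rvltinsig2}. By \Cref{omq-compl-eval}, query evaluation for $(\dlliter,\CQ)$ is in $\NP$, so relevance lies in $\NP^{\NP}=\Sigtwo$. I would also make sure the guessed subsets behave correctly. The input guarantees that $(\A,\T)$ is "consistent". Consistency is preserved under taking subsets, because removing facts cannot introduce a contradiction. Hence every guessed $S\subseteq\A$ yields a consistent KB $(S,\T)$, and the oracle calls ``$S\models Q$'' and ``$S\setminus\{g\}\not\models Q$'' are ordinary NP and coNP queries. Concretely, the procedure guesses $S\ni f$, then performs one NP check that $(S,\T)\models q$, and $|S|$ coNP checks that $(S\setminus\{g\},\T)\not\models q$ for each $g\in S$. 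All of this fits within $\Sigtwo$.

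\emph{Hardness.} I would reduce from relevance for plain "CQs", which is $\Sigtwo$-hard by \Cref{th:cqsighard}, already for constant-free CQs over a "binary signature". The reduction maps the instance $(q,D,f)$ to $(\T,q)$ with $\T=\emptyset$, the ABox $\A=D$, and the same fact $f$. An empty TBox is a legal $\dlliter$ TBox. With no axioms every ABox is consistent, so the promise that $(\A,\T)$ is consistent holds automatically. Moreover, $(S,\emptyset)\models q$ iff $S\models q$, so minimal supports and relevance coincide in the two settings, and the reduction is trivially logspace.

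\emph{Main obstacle.} The only subtle step is checking that the signature produced by \Cref{th:cqsighard} fits the DL setting. ABoxes must use only unary and binary predicates, which is exactly what the "binary signature" clause guarantees. The queries must avoid constants or treat them as individual names, which the constant-free clause guarantees (and constants would be harmless anyway). If unary atoms are used, they are simply read as concept names and binary atoms as role names.
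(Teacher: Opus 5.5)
Your proposal is correct and follows the same route as the paper. Membership comes from \Cref{prop:rvltinsig2} together with $\NP$ membership of query evaluation for $(\dlliter,\CQ)$ (\Cref{omq-compl-eval}), and hardness comes from reading the binary-signature instances of \Cref{th:cqsighard} as OMQs with an empty TBox; your added checks (consistency is inherited by subsets, and the empty TBox makes the consistency requirement hold automatically) spell out details that the paper's two-line proof leaves implicit.
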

\begin{proof}
The lower bound directly follows from \Cref{th:cqsighard}, as the hardness proof only uses binary relations. 
The upper bound follows from \Cref{prop:rvltinsig2} and $\NP$ membership for "query evaluation" in $(\dlliter, \CQ)$, see \Cref{omq-compl-eval}. 
\end{proof}

Next, we pinpoint the complexity of well-behaved "OMQ" classes whose "evaluation problem" has been 
previously shown to be tractable (specifically, $\logCFL$-complete): 

\begin{proposition}
The "relevance problem" is $\NP$-complete for the following classes of "OMQs": %
\begin{itemize}
\item class of "OMQs" $(\T,q) \in (\dlliter, \CQ)$ such that $q$ is acyclic and has at most $\ell$ leaves (for any fixed $\ell>1$)
\item class of "OMQs" $(\T,q) \in (\dllitec, \CQ)$ such that $q$ has treewidth at most $m$ (for any fixed $m\geq 1$)
\end{itemize}
\end{proposition}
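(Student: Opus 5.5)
Membership in $\NP$ for both classes follows from \Cref{prop:rvltinsig2}: by \Cref{omq-compl-eval}, "query evaluation" is $\logCFL$-complete (hence in polynomial time) for each class. So guessing a subset $S \subseteq \A$ containing $f$ and checking minimality with polynomially many calls to a polynomial-time evaluation procedure is an $\NP$ algorithm. The checks are that $(S,\T) \models q$ and that $(S\setminus\{g\},\T) \not\models q$ for every $g \in S$. Consistency of the subsets is not an issue: $(\A,\T)$ is consistent and $\dlliter$ is monotone, so every subset of $\A$ is also consistent with $\T$. Testing whether the input OMQ belongs to the class (acyclicity and number of leaves, or treewidth at most $m$ for fixed $m$, and the TBox being in the right dialect) is polynomial as well, in line with \Cref{rk:not-a-promise-pb}.

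For hardness, I will reuse the reduction of \Cref{th:cqbtw} with the empty TBox. An empty TBox belongs to both $\dlliter$ and $\dllitec$, and with an empty TBox an OMQ is just a CQ over a binary signature. The chain CQ $R(x_1,x_2)\land\dots\land R(x_{n+1},x_{n+2})$ is acyclic, has exactly $2$ leaves, and has treewidth $1$. It therefore lies in the first class for every $\ell>1$ and in the second class for every $m\geq1$. The database $D_G$ becomes an ABox over the role name $R$. The remaining conditions of the OMQ relevance problem also hold: $(D_G,\emptyset)$ is trivially consistent, and minimal supports for the OMQ coincide with minimal supports for the CQ. Hence the reduction from $\HamPathst$ transfers verbatim.

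I see no serious obstacle. The one point to verify is the syntactic match: the chain query must count as having at most $\ell$ leaves in the sense used by \Cref{omq-compl-eval}, which requires $\ell\geq 2$ and is why the statement fixes $\ell>1$. Its treewidth must be $1\le m$. The treewidth bound is immediate because the query is a path.
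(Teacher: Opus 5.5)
Your proposal is correct and follows the paper's proof: $\NP$ membership comes from \Cref{prop:rvltinsig2} together with the $\logCFL$ (hence polynomial-time) evaluation bounds of \Cref{omq-compl-eval}, and hardness comes from reading the chain-CQ reduction of \Cref{th:cqbtw} as an OMQ with empty TBox. The details you add are ones the paper leaves implicit: consistency of subsets of $\A$, polynomial-time testing of class membership, and the chain query having two leaves and treewidth $1$.
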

\begin{proof}
The $\NP$ lower bounds follows from \Cref{th:cqbtw}. For the upper bounds,
we combine \Cref{prop:rvltinsig2} with existing $\logCFL$ results for "OMQ" answering in the considered classes, recalled in  \Cref{omq-compl-eval}. 
\end{proof}

The preceding results show that, just as in the plain database setting, 
the worst-case complexity of "relevance" is one level higher than "query evaluation". 
Inspired by the positive impact of restricting "self-joins", we shall next explore
how an analogous notion can be employed to obtain lower complexities
for deciding "relevance" of "OMQs". 

\subsection{Bounded Interaction Width OMQs}
The most obvious way of translating the notion of self-join-free queries 
to the OMQA setting would be to consider "OMQs" $(\T,q)$ where $q$
is a self-join-free "CQ". However, the resulting notion does not have the 
desired properties due to interactions between atoms that arise from the ontology:

\begin{proposition}\label{prop:sjfomq}
The hardness of \Cref{prop:omqsighard} holds with the restriction that the component "CQs" are "self-join free".
\end{proposition}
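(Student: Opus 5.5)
The plan is to reduce from the $\Sigtwo$-hard relevance problem for constant-free "CQs" over a "binary signature" (\Cref{th:cqsighard}). We remove self-joins from the query and push them into a $\dlliter$ "TBox" using role inclusions. The upper bound is already given by \Cref{prop:omqsighard}, so only hardness needs work.

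Let $(q,D,f)$ be an instance of that problem. We build a new instance in three steps.
\begin{itemize}
\item \emph{Renaming query atoms.} For each "atom" $\alpha_i$ of $q$ we introduce a fresh relation name. A binary atom $R(x,y)$ becomes $R_i(x,y)$, and a unary atom $A(x)$ becomes $A_i(x)$. Call the resulting query $q'$; it is clearly "self-join free".
\item \emph{TBox.} We let $\T$ contain the role inclusions $R \sqsubseteq R_i$ and the concept inclusions $A \sqsubseteq A_i$, for every renamed atom.
\item \emph{ABox.} We keep $\A = D$ unchanged, so the original relation names occur only in the data.
\end{itemize}
The "TBox" has no negative inclusions, so $(\A,\T)$ is always "consistent". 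The reduction is clearly logspace.

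The key claim is that for every $S \subseteq D$, we have $S \models (\T,q')$ iff $S \models q$. By \Cref{canmod-prop}, $S \models (\T,q')$ iff $q' \homto \IsubA_{S,\T}$. The canonical model $\IsubA_{S,\T}$ has no "anonymous elements": the TBox contains no axiom with an existential $\exists P$ on the right-hand side, so no $a P_1 \ldots P_n$ with $n \ge 1$ is ever added.

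It remains to read off the interpretation of each fresh name. Since $\T \models R \sqsubseteq R_i$ and nothing else implies $R_i$, we get $R_i^{\IsubA} = \{(a,b) \mid R(a,b) \in S\}$. We also need to rule out inverse-role contributions, and none arise because the TBox contains only positive role inclusions between role names. The concept case is analogous: $A_i^{\IsubA} = \{a \mid A(a) \in S\}$.

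Hence the homomorphisms $q' \to \IsubA_{S,\T}$ are exactly the homomorphisms $q \to S$. The supports coincide, so the "minimal supports" coincide, and $f$ is "relevant" for $q$ in $D$ iff it is "relevant" for $(\T,q')$ in $\A$.

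The main obstacle is only a bookkeeping point: the construction must remain faithful to the definitions. Concretely, the canonical-model description must correctly rule out spurious entailments of the fresh predicates $R_i$ and $A_i$. This check is immediate here, but I would state it explicitly. The construction only uses role and concept inclusions between names, so the hardness even holds for the sublogic of $\dlliter$ allowing just such inclusions.
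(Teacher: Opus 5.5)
Your proof is correct and is essentially the paper's own argument: one fresh predicate per query atom, inclusions linking each original name to its fresh copies, and the data left unchanged, so that supports and hence minimal supports coincide. Your direction $R \sqsubseteq R_i$ is the one that actually works, whereas the paper writes $R_i \sqsubseteq R$, which read literally would make the fresh atoms underivable from $D$.

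One small fix is needed. Under the paper's literal canonical-model definition, a word $aP$ is created whenever $(\mathcal{A},\mathcal{T}) \models \exists P(a)$, even if an ABox fact already witnesses this. So your claim that there are no anonymous elements, and your exact description of $R_i$ in the canonical model, are not literally true, although the conclusion survives. It is simpler to argue directly: interpreting each $R_i$ and $A_i$ exactly like $R$ and $A$ on $S$ gives a model of $(S,\mathcal{T})$, hence $S \models (\mathcal{T},q') \Rightarrow S \models q$, and the converse is immediate.
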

\begin{proof}
   We reduce from the relevance problem of "CQs" with relations of arity 2. Let $q$ be such a query, and $D$ a database on the same signature, which we can see as an "ABox" since it does not have any fact of arity $> 2$. We remove all "self-joins" in $q$ by replacing each instance of a relation $R$ with a fresh $R_i$, along with the axiom $R_i \ic R$ that makes $R_i$ a particular instance of a $R$. Since $D$ does not contain any of those $R_i$, the resulting "OMQ" will behave exactly as $q$ on $D$ and its subsets, hence the equivalence between the two when it comes to the "relevance problem".
\end{proof}

This lead \citeauthor{ourKR25} (\citeyear{ourKR25}) to define a notion of "interaction-free" "OMQ", 
whose purpose is to ensure that an "ABox" fact can only be 
used to satisfy a single atom of the query (in a single way).  
We recall below the definition of interaction-free "OMQs"\footnote{We reformulated slightly the original definition to suit our purposes, but it yields the same notion of interaction-free OMQ.}. 

\begin{definition}[Interacting query atoms]\label{inter-atoms}
Given an "OMQ" $Q=(\T,q) \in (\dlliter, \CQ)$, we say that distinct 
atoms $\alpha,\beta$ of $q$ ""interact"" %
if there exists 
a fact $f$ such that  $(\{f\},\T)\models \alpha$ and $(\{f\},\T)\models \beta$ (with $\alpha$ and $\beta$ treated as Boolean "CQs"). 
An atom $\alpha$ \reintro{interacts} with itself if there exist a fact $f$,  %
"homomorphisms" $h_1: \alpha \homto \IsubA_{\{f\},\T}$ and $h_2: \alpha \homto \IsubA_{\{f\},\T}$, and variable $x\in \vars(\alpha)$ 
such that $h_1(x) \in \const(f)$ and $h_2(x) \neq h_1(x)$.
We denote by $\intro*\intatoms(Q)$ the set of atoms in $Q$ that (self-)"interact". %
\end{definition}

\begin{definition}[Interaction-free OMQ]\label{int-free}
An "OMQ" $Q \in (\dlliter, \CQ)$ %
is ""interaction-free"" if %
$\intatoms(Q)= \emptyset$. 
\end{definition}

\begin{figure}[tb]
\centering
      \begin{tikzpicture}
\coordinate (00) at (-3, 0);
\coordinate (01) at (-1.75, 0);
\coordinate (02) at (-0.5, 0);
\coordinate (03) at (1, 0);
\coordinate (04) at (2.25, 0);
\coordinate (05) at (3.5, 0);
\coordinate (07) at (0, -1);
\coordinate (08) at (-3.5, 0);
\coordinate (09) at (0.5, 0);
\begin{pgfonlayer}{nodelayer}
\node [draw, circle, minimum height=1.5em] (0) at (00) {};
\node at (00) {$x$};
\node [label={$A$}, draw, circle, minimum height=1.5em] (1) at (01) {};
\node at (01) {$y$};
\node [draw, circle, minimum height=1.5em] (2) at (02) {};
\node at (02) {$z$};
\node [draw, circle, minimum height=1.5em] (3) at (03) {};
\node [label={$A$}, draw, circle, minimum height=1.5em] (4) at (04) {};
\node at (04) {$c$};
\node [draw, circle, minimum height=1.5em] (5) at (05) {};
\node (5a) at ($(05)+(1.2em,0)$) {$A$};
\node at (05) {$d$};
\node [below, color=blue] at (5a) {\scriptsize $f_3$};
\node [] (8) at (08) {$q$:};
\node [] (9) at (09) {$\A$:};
\end{pgfonlayer}
\begin{pgfonlayer}{edgelayer}
\draw [->,>=stealth] (0) to node[midway,below,color=blue] {\scriptsize $\alpha$} node[midway,above] {$R$} (1);
\draw [->,>=stealth] (1) to node[midway,below,color=blue] {\scriptsize $\beta$} node[midway,above] {$R$} (2);
\draw [->,>=stealth] (4) to node[midway,above] {$R$} (3);
\draw [->,>=stealth] (4) to node[midway,above] {$R'$} node[midway,below,color=blue] {\scriptsize $f_1$} (5);
\draw [->,>=stealth, in=135, out=45, loop] (5) to node[midway,above] {$R$} node[pos=.6,xshift=-1ex,yshift=1ex,color=blue] {\scriptsize $f_2$} ();
\end{pgfonlayer}
\end{tikzpicture}%
   
\caption{Interacting atoms where $\T\defeq\{R'\ic R, R'\ic R^-\}$.}
\label{fig:interaction}
\end{figure}

\begin{example}\label{ex:interaction}
   Consider the KB $(\A,\T)$ and CQ $q$ depicted in \Cref{fig:interaction}. $\alpha$ "interacts" with $\beta$ via $f_1$, but both also interact with themselves by the two distinct homomorphisms that map to $(c,d)$ in either direction, since $\IsubA_{\{f_1\},\T} \supseteq c \xleftrightarrow{R} d$. This is especially important because, out of $(x,y)\mapsto (c,d)$ and  $(x,y)\mapsto (d,c)$, only the latter witnesses the "relevance" of $f_1$: the former can only be extended with $z\mapsto d$, whose image strictly contains the "minimal support" $\{f_2,f_3\}$.
\end{example}

We propose to generalize "interaction-free" "OMQs" by introducing the notion of "interaction width":

\begin{definition}[Interaction width]
\AP
The ""interaction width"" of an "OMQ" $Q \in (\dlliter, \CQ)$ %
is $|\intatoms(Q)|$.
\end{definition}

Observe that, as expected, "interaction-free" "OMQs" have "interaction width" zero. 
It is also worth noting that, differently from "self-join width", we count the number of "atoms" in $\intatoms(Q)$, not the number of "variables"  in such "atoms" (but this difference is insignificant on "binary signatures").

\begin{example}
The OMQ $Q$ from \Cref{fig:interaction} has "interaction width" 2, 
since $\intatoms(Q)= \{\alpha, \beta\}$. The OMQ $Q'_n=(\T_n,q_n)$ where $q_n \defeq \bigwedge_{i=1}^n R_i(x_{i-1},x_i)$ and $\T\defeq\{R_i \ic R\}$ has "interaction width" $n$ since $\alpha \homto \IsubA_{\{R(c,d)\}}$ for every atom $\alpha$ of $q$, hence $\intatoms(Q'_n)= \atoms(Q'_n)$.
\end{example}

The remainder of the section will be dedicated to establishing 
the following theorem, which shows that by bounding the
"interaction width", we can obtain the same complexity for 
"query relevance" as for "query evaluation". 

\begin{theorem}\label{omq-thm}
The "relevance problem" is in $\NP$ for every subclass of $(\dlliter, \CQ)$ 
having bounded "interaction width". $\logCFL$ (resp.\ $\NL$) membership 
holds if we further require component "CQs" have bounded "treewidth" %
(resp.\ the component "CQs" are "acyclic" with bounded number of leaves). 
\end{theorem}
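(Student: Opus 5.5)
The plan is to adapt the approach of \Cref{thm:boundedSJ-upper}: reduce relevance to polynomially many (logspace-enumerable) evaluation instances of slightly modified OMQs. Let $I \defeq \intatoms(Q)$, with $|I| \le k$.

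\textbf{Structure of minimal supports.} The first step is to describe minimal supports of $Q=(\T,q)$ in $\A$. Over a single fact $f$, $\IsubA_{\{f\},\T}$ is a small tree-like structure. A minimal support $S$ comes with a homomorphism $h: q \homto \IsubA_{S,\T}$. Each atom $\alpha$ is \emph{supported} by a fact: either the fact producing $h(\alpha)$ directly, or the fact whose anonymous subtree contains $h(\alpha)$. (Atoms mapped to anonymous elements are grouped by the tree-shaped component they fall into, which is rooted at a single constant, and that component comes from one fact.) Minimality of $S$ then forces the map from atoms (or atom groups) to supporting facts to be surjective onto $S$. Two distinct atoms can share a supporting fact, or one atom can be satisfied in two different ways by a fact, only if they lie in $I$. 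Atoms outside $I$ therefore behave as in the self-join-free case: each is supported by its own fact, which is not reused.

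\textbf{Guessing the interaction pattern.} Since $|I|\le k$, we enumerate polynomially many \emph{interaction patterns}. Each pattern specifies the following:
\begin{itemize}
\item a partition of $I$ into groups sharing a supporting fact;
\item for each group, the relation and equality type of the shared fact, and how each atom maps into $\IsubA_{\{f\},\T}$ (which term goes to which constant of $f$, and which to anonymous elements);
\item which group, or which non-interacting atom, supports the target fact $f$, together with the consistency of $f$ with that choice.
\end{itemize}
Because $\T$ is in $\dlliter$, the required single-fact entailments are checkable in $\NL$, and the number of choices is $|q|^{O(k)}\cdot|\T|^{O(k)}$.

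\textbf{Reduction to evaluation.} For each pattern $P$, build an OMQ $Q_P$ roughly as follows:
\begin{itemize}
\item collapse the terms according to $P$;
\item replace each group by a single atom over a fresh copy of its relation, with constants fixed to those of $f$ when that group supports $f$;
\item add inequality atoms forcing the guessed groups to be realised by pairwise distinct facts and the non-group atoms to be realised differently.
\end{itemize}
In analogy with $\niceEquivs$, we keep only \emph{nice} patterns. A pattern is nice if no other pattern yields a strictly smaller image, which is tested by homomorphism checks between the constant-size interacting parts, identity elsewhere, as in \Cref{lem:testing-qE-homs}. Then $f$ is relevant iff some nice $P$ has $\A\models Q_P$ via a map that is fact-injective on supports.

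The key claim replaces \Cref{lem:niceEquiv-exists}: minimal supports are exactly the fact-injective images of $Q_P$ for nice $P$. For non-interacting atoms, fact-injectivity is automatic. The reason is that if two such atoms used the same fact, that fact would witness an interaction, and a non-interacting atom cannot be realised in two incomparable ways by one fact. We then check $\A\models Q_P$ using the evaluation algorithms of \Cref{omq-compl-eval}. Treewidth grows by at most $O(k)$, since the inequalities only touch the boundedly many terms of $I$. For the acyclic case, we instead iterate over the anchoring of the collapsed part and run the $\NL$ path-style evaluation. The lower bounds, $\NL$ included, already follow from \Cref{prop:evaleasier}.

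\textbf{Main obstacle.} The hardest part is proving the minimality characterisation in the presence of anonymous elements. A non-minimal support may still be fact-injective yet be dominated by a support that maps some atoms into the anonymous part generated by a fact that is already used elsewhere. This is the phenomenon of \Cref{ex:interaction}. Such collapses involve a self-interacting atom or a pair of interacting atoms, hence only atoms of $I$. So the ``niceness'' comparison need only range over the boundedly many reconfigurations of $I$, and I would prove this first. Keeping all of this within $\logCFL$/$\NL$ rather than merely $\Ptime$ is the secondary difficulty. It is handled by storing patterns in $O(k\log n)$ space and composing logspace transductions.
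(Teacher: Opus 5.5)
\textbf{There is a genuine gap at the evaluation step.} This is exactly where the $\logCFL$ and $\NL$ claims live. \Cref{omq-compl-eval} does not give you what you invoke it for:
\begin{itemize}
\item For $\dlliter$ TBoxes it gives $\logCFL$ only for acyclic CQs with boundedly many leaves.
\item Its bounded-treewidth $\logCFL$ result is restricted to $\dllitec$. With role inclusions, bounded-treewidth evaluation is $\NP$-complete.
\item Nothing in it gives $\NL$ for acyclic bounded-leaf CQs; that case is $\logCFL$-complete in general.
\item It says nothing about your $Q_P$, which carries inequality atoms, fresh relation copies, and a side condition (``via a map that is fact-injective on supports'') that is not an evaluation problem at all.
\end{itemize}
The missing idea is that bounded interaction width is itself what tames evaluation (\Cref{queryeval-biw}). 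Take connected $q$. Suppose a homomorphism into the canonical model sends some variable to an ABox constant and reaches an anonymous element $aP_1\cdots P_n$. Then $n$ distinct role atoms are mapped into the anonymous part generated by a single fact, so for $n\ge 2$ they pairwise interact and $n\le k$. If instead the homomorphism lands entirely in the anonymous part, all atoms are satisfied by one fact, so $q$ has boundedly many atoms. This reduces evaluation to the bounded-depth algorithms of Bienvenu et al., giving $\logCFL$ for bounded treewidth and $\NL$ for acyclic bounded-leaf CQs. Without it, even granting the rest of your argument, you cannot get below $\NP$ for bounded treewidth or below $\logCFL$ in the acyclic case.

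\textbf{The characterization of minimal supports is unproved, and its CQ-style form does not transfer.} You flag this yourself as the main obstacle. Defining niceness via homomorphisms between collapsed queries, and distinctness via term inequalities, relies on atom-injectivity, which fails here:
\begin{itemize}
\item A single fact $R'(c,d)$ with $R'\sqsubseteq R$ and $R'\sqsubseteq R^-$ realizes both $R(c,d)$ and $R(d,c)$.
\item The distinct facts $R(a,b)$ and $R'(a,b)$ leave identical term images.
\item Atoms mapped into anonymous parts leave no term-level trace of their supporting fact.
\item ``Realized by distinct facts'' is a disjunction, not a conjunction of inequalities.
\end{itemize}
The paper avoids patterns altogether. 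It first separates the case where $f$ is potentially relevant to a non-interacting atom $\alpha_f$; there, $f$ is relevant iff $(\A,\T)\models q_{\text{-}\alpha_f}$ (\Cref{type-i}). Otherwise, in a minimal support the atoms of $\intatoms(Q)$ use at most $k$ facts, those facts serve no other atom, and the shared variables are forced onto ABox constants. So the paper enumerates actual sets $S\subseteq\A$ with $f\in S$ and $|S|\le k$, together with maps $h_S$. It checks minimality of $S$ directly, relative to the images of the shared variables, and performs a single evaluation of the remainder $q_{h_S}$ (\Cref{lem:potrel123}). The instantiated constants are simulated with fresh unary concepts so that treewidth and acyclicity are preserved. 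If you want to keep a pattern-based route, your niceness test must be this local, database-level minimality condition on (an abstract copy of) $S$, not a homomorphism test between queries.
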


In what follows, we suppose that we have an input "OMQ" $Q=(\T,q)$ of "interaction width" $k$
and an "ABox" $\A$. %
\AP
We shall call a fact $f \in \A$ ""potentially relevant"" to atom $\alpha$ in $(\T,q)$
if $(\{f\},\T)\models \alpha$. The next lemma distinguishes two kinds of "potentially relevant" facts: %

\begin{lemmarep}\label{pot-relevant-cases}
If a fact $f \in \A$ is "relevant" to $Q=(\T,q)$ w.r.t.\ $\A$, then $f$ is "potentially relevant" to either
\begin{enumerate*}[(i)]
\item\label{pot-relevant-casesa} a single "atom" in $\atoms(q) \setminus \intatoms(Q)$ (and no other atom), or
\item\label{pot-relevant-casesb} one or more "atoms" in $\intatoms(Q)$.  
\end{enumerate*}
\end{lemmarep}
\begin{proof}
Suppose that $f \in \A$ is "relevant" to $Q=(\T,q)$ w.r.t.~$\A$. This means that there is a 
"minimal support" $S \subseteq \A$ for $(\T,q)$ 
that contains $f$. There is thus a homomorphism $h: q \homto \IsubA_{S,\T}$, but 
minimality implies that $q$ does not embed homomorphically into $\IsubA_{S \setminus \{f\},\T}$. 
Due to the "canonical model" 
definition, there must thus exist some atom $\alpha_f \in q$ such that 
$\alpha_f \homto \IsubA_{\{f\},\T}$, i.e.\ $f$ is "potentially relevant" to $\alpha_f$.
First suppose that $\alpha_f \in \atoms(q) \setminus \intatoms(Q)$. 
Then by definition, there cannot exist another atom $\beta_f$ such that 
$\beta_f \homto \IsubA_{\{f\},\T}$, so $\alpha_f$ is the unique atom for which $f$
is "potentially relevant". Otherwise, $\alpha_f \in \intatoms(Q)$, hence $f$ is "potentially relevant"
for at least one atom in $\intatoms(Q)$, as in condition \ref{pot-relevant-casesb}. 
\end{proof}

We start by testing whether the given fact $f \in \A$ is "potentially relevant"
to some atom of $q$, and return no if not. This can be done in $\NL$ by guessing a query atom
and performing atomic "query evaluation". %
We may thus focus on testing "relevance" for 
"potentially relevant" facts of types \ref{pot-relevant-casesa} or \ref{pot-relevant-casesb}. 

Consider first the case where $f$ is "potentially relevant" to a single "atom" $\alpha_f \in \atoms(q) \setminus \intatoms(\T,q)$.
Then there exists %
$h: \alpha_f \homto \IsubA_{\{f\},\T}$, and moreover, every such homomorphism agrees 
on which variables are sent to which "ABox" constants (and which are mapped to the anonymous part). %
Define $q_{\text{-}\alpha_f}$ as the "CQ" obtained from $q$ by removing $\alpha_f$ and replacing variable $x$ by $h(x)$ if $x$ occurs both in 
$\alpha_f$ and another atom of $q$. The following lemma provides a direct reduction of "relevance" to "query evaluation": 

\begin{lemmarep}\label{type-i}
Let $f \in \A$ be "potentially relevant" to $\alpha_f \in \atoms(q) \setminus \intatoms(\T,q)$. %
Then $f$ is "relevant" to $(\T,q)$ w.r.t.\ $\A$ iff $(\A,\T) \models q_{\text{-}\alpha_f}$. %
\end{lemmarep}
\begin{proof}
First suppose that $f$ is "relevant" to $(\T,q)$ w.r.t.\ $\A$, and let $S$ be a minimal support for $(\T,q)$ in $\A$
that contains $f$. It follows that there is a homomorphism $h: q \homto \IsubA_{S,\T}$. Moreover, since 
$f$ is "potentially relevant" to $\alpha_f \in \atoms(q) \setminus \intatoms(\T,q)$, 
there is a witnessing homomorphism $h': \alpha_f \homto \IsubA_{\{f\},\T}$. Since $\alpha_f \in \atoms(q) \setminus \intatoms(\T,q)$,
we know that if $x \in \vars(\alpha_f)$ and $h'(x) \in \const(\A)$,  then $h'(x)=h(x)$. It follows that $h$
also witnesses that $q_{\text{-}\alpha_f} \homto \IsubA_{S,\T}$, which yields $(S,\T) \models q_{\text{-}\alpha_f}$, 
hence $(\A,\T) \models q_{\text{-}\alpha_f}$.

For the other direction, suppose that $(\A,\T) \models q_{\text{-}\alpha_f}$, and let 
$S$ be a minimal support for $(\T,q_{\text{-}\alpha_f})$ in $\A$. We claim that $S \cup \{f\}$
is a minimal support for $(\T,q)$ in $\A$: 
\begin{itemize}
\item We first show that $S \cup \{f\}$ is a support for $(\T,q)$ in $\A$. 
Indeed, we know that there is a homomorphism 
$h: q_{\text{-}\alpha_f}\homto \IsubA_{S,\T}$ and a second homomorphism $h': \alpha_f \homto \IsubA_{\{f\},\T}$
witnessing that $f$ is "potentially relevant" for $\alpha_f$. Now let $h''$ be defined by setting 
$h''(x)=h(x)$ if $x \in \vars(q_{\text{-}\alpha_f})$
and $h''(x)=h'(x)$ if $x \in \vars(\alpha_f)$ (note that $h''$ is well defined as $\vars(q_{\text{-}\alpha_f}) \cap \vars(\alpha_f)=\emptyset$). 
Then by appealing to the canonical model construction, we can see that $h''$ defines a homomorphism 
$q \homto \IsubA_{S \cup \{f\},\T}$. It follows that $(S,\T) \models q$, i.e.\  $S$ is a support for $(\T,q)$.
\item Now we establish minimality. Let us suppose for a contradiction that $S \cup \{f\}$ is not a minimal support, and let $S' \subsetneq S \cup \{f\}$
be a stricter smaller support. If $f \not \in S'$, then $q \homto \IsubA_{S,\T}$. By appealing to the canonical model construction, 
there must exist $f' \in S$ such that $\alpha_f \homto \IsubA_{\{f'\},\T}$, which contradicts our assumption that $\alpha_f \not \in \intatoms(\T,q)$. 
Thus, it must be the case that $S' \subsetneq S$ and $q \homto \IsubA_{S' \cup \{f\},\T}$. 
However, since $S$ is a minimal support for $(\T,q_{\text{-}\alpha_f})$, 
either $f$ is being used to satisfy another atom than $\alpha_f$, 
or there is a homomorphism witnessing $q \homto \IsubA_{S' \cup \{f\},\T}$ that sends $x \in \vars(\alpha_f)$ to 
a constant different from $h'(x)$.
Both options are disallowed since $\alpha_f \not \in \intatoms(\T,q)$, yielding the desired contradiction. \qedhere
\end{itemize}
\end{proof}

It remains to consider the more challenging case, where $f$ is "potentially relevant" to 
some %
atom(s) in $\intatoms(Q)$. The basic idea will be to iterate over subsets $S \subseteq \A$ of at most $k$ facts which 
contain $f$ and make true the subquery given by $\intatoms(Q)$ and can be extended to build a minimal support for $q$.
The following lemma makes precise which properties of $S$ to check in order to conclude that $f$ is "relevant". It refers to the set 
\AP$\intro*\frontiervars(\T,q)$ of variables that occur both in $\intatoms(Q)$ and in
$\atoms(q) \setminus \intatoms(\T,q)$. 

\begin{toappendix}
\begin{lemma}\label{lem:sharedinabox}
Let $x\in \frontiervars(\T,q)$ and $h: q \homto \IsubA_{\A,\T}$. Then $h(x)\in \const(\A)$.
\end{lemma}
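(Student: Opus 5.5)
The plan is to argue by contradiction, using only the structure of the "canonical model" and the definition of self-"interaction". Let $x \in \frontiervars(\T,q)$, so $x$ occurs in some atom $\beta \in \atoms(q) \setminus \intatoms(\T,q)$. Suppose that $h: q \homto \IsubA_{\A,\T}$ satisfies $h(x) = w \notin \const(\A)$, so that $w = aP_1\dotsc P_n$ with $n \geq 1$ is an "anonymous element". I will exhibit a single fact $f$ and two "homomorphisms" $h_1, h_2 : \beta \homto \IsubA_{\{f\},\T}$ with $h_1(x) \in \const(f)$ and $h_2(x) \neq h_1(x)$. This shows that $\beta$ "interacts" with itself, contradicting $\beta \notin \intatoms(\T,q)$.

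\emph{Step 1 (locality).} Since $\beta$ is a single atom of arity at most $2$ and contains $x$, every term of $\beta$ is sent by $h$ to $w$, to its parent $w' = aP_1\dotsc P_{n-1}$, or to a child $wQ$ of $w$. This is because, in $\IsubA_{\A,\T}$, role edges touching an anonymous element only connect it to its parent or its children. Moreover, a "constant" of $q$ occurring in $\beta$ can only be mapped to $w'$, and only when $n=1$ and $w' = a$.

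\emph{Step 2 (the fact).} Write $P \defeq P_n$. Let $c$ be $a$ if $\beta$ contains a constant (which then must be $a$), and otherwise a fresh constant. Let $d$ be fresh. Define $f$ as the "role name" fact expressing $P(c,d)$, that is, $R(c,d)$ if $P = R$ and $R(d,c)$ if $P = R^-$. In $\IsubA_{\{f\},\T}$ I then compare two copies of the local neighbourhood of $w$.
\begin{itemize}
\item The constant $d$: it is a $P$-successor of $c$, and $(\{f\},\T) \models \exists P^-(d)$. So $d$ satisfies every concept implied by $\exists P^-$, and it has an anonymous child $dQ$ whenever $\T \models \exists P^- \sqsubseteq \exists Q$ and $Q \neq P^-$.
\item The "anonymous element" $cP$: it exists since $(\{f\},\T) \models \exists P(c)$. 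It has exactly the concepts $A$ with $\T \models \exists P^- \sqsubseteq A$, and the same children $cPQ$.
\end{itemize}
This matches the concepts and children of $w$ in $\IsubA_{\A,\T}$, by the canonical-model definition.

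\emph{Step 3 (the two homomorphisms).} Define $h_1$ on $\terms(\beta)$ by sending $w' \mapsto c$, $w \mapsto d$ and $wQ \mapsto dQ$. Define $h_2$ by sending $w' \mapsto c$, $w \mapsto cP$ and $wQ \mapsto cPQ$. I then check, case by case over the shapes $A(x)$, $S(x,y)$ and $S(y,x)$ with $y$ mapped to $w'$, $wQ$ or $w$, that both maps are homomorphisms. The roles used are preserved because $\T \models P \sqsubseteq S$ (respectively $\T \models Q \sqsubseteq S$) is exactly what the canonical model uses for the edges in each copy. Then $h_1(x) = d \in \const(f)$ and $h_2(x) = cP \neq d$, which gives the required self-"interaction" of $\beta$.

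The main obstacle is this local-isomorphism check in Step 3. Two points need care: the inverse-role bookkeeping when choosing the orientation of $f$, and making sure that $c$ being an "ABox" constant (possibly equal to a query constant) rather than the true parent $w'$ does not break any atom of $\beta$ that is mapped to $w'$. The latter holds because the only relevant information about $w'$ is its $P$-edge to $w$, which $f$ provides directly.
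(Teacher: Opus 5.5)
Your proof is correct, but it takes a different route from the paper's.

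\textbf{The paper's route.} It uses the \emph{other} atom. Since $x$ is a shared variable, there is an interacting atom $\alpha$ containing $x$ besides the non-interacting $\beta$. Every element of $h(\alpha)\cup h(\beta)$ lies within distance one of the anonymous element $h(x)=aP_1\cdots P_n$. The whole subtree below $aP_1$, together with its edge to $a$, is generated by a single ABox fact $f_w$ entailing $\exists P_1(a)$. Hence $(\{f_w\},\mathcal{T})\models\alpha$ and $(\{f_w\},\mathcal{T})\models\beta$, so $\alpha$ and $\beta$ interact in the two-atom sense, contradicting $\beta$ being non-interacting.

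\textbf{Your route.} You never use $\alpha$. Instead you show that $\beta$ interacts with itself. You build a synthetic fact $P_n(c,d)$ whose canonical model contains two copies of the local neighbourhood of $w$: one with $x$ at the constant $d$, and one with $x$ at the anonymous element $cP_n$.

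\textbf{What your route gains.} You prove a strictly stronger statement: no variable of any non-interacting atom can ever be sent into the anonymous part, whether or not it is shared. You also do not need the single-generating-fact property of DL-Lite$_{\mathcal{R}}$. You only use that the concepts, children and parent edge of an anonymous element are determined by its last role.

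\textbf{What it costs.} The case analysis is longer, and two small points should be made explicit.
\begin{itemize}
\item $(\{f\},\mathcal{T})$ must be consistent. This holds because $\mathcal{I}_{\mathcal{A},\mathcal{T}}$ is a model of $\mathcal{T}$ containing a $P_n$-edge between distinct elements.
\item Your $h_1$ needs the canonical model of $\{f\}$ to contain the entailed assertions $S(c,d)$ whenever $\mathcal{T}\models P_n\sqsubseteq S$. This is the intended reading of the definition; taken literally, the definition lists only asserted role facts between constants.
\end{itemize}

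\textbf{Dependence on the definition.} Your argument relies on the permissive self-interaction clause, which counts $h_2(x)$ being an anonymous element as a different image. The paper's argument uses only the two-atom clause. It would therefore survive a stricter notion of self-interaction, for instance one requiring both $h_1(x)$ and $h_2(x)$ to be constants of $f$, whereas yours would not.
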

\begin{proof}
Assume for a contradiction that $h(x)$ is an "anonymous constant".
By construction of the "canonical model" in $\dlliter$, any "anonymous element" $w= aP_1\dots P_n$ of $\IsubA_{\A,\T}$ can be generated by a fact $f_w\in \A$ (\ie\ $w\in \IsubA_{\{f_w\},\T}$) and this $f_w$ will also generate all intermediate $aP_1\dots P_k$ with $1\leq k<n$ as well as all facts of $\IsubA_{\A,\T}$ over these elements. In particular, if we take $\alpha\in\intatoms(Q)$ and $\beta\in\atoms(q) \setminus \intatoms(\T,q)$ that contain $x$ (as per the definition of $\frontiervars(\T,q)$), and denote by $w$ the "anonymous constant" of $h(\{\alpha,\beta\})$ that is the furthest into the "anonymous" region, then we will have such an $f_w$ \st\  $h(\{\alpha,\beta\})\inc \IsubA_{\{f_w\},\T}$. This constitutes an "interaction" between $\alpha$ and $\beta$ which contradicts $\beta\in\atoms(q) \setminus \intatoms(\T,q)$.
\end{proof}
\end{toappendix}

\begin{lemmarep}\label{lem:potrel123}
Let $f \in \A$ be "potentially relevant" to at least one atom in $\intatoms(\T,q)$.
Then $f$ is "relevant" to $(\T,q)$ w.r.t.\ $\A$ iff there exists a subset $S \subseteq \A$ with $f \in S$
and $|S| \leq k$ and $h_S: \intatoms(Q) \homto \IsubA_{S,\T}$ such that:
\begin{enumerate}
\item\label{lem:potrel123a} if $x \in \frontiervars(\T,q)$, then $h_S(x)\in \const(\A)$
\item\label{lem:potrel123b} $(\A,\T) \models q_{h_S}$ where $q_{h_S}$ is obtained by removing all atoms in $\intatoms(\T,q)$
and replacing $x \in \frontiervars(\T,q)$ by $h_S(x)$
\item\label{lem:potrel123c} there is no $S' \subsetneq S$ and $h_{S'}: \intatoms(Q) \homto \IsubA_{S',\T}$
such that $h_{S'}(x)=h_{S}(x)$ for every $x \in \frontiervars(\T,q)$. %
\end{enumerate}
\end{lemmarep}
\begin{proof}
First suppose that $f$ is "relevant" to $(\T,q)$ w.r.t.\ $\A$.  Let $S_q$ be a minimal support for $(\T,q)$ in $\A$
that contains $f$, witnessed by $h: q \homto \IsubA_{S_q,\T}$. Necessarily, every fact in $S_q$ must be 
"potentially relevant" to some atom in $q$ w.r.t.\ $S$, otherwise it could be removed, contradicting minimality of $S_q$. 
Now let $S^* \subseteq S_q$ be the set of facts in $S_q$ that are "potentially relevant" to at least one atom in $\intatoms(Q)$.
By definition, we have $f \in S^*$. 
If $|S^*| > k $, then we have more facts than "atoms" in $\intatoms(Q)$, so we could remove at least one fact from $S^*$ and still 
satisfy all "atoms" in $\intatoms(Q)$. Note here that we are exploiting the property of core DL-Lite dialects (like $\dlliter$)
where the minimal support of a query cannot contain more facts than the atoms in the query.
Moreover, by Lemma \ref{pot-relevant-cases}, the facts in $S^*$ cannot be used to satisfy the other "atoms" not in $\intatoms(Q)$. 
Thus, assuming $|S^*| > k $ contradicts our assumption of minimality, which yields $|S^*| \leq k$, as required. 

Next we note that from $h: q \homto \IsubA_{S_q,\T}$, we obtain a homomorphism 
$h^*: \intatoms(Q) \homto \IsubA_{S^*,\T}$, simply by setting $h^*(x) =h(x)$ for all variables $x$ in $\intatoms(Q)$. 
Indeed, as already noted, only the facts in $S^*$ are useful for satisfying the atoms in $ \intatoms(Q)$,
meaning that the image of $\intatoms(Q)$ under $h$ maps within $\IsubA_{S^*,\T}$.
It remains to prove that $S^*$ and $h^*$ verify the three conditions. 

For condition \ref{lem:potrel123a}, consider some $x \in \frontiervars(\T,q)$. 
Since $h: q \homto \IsubA_{S_q,\T}$, Lemma \ref{lem:sharedinabox} is applicable and yields $h^*(x)=h(x) \in \const(\A)$. 

For condition \ref{lem:potrel123b}, it suffices to consider the homomorphism $h: q \homto \IsubA_{S_q,\T}$. Indeed, 
by definition of $h^*$, we have $h^*(x) =h(x)$ for all variables $x$ in $\intatoms(Q)$, and in particular, for 
$x \in \frontiervars(\T,q)$. Thus, $h: q_{h^*} \homto \IsubA_{S_q,\T}$, which shows that $S_q, \T \models q_{h^*}$ (hence $\A, \T \models q_{h^*}$). 

Finally, to show condition \ref{lem:potrel123c}, suppose for a contradiction that there exist
$S' \subsetneq S^*$ and $h': \intatoms(Q) \homto \IsubA_{S',\T}$
such that $h'(x)=h^*(x)$ for every $x \in \frontiervars(\T,q)$. In this case, 
let us consider the set of facts $S_q^{-}= S' \cup (S_q \setminus S^*)$ and 
define $h''$ by setting $h''(x)=h'(x)$ for variables occurring in $\intatoms(Q)$,
and $h''(x)=h(x)$ for all remaining $x \in \vars(q)$. By definition, 
$h''$ agrees with $h$ on $ \frontiervars(\T,q)$. Thus, from
$h: q_{h^*} \homto \IsubA_{S_q,\T}$ and $h': \intatoms(Q) \homto \IsubA_{S',\T}$, 
we can infer that $h''$ is a homomorphism $q \homto \IsubA_{S_q^{-},\T}$.
This would mean that $S_q^{-}, \T \models q$, contradicting the minimality of $S_q$. 
We thus conclude that there can be no such $S' \subsetneq S^*$ and $h'$, so condition \ref{lem:potrel123b} is satisfied.

\medskip

For the other direction, let $S \subseteq \A$ and $h_S: \intatoms(Q) \homto \IsubA_{S,\T}$  satisfy all requirements of the lemma statement. 
In particular, $f \in S$ and $|S| \leq k$. We aim to show that $S$ can be extended to a minimal support of $(\T,q)$. 
To this end, let $S_{q}^{-}$ be a minimal support for $(\T, q_{h_S})$ in $\A$,
and $h_{q}^{-}: q_{h_S} \homto \IsubA_{S_{q}^{-},\T}$ a witnessing homomorphism (such $S_{q}^{-}$ and $h_{q}^{-}$ must exist due to condition \ref{lem:potrel123b}). 
Letting $S^* = S \cup  S_{q}^{-}$, we define a homomorphism $h: q \homto \IsubA_{S^*,\T}$
by setting $h(x) = h_S(x)$ if $x$ appears in $\intatoms(Q)$ and $h(x)= h_{q}^{-}(x)$ otherwise. 
Indeed, we know that $q_{h_S}$ was obtained by replacing each $x \in \frontiervars(\T,q)$ by $h_S(x)$,
so the homomorphisms $h_S$ and $h_{q}^{-}$ are compatible. It follows that $S^*$ is a support of 
$(\T,q)$, and it remains to show that $S^*$ is minimal. 

Let us thus suppose for a contradiction that there exists a fact $g \in S^*$ such that $S^* \setminus \{g\}$ is still a support for $(\T,q)$. 
If $g \in S^* \cap S_{q}^{-}$, then this implies that $S_{q}^{-}$ is not a minimal support. 
Indeed, we know that the facts in $S$ are not "potentially relevant" for any atoms in $q_{h_S}$,
so $S_{q}^{-} \setminus \{g\}$ must be a support for $(\T, q_{h_S})$, a contradiction. 
Thus, it must be the case that $g \in S$. Let $S' = S \setminus \{g\}$ and take any 
homomorphism $h_{S'}: \intatoms(Q) \homto \IsubA_{S',\T}$. Due to condition \ref{lem:potrel123b}, there must 
exist $x \in \frontiervars(\T,q)$ such that $h_{S'}(x) \neq h_{S}(x)$. Take $\beta \in \atoms(q) \setminus \intatoms(\T,q)$
with $x \in \vars(\beta)$. Since $\beta \in \intatoms(\T,q)$ and $h_S(x) \in \const(\A)$, 
any homomorphism of $q$ to $\IsubA_{S^* \setminus \{g\},\T}$ must send $x$ to $h_S(x)$, again yielding a contradiction. 
We can thus conclude that $S^*$ is a minimal support for $(\T,q)$, and since it contains $f$, 
this proves that $f$ is "relevant" $(\T,q)$ w.r.t.\  $S^*$ (and hence $\A$).
\end{proof}

This suggests the following procedure for deciding "relevance" of facts of type \ref{pot-relevant-casesb}: %
iterate (in logspace) over all candidate sets $S$ and homomorphisms $h_S$ and check whether the required conditions hold. 
We terminate either when some $S$ has been shown to satisfy the conditions (outputting `"relevant"') or when there are no more candidates to test 
(`not "relevant"'). Note that the second condition requires a "query evaluation" check, the cost of which will depend on the form of the 
query $q_{h_S}$. For the third condition, we must perform a second (logspace) exploration of possible 
$S' \subsetneq S$ and $h_{S'}$. 

If we consider "CQs" of bounded "interaction width", without further structural restrictions, 
we can obtain an optimal $\NP$ upper bound by implementing the sketched procedure using a
non-deterministic polytime Turing machine (using a non-deterministic guess to verify $(\A,\T) \models q_{h_S}$ in condition \ref{lem:potrel123b}).

For the $\logCFL$ and $\NL$ results, we further need to establish the complexity of 
"query evaluation" for structurally restricted classes of "OMQs" of bounded "interaction width":

\begin{theoremrep}\label{queryeval-biw}
"Query evaluation" is in $\logCFL$ for the class of "OMQs" $(\T,q) \in (\dlliter, \CQ)$ with "interaction width" at most $k$ and treewidth at most $m$ (for any fixed $k\geq 0$, $m \geq 1$). 
It is in $\NL$ if we further restrict to $(\T,q)$ such that $q$ is "acyclic" and has at most $\ell$ leaves (for fixed $\ell \geq 2$). 
\end{theoremrep}
\begin{proofsketch}
Assuming w.l.o.g.\ that $q$ is connected, 
we separately test for the existence of a homomorphism 
$h: q \homto \IsubA_{\A,\T}$ that fully maps $q$ into the anonymous part,
or one which sends at least one variable to an ABox constant. 
In the former case, we argue that $|q| \leq k$, 
so we can guess and check a (compact representation of a) 
potential homomorphism in $\NL$. For the second kind of homomorphism, 
we show that all variables are mapped either to an ABox constant or to an 
anonymous element $a P_{1} \ldots P_{n}$ with $n \leq k$, enabling the 
reuse of techniques developed for bounded-depth ontologies \cite{DBLP:journals/jacm/BienvenuKKPZ18}. 
\end{proofsketch}

\begin{proof}
Fix $k\geq 0$, $m \geq 1$, and $\ell \geq 2$. Note that it suffices to provide a procedure for "OMQs" whose component "CQ" is connected, since an arbitrary "OMQ" can be evaluated by separately checking whether each of its connected components is entailed. 
Let us thus take some $Q=(\T,q) \in (\dlliter, \CQ)$
with "interaction width" at most $k$,  %
and suppose w.l.o.g.\ that $q$ is a connected "CQ". 

We will devise procedures of the required complexity for
testing whether $\A \models Q$, or equivalently, $\A, \T \models q$.
By \Cref{canmod-prop}, the latter holds iff there is a homomorphism of $q$ into $\IsubA_{\A,\T}$. 
We shall distinguish two kinds of homomorphisms: those which map at least one variable to an "ABox" constant,
and those which map the whole query within the anonymous part. 

First, we consider how to check for the existence of a homomorphism
which maps at least one variable to an "ABox" constant. In this case, we can
observe that any such homomorphism cannot `reach' an anonymous element $a P_1 \ldots P_n$ if $n > k$,
because connectedness of the query means that we would also need to map variables onto $a$, $a P_1$, \ldots, 
$a P_1 \ldots P_{n-1}$. This would in turn mean that there would be $n$ role atoms that can all mapped within the anonymous part
of a single fact, implying that there are at least $n> k$ "atoms" in $\intatoms(\T,q)$. Since we only need to consider
homomorphisms that map to ABox constants and anonymous elements $a P_1 \ldots P_n$ with $n \leq k$,
we may reuse an existing procedures for $\dlliter$ designed for so-called bounded depth ontologies \cite{DBLP:journals/jacm/BienvenuKKPZ18},
where the maximum length of an anonymous element $a P_1 \ldots P_n$ cannot exceed a particular bound. 
In that paper, $\logCFL$ and $\NL$ procedures are devised respectively for the classes of OMQs with bounded treewidth CQs and acyclic CQs with bounded numbers of leaves, respectively, under the assumption that there is a constant bound on the ontology depth. 
Intuitively, these procedures 
consider mappings of the query variables into the ABox constants and anonymous elements. 
Crucially, the depth bound ensures that every anonymous element can be represented with logarithmic space,
Moreover, it is shown that when restricted to anonymous elements of constant length, 
it can be tested in $\NL$ whether a potential domain element $a P_1 \ldots P_n$ is actually present in 
$\Delta^{\IsubA_{\A,\T}}$ and whether a given such element (resp.\ pair of elements) belongs to $A^{\IsubA_{\A,\T}}$ (resp.\ $R^{\IsubA_{\A,\T}}$). 
Applying these procedures, we are able to determine the existence of a homomorphism that 
maps at least one variable to an "ABox" constant (or more generally, maps all query variables to 
anonymous elements $a P_1 \ldots P_n$ with $n \leq k$), in $\logCFL$ if $q$ has treewidth at most $m$ 
and in $\NL$ if $q$ is acyclic with at most $\ell$ leaves (recall that $m$ and $\ell$ are fixed constants). 

To complete the proof, we must show how to decide the existence of a homomorphism that 
maps the whole query within the anonymous part, possibly using anonymous elements whose length is greater than $k$. 
We first observe that this means that there is 
a single fact $f_q \in \A$ such that $q \homto \IsubA_{\{f_q\},\T}$. This would imply in turn 
that all "atoms" in $q$ belong to $\intatoms(\T,q)$,
and hence $|q| \leq k$. Thus, if $|q| > k$, we can immediately conclude that no such homomorphism 
exists. Otherwise, if $|q| \leq k$, then $q$ has a constant number of "atoms". 
Now if there is a homomorphism of $q$ which maps all variables into the anonymous part, 
then exists some role $T$ such that (a) there is an element 
$w=c P_1 \ldots P_n \in \Delta^{\IsubA_{\A,\T}}$ with $P_n=T$
and (b) it is possible to map $q$ into the subtree rooted at $w$ with some variable $v \in \vars(q)$
sent to $w$. Importantly, we may assume w.l.o.g.\ in (a) 
that $n \leq 2 |\T|$ (due to the way the words in $ \Delta^{\IsubA_{\A,\T}}$ are defined),
and in (b) we don't actually care about $c$ nor the $P_1 \ldots P_{n-1}$ but only about what holds starting from $P_N=T$.
This is because we are considering the subtree rooted of $\IsubA_{\A,\T}$ rooted at $w$, which is fully determined by the role~$T$.
But this means in turn that once we have decided that we map into a tree rooted at $w$ (or any other anonymous element whose final role is $T$),
then we only need to consider homomorphisms that map at most $|q|\leq k$ steps deeper into the anonymous part. 
And since we have only constantly many query atoms (hence variables) to consider,
we can actually guess the function $h$ representing a potential homomorphism, 
by guessing a possibly empty word of length $\leq k$ over $\NRpm$ for each variable in $q$,
and then checking whether $h$ is such that:
\begin{enumerate}
\item it only maps variables to words $R_1 \ldots R_l$ such that  $\T \models \exists T^- \sqsubseteq \exists R_1$,
and $\T \models \exists R_i^- \subseteq \exists R_{i+1}$ for $1 \leq i < l$. 
\item it satisfies the query atoms, meaning that if $A(x) \in q$ and $h(x)=R_1 \ldots R_l$, then $\T \models R_l^- \sqsubseteq A$ (or $\T \models \exists T^- \sqsubseteq A$ if $h(x)=\epsilon$),
and if $S(x,y) \in q$, then either: 
\begin{itemize}
\item $h(x)=R_1 \ldots R_l$ and $h(y) = R_1 \ldots R_{l+1}$ and $\T \models R_{l+1} \sqsubseteq S$, or
\item $h(x)=R_1 \ldots R_{l+1}$ and $h(y) = R_1 \ldots R_{l}$ and $\T \models R_{l+1}^- \sqsubseteq S$ 
\end{itemize}
\end{enumerate}
Thus, to sum up, we guess a role $T$ and verify that 
there exists some element $c P_1 \ldots P_n \in \Delta^{\IsubA_{\A,\T}}$ with $P_n=T$. 
This can be done in $\NL$ by first guessing $c$ and some $P_1$ such that  $(\A,\T) \models \exists P_1(a)$,
 and then sequentially guessing each $P_i$, keeping only two roles in memory at each time, checking whether 
$\T \models P_{i} \sqsubseteq P_{i+1}$. 
Then we guess a function $h$ of the form described above and check whether it satisfies the required conditions.
The function can be represented in logspace because $|q| \leq k$. Moreover, the conditions can also be checked in 
$\NL$ (using the fact that axiom entailment is in $\NL$). 
If we find a role $T$ and function $h$ satisfying the conditions, then this means that 
$q$ can be mapped into the anonymous part of the canonical model (we can obtain the witnessing homomorphism 
by taking any $w= c P_1 \ldots P_n \in \Delta^{\IsubA_{\A,\T}}$ with $P_n=T$ and then appending the word given by $h$).
Conversely, if such a homomorphism exists, then we can use it pick $T$ and $h$. 
We thus have an $\NL$ procedure for checking homomorphisms of this type. 

By sequentially executing the procedures for the two types of homomorphism, 
we obtain a procedure that can decide the existence of any homomorphism of $q$ into $\Delta^{\IsubA_{\A,\T}}$.
The resulting procedure runs in $\logCFL$ provided $q$ has treewidth $\leq m$,
and in $\NL$ if further $q$ is acyclic and with $\leq \ell$ leaves. 
\end{proof}

\begin{toappendix}
\subsubsection*{Further details on the proof of \Cref{omq-thm}}
To complete the proof, we need to explain how we can 
test whether $(\A,\T) \models q_{h_S}$, for the queries $q_{h_S}$ constructed in Lemma \ref{lem:potrel123}. 
Recall first that $q_{h_S}$ is obtained from $q$ by (a) removing all atoms in $\intatoms(\T,q)$,
then (b) replacing $x \in \frontiervars(\T,q)$ by $h_S(x)$. Clearly, (a) cannot adversely affect the query structure. 
So let us now consider $q'_{h_S}$ where we retain (a), but instead of (b), we do the following: 
 add the query atom $A_c(x)$, for each $x \in \frontiervars(\T,q)$ such that $h(x)=c$, with $A_c$ a fresh concept name. 
 We also define an ABox $\A'$ by extending $\A$ with the facts $\{A_c(c) \mid h(x)=c, x \in \frontiervars(\T,q)\}$. 
 It is straightforward to show that
 $$(\A,\T) \models q_{h_S} \quad \text{iff} \quad (\A', \T) \models q'_{h_S}$$
Importantly, however, $q'_{h_S}$ has the same treewidth as $q$, and if $q$ is acyclic, then $q'_{h_S}$ 
has the same number of leaves as~$q$. Thus, we can apply the $\logCFL$ and $\NL$ procedures from 
\Cref{queryeval-biw} to decide whether $(\A', \T) \models q'_{h_S}$. Finally, we note that we can construct $\A'$ from $\A$
and $q'_{h_S}$ from $q_{h_S}$ by means of a logspace transducer, completing the upper bound argument. 
\end{toappendix}

Although \Cref{queryeval-biw} cannot be used directly to evaluate $q_{h_S}$
(as the instantiation of shared variables can make an acyclic query become cyclic,
or increase the treewidth), it is possible to  
simulate constants using fresh unary predicates, thereby retaining $q$'s good structural properties
and obtaining
the desired  $\logCFL$ (resp.\ $\NL$) upper bounds.

It is worth observing that for "interaction-free" OMQs,%
we can only have type \ref{pot-relevant-casesa}  potentially relevant facts. Thus, we can
decide relevance using AQ checks (to establish 
potential relevance), followed by evaluating the OMQ $(\T, q_{\text{-}\alpha_f})$.
This makes relevance for this class of OMQs easily implementable on top of any OMQA system.

\section{Minimal Homomorphisms on Graphs}\label{sec:graph}

\Cref{th:cqsighard} shows that the "relevance problem" for "CQs" is a step higher in the polynomial hierarchy than the corresponding "query evaluation problem", and that the $\sigmaptwo$-hardness already holds for binary signatures. 
We will now study the problem for simplest kind of databases or queries: \emph{graphs}, either directed or undirected, which we call \reintro{digraphs} and \reintro{graphs}, respectively. We show that the problem remains $\sigmaptwo$-complete even on this simple setting, which can be seen as a result of independent interest.

\begin{toappendix}
    What we shall call a \AP""graph"" from now on and without further notice, is a simple undirected graph without any loop or isolated vertex, and a \AP""digraph"" the directed counterpart. In particular, a "digraph" can be seen equivalently either a "database" or a (constant-free, Boolean) "CQ" over a signature containing a single binary relation, but we will rather focus on "(di)graphs@graphs" in the developments of this section.
\end{toappendix}

For "digraphs" $G,G'$, a \AP""$G$-homomorphic image"" on $G'$ is any subgraph $\hat G'$ of $G'$ such that $V(\hat G')= Im(h)$ and $E(\hat G') = \set{(h(v),h(v')): (v,v') \in E(G)}$ for some $h: G \homto G'$. We shall sometimes write $h(G)$ to denote $\hat G'$.
The definition for "graphs" is analogous. Such "$G$-homomorphic image" is said to be ""minimal@@ghim"" if it does not strictly contain any other "$G$-homomorphic image".
We can now define the \AP""Minimal Homomorphism Problem"" %
in its directed ($\DMH$) and undirected ($\UMH$) versions.
\AP
\decisionproblem{$\intro*\UMH$ (resp.\ $\intro*\DMH$) problem}%
                {A pair $G,G'$ of "graphs" (resp. "digraphs"), and an edge $e \in E(G')$.}%
                {Is $e$ in some "minimal@@ghim" "$G$-homomorphic image" on $G'$?}

The problem $\DMH$ is the natural equivalent to the "relevance problem" on a single binary relation:
\begin{lemmarep}\label{lem:poly-equiv-DMG-relevance}
    There are logspace many-one reductions between $\DMH$ and the "relevance problem" for the class of constant-free "CQs" over a single binary relation.
\end{lemmarep}
\begin{proof}
    Let $G = (V,E)$ and $G' = (V',E')$ be "digraphs", with $V=\set{v_1,\dotsc, v_n}$ and $V' = \set{u_1,\dotsc, u_{n'}}$. 
    Consider the constant-free "CQ" $q_G$ obtained from $G$ by replacing each $(v_i,v_j) \in E$ with an atom $R(x_i,x_j)$, where $\set{x_i}_{i \in [n]}$ are pairwise distinct variables.
    Consider also $D_{G'}$ as the "database" obtained from $G'$ by replacing each $(u_i,u_j) \in E'$ with a fact $R(c_i,c_j)$, where $\set{c_i}_{i \in [n']}$ are pairwise distinct constants.
    Observe that an edge $(u_i,u_j)$ is in a "minimal@@ghim" "$G$-homomorphic image" on $G'$ if, and only if, the fact $R(c_i,c_j)$ is "relevant" for the "CQ" $q_G$ on the "database" $D_G$.

    Likewise, from any constant-free "CQ" $q$ over a single binary relation $R$ and database $D$, we can produce "digraphs" $G_q$, $G_D$ by replacing facts or atoms $R(t,t')$ with edges $(v_t,v_{t'})$, and it follows that a given fact is "relevant" if{f} the corresponding edge is in a "minimal@@ghim" $G_q$"-homomorphic image" on $G_D$.
\end{proof}
As we show, it remains $\sigmaptwo$-complete.
\begin{toappendix}
\subsection{Directed Graphs}\label{ssec:dirgraph}
Directed graphs are naturally very close to databases and queries on the signature $\{R\}$, but we can actually go further, be emulating, using a single relation, any fixed signature that only contains relations of arity 2.

\begin{lemma}\label{lem:dbtodigraph}
Let $q$ be a constant-free Boolean "CQ" and $D$ a database, both on the signature $\Sigma\defeq\{R_1,\dots, R_n\}$ only containing arity 2 relations. Then there exist two "digraphs" $G_q, G_D = (V_q,E_q), (V_D,E_D)$ such that:
\begin{enumerate}
\item\label{lem:dbtodigrapha} there exists a bijective mapping $\eta$ from the "homomorphic images" of $q$ in $D$ to the $G_q$"-homomorphic images" on $G_D$ that preserves the inclusion relation;
\item\label{lem:dbtodigraphb} for every fact $f\in D$, there exists $e_f\in E_D$ \st\ for every "homomorphic image" $S$ of $q$ in $D$, $f\in S$ iff $e_f \in \eta(S)$;
\item\label{lem:dbtodigraphc} if $q$ is isomorphic to $D$, then $G_q=G_D$.
\end{enumerate}
\end{lemma}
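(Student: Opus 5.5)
We encode each labelled edge $R_i(a,b)$ by a small directed \emph{gadget} whose shape is specific to $R_i$. For every term $t$ (variable of $q$, or constant of $D$) we create a main vertex $v_t$. For each atom or fact $R_i(t,t')$ we add fresh intermediate vertices $w^1,\dots,w^{m}$ and replace the labelled edge by a directed path $v_t \to w^1 \to \dots \to w^{m} \to v_{t'}$. To this path we attach, say at $w^1$, a distinguishing tail pattern encoding the index $i$; for instance, a directed path of length $i+1$ hanging off $w^1$. We then choose $e_f$ to be the first edge $(v_t,w^1)$ of the gadget of $f$. Property~\ref{lem:dbtodigraphc} is immediate, because the construction is canonical: it depends only on the labelled structure, up to the naming of vertices.

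The central step is a rigidity claim: every homomorphism $h: G_q \homto G_D$ maps each atom-gadget of $R_i(x,y)$ \emph{onto} a fact-gadget of some $R_i(a,b)$, with $h(v_x)=v_a$ and $h(v_y)=v_b$. To force main vertices onto main vertices and to force the correct label, I would make the gadgets rigid. For example, I would mark main vertices with a directed cycle of a length $L$ that occurs nowhere else, say a large prime exceeding the total length of all gadgets. Homomorphisms from a directed $L$-cycle must land on a closed walk of length $L$; by choosing gadget lengths so that gadget paths are acyclic and the only cycles are the marker cycles, marked vertices map to marked vertices. Tails of different lengths distinguish $R_i$ from $R_j$. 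Two complications remain. First, a homomorphism could fold a tail of length $i+1$ into a longer tail of length $j+1$. I would prevent this by ending every tail at a marker cycle of a fresh length $L_i$, with pairwise distinct primes $L_i$. Second, the path from $v_t$ to $v_{t'}$ could be mapped into a wrong path. This is ruled out if all gadget paths have the same length $m$, and intermediate vertices are unmarked and have distinctive degree patterns.

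Given rigidity, I define $\eta(h(q))$ as the image $h(G_q)$. Any homomorphic image of $q$ is determined by its set of facts, and $h(G_q)$ is exactly the union of the fact-gadgets of the facts in $h(q)$, together with their marker cycles. Hence $\eta$ is well defined, and it is injective and inclusion-preserving in both directions. Surjectivity holds because every $G_q$-homomorphic image arises from some $h: G_q\homto G_D$, and rigidity turns such an $h$ into a homomorphism $q\homto D$ with the corresponding image. Property~\ref{lem:dbtodigraphb} follows since $e_f$ belongs only to the gadget of $f$.

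The main obstacle is the rigidity argument. There is a subtlety: marker cycles at main vertices are shared among all facts incident to a constant. Consequently, image inclusion must be defined on the union of gadgets, and the cycle of a vertex $v_a$ is present exactly when $a$ occurs in the image. This is consistent, because $q$ has no isolated variables, so every term lies in some atom. I would first try the prime-cycle marking described above, and verify carefully that no closed walk of length $L$ or $L_i$ exists other than the intended cycles.
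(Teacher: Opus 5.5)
Your construction follows the same idea as the paper's: prime-length cycles serve as rigid markers. The paper puts two cycles of distinct prime lengths $p_{n+1},p_{n+2}$ at each anchor, and a $p_i$-cycle in the middle of a long path for each $R_i$-fact. Your construction does work, but two steps of your rigidity argument fail as written.

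First, you cannot choose lengths so that ``the only cycles are the marker cycles''. A fact $R_i(a,a)$ already gives a directed cycle of length $m+1$ through $v_a$, and facts $R_i(a,b),R_j(b,a)$ give one of length $2(m+1)$; these cycles come from $D$ and $q$ themselves. What you need instead is an arithmetic argument. The tail-end cycles are sink components, so every other closed walk passes through main vertices and has length $\alpha L+\beta(m+1)$ with $\alpha,\beta\geq 0$. Take $L,L_1,\dots,L_n$ to be distinct primes with $L>\max_i L_i$ and $m+1\notin\{L,L_1,\dots,L_n\}$. Then the only closed walks of length $L$ (resp.\ $L_i$) are single traversals of a marker cycle (resp.\ of the end cycle of an $R_i$-tail). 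Your size requirement on $L$ is unnecessary, and the same parameters must be used for $G_q$ and $G_D$.

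Second, this only sends $v_x$ to \emph{some} vertex of a marker cycle, since every rotation of a cycle is a homomorphism. ``Distinctive degree patterns'' cannot close this gap, because homomorphisms do not preserve degrees. Ignoring the tail, the path leaving $v_x$ could be wound along the marker cycle of $v_a$, placing $v_y$ on a non-main marker vertex. The missing step uses your tail:
\begin{itemize}
\item The tail's end cycle must land on the end cycle of some $R_i$-tail of $G_D$. That cycle is at distance exactly $i+1$ from the gadget's first intermediate vertex, and farther from everything upstream.
\item $h(w^1)$ is an out-neighbour of a marker-cycle vertex. Hence it must be the first intermediate vertex of an $R_i$-gadget leaving some $v_a$. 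Its only in-neighbour is $v_a$, so $h(v_x)=v_a$.
\item $v_y$ carries an $L$-marker, and the tail region has no closed walk of length $L$. So the rest of the path cannot enter the tail and must follow that gadget to its end $v_b$. This also handles variables that occur only in incoming atoms.
\end{itemize}
The paper avoids the rotation issue more directly: anchors are the only vertices lying on both a $p_{n+1}$-cycle and a $p_{n+2}$-cycle. Once rigidity is established this way, your definition of $\eta$, the inclusion argument, the choice of $e_f$ and item (c) go through as you describe.
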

\begin{proof}
We build graphs $G_q,G_D$ that precisely encode $q$ and $D$ resp. The construction will be identical for both in order to satisfy condition \ref{lem:dbtodigraphc}, so we shall focus on $D$ for now. The construction from $q$ is identical except that constants must be replaced with existential variables. For this, we start with a set of vertices that are identical to the constants of $D$, which we call \AP""anchor vertices"". Then we connect these vertices with the gadgets depicted in \Cref{fig:dbtodigraph}, designed to imitate the different relations in $\Sigma$. We denote this transformation by $\eta$.

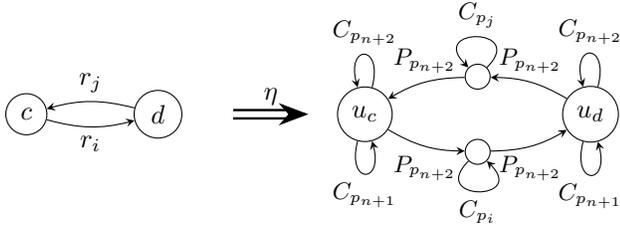
\begin{figure}[tb]
\centering
      \begin{tikzpicture}
\coordinate (00) at (0, 0);
\coordinate (01) at (-1, 0);
\coordinate (02) at (-3.75, 0);
\coordinate (03) at (-2, 0);
\coordinate (04) at (0.75, 0);
\coordinate (05) at (3.75, 0);
\coordinate (06) at (2.25, 0.5);
\coordinate (07) at (2.25, -0.5);
\begin{pgfonlayer}{nodelayer}
\node [] (0) at (00) {};
\node [] (1) at (01) {};
\node [draw, circle] (2) at (02) {$c$};
\node [draw, circle] (3) at (03) {$d$};
\node [draw, circle] (4) at (04) {$u_c$};
\node [draw, circle] (5) at (05) {$u_d$};
\node [draw, circle] (6) at (06) {};
\node [draw, circle] (7) at (07) {};
\end{pgfonlayer}
\begin{pgfonlayer}{edgelayer}
\draw [line width=.2ex,-{Stealth[length=2.5ex]}, double distance=.5ex, draw=black] (1.center) to node[midway, above] {$\eta$} (0.center);
\draw [->,>=stealth, bend right=15] (2) to node[midway, below] {$r_i$} (3);
\draw [->,>=stealth, bend right=15] (3) to node[midway, above] {$r_j$} (2);
\draw [->,>=stealth, bend right=15] (4) to node[midway, below] {\footnotesize $P_{p_{n+2}}$} (7);
\draw [->,>=stealth, in=-45, out=-135, loop] (7) to node[midway, below] {\footnotesize $C_{p_i}$} ();
\draw [->,>=stealth, bend right=15] (7) to node[midway, below] {\footnotesize $P_{p_{n+2}}$} (5);
\draw [->,>=stealth, bend right=15] (5) to node[midway, above] {\footnotesize $P_{p_{n+2}}$} (6);
\draw [->,>=stealth, in=135, out=45, loop] (6) to node[midway, above] {\footnotesize $C_{p_j}$} ();
\draw [->,>=stealth, bend right=15] (6) to node[midway, above] {\footnotesize $P_{p_{n+2}}$} (4);
\draw [->,>=stealth, in=105, out=75, loop] (4) to node[midway, above] {\footnotesize $C_{p_{n+2}}$} ();
\draw [->,>=stealth, in=-75, out=-105, loop] (4) to node[midway, below] {\footnotesize $C_{p_{n+1}}$} ();
\draw [->,>=stealth, in=105, out=75, loop] (5) to node[midway, above] {\footnotesize $C_{p_{n+2}}$} ();
\draw [->,>=stealth, in=-75, out=-105, loop] (5) to node[midway, below] {\footnotesize $C_{p_{n+1}}$} ();
\end{pgfonlayer}
\end{tikzpicture}%
   
\caption{Construction for \Cref{lem:dbtodigraph}. The loops labelled $C_p$ are simple cycles of length $n$, and the arrows labelled $P_{p_{n+2}}$ are simple paths of length 7. The two "anchor vertices" are $u_c$ and $u_d$.}
\label{fig:dbtodigraph}
\end{figure}

We can easily see that $\eta$ is a bijection between the subsets of $D$ and set $\S$ of subgraphs $S$ of $D_G$ such that:
\begin{enumerate}
\item for every "anchor vertex" in $S$, $S$ also contains its associated $C_{p_{n+1}}$ and $C_{p_{n+2}}$;
\item for every "anchor vertex" in $S$, $S$ also contains at least one $P_{p_{n+2}}$ that touches it;
\item for every $P_{p_{n+2}}$ in $S$, $S$ also contains the coupled $P_{p_{n+2}}$ and the intermediate $C_p$.
\end{enumerate}
The image of $\eta$ always has this structure indeed, and $\eta^{-1}(S)$ can be obtained from every $S\in \S$ by simply removing all $C_{p_{n+1}}$s and $C_{p_{n+2}}$s then replacing every $\xrightarrow{P_{p_{n+2}}}C_{p_i}\xrightarrow{P_{p_{n+2}}}$ by $\xrightarrow{R_i}$. We now need to show that every $G_q$"-homomorphic image" $\hat G'$ on $G_D$ is in $\S$, that $\eta^{-1}(\hat G')$ is a "homomorphic image" of $q$ in $D$, and vice versa.

\begin{claim}\label{clm:dbtodigraph}
   For any prime number $p\le p_{n+2}$, the only $C_p$"-homomorphic images" on $G_D$ are the simple $C_p$.
\end{claim}
\begin{nestedproof}
Let $p\le p_{n+2}$ be a prime number. Consider the "homomorphic image" $P$ of a path of length $p$ in $G_D$. Since the "anchor vertices" are separated by $p_{n+2}\ge p$ edges, $P$ can only contain one of them (or two if they are both its endpoints). Additionally, $P$ cannot contain both the $C_{p_{n+1}}$ and the $C_{p_{n+2}}$ attached to an "anchor vertex" because that would require $p_{n+1}+p_{n+2}>p$ edges. $P$ is therefore contained in a subgraph of $G_D$ of the form $\xrightarrow{P_{p_{n+2}}}C_{p'}\xrightarrow{P_{p_{n+2}}}$ for some $p'$. Now the only strongly connected components of such a subgraph is the $C_{p'}$ cycle; this means that any $C_{p}$"-homomorphic image" on $G_D$, which is a particular case of a $C_{p}$"-homomorphic image" since $P_p \homto C_p$, must be contained in $C_{p'}$. In fact it must be $C_{p'}$ as a whole since no proper subset is strongly connected, and, since $p$ and $p'$ are prime numbers, we have $C_{p'}\homto C_{p'}$ iff $p=p'$.
\end{nestedproof}

Let $h(G_q)$ be a $G_q$"-homomorphic image" on $G_D$. By \Cref{clm:dbtodigraph}, it must map every "anchor vertex" of $G_q$ to some "anchor vertex" of $G_D$ because they are the only constants at the intersection of a $C_{p_{n+1}}$ and a $C_{p_{n+2}}$. Then every $u_c \xrightarrow{P_{p_{n+2}}} C_p \xrightarrow{P_{p_{n+2}}} u_d$ must map to a $u_{h(c)} \xrightarrow{P_{p_{n+2}}} C_p \xrightarrow{P_{p_{n+2}}} u_{h(d)}$. At last, every "anchor vertex" in $G_q$ touches a $\xrightarrow{P_{p_{n+2}}}$ (otherwise there would be a variable of $q$ that appears in no atom) so the same can be said of their image by $h$. We have thus established that $h(G_q)\in \S$. 
Further, observe that the $i$-th prime number $p_i$ can be computed in polynomial time because it is bounded by $O(i \, \log(i))$ (eg. by \cite[Theorem 2]{rosser1939}) and testing primality is trivially in polynomial time since we consider $i$ which is smaller or equal to the number of binary relations, hence smaller than the input size.%

If we now consider the database $\eta^{-1}(h(G_q))$, it will have a fact $R_i(c,d)$ iff there is a $u_c \xrightarrow{P_{p_{n+2}}} C_{p_i} \xrightarrow{P_{p_{n+2}}} u_d$ in $h(G_q)$ iff there exist $x,y$ \st\ $h(x)=c$, $h(y)=d$ and $u_x \xrightarrow{P_{p_{n+2}}} C_{p_i} \xrightarrow{P_{p_{n+2}}} u_y$ in $G_q$ iff there exist $x,y$ \st\ $h(x)=c$, $h(y)=d$ and $R_i(x,y)\in q$. This means that $\eta^{-1}(h(G_q))$ is a "homomorphic image" of $q$ in $D$ indeed. Reciprocally, at last, $\eta(S)$ for any "homomorphic image" $S$ of $q$ in $D$ is a $G_q$"-homomorphic image" on $G_D$ by construction. At last, the inclusion relation is trivially preserved since $\eta(S)$ is always a direct encoding of $S$.

This concludes the proof of condition \ref{lem:dbtodigrapha}. For condition \ref{lem:dbtodigraphb}, we simply need to choose $e_f$ for $f=R_i(c,d)$ inside the corresponding $u_c \xrightarrow{P_{p_{n+2}}} C_{p_i} \xrightarrow{P_{p_{n+2}}} u_d$, because the whole structure will be present in $\eta(S)$ iff $f\in S$. As for condition \ref{lem:dbtodigraphc}, given that the construction on $q$ and $D$ is identical, we only need to make sure that we build $G_q$ and $G_D$ on the same set of "anchor vertices" from the beginning, in accordance with the isomorphism between $q$ and $D$.
\end{proof}

We obtain a new graph-theoretic $\Sigtwo$-complete problem:
\end{toappendix}
\begin{theoremrep}\label{th:dmhsigtwo}
$\DMH$ is $\Sigtwo$-complete.
\end{theoremrep}
\begin{proofsketch}
   The upper bound is straightforward. For the lower bound, we reduce from the "relevance problem" over binary relations $R_1,\dotsc,R_n$, known to be $\sigmaptwo$-hard (\Cref{th:cqsighard}). Given an input $D,q,\aFact$ of the "relevance problem" we consider "digraphs" $G_q,G_D$ and an edge $e_\aFact \in E(G_D)$ so that this is a positive instance of $\DMH$ if{f} $D,q,\aFact$ is positive for "relevance". The idea is simply to replace each atom or fact $R_i(a,b)$ with a long directed path from $a$ to $b$ containing a short directed cycle of a prime length $p_i$ in the middle, where $p_1 < \dotsb < p_n$ are the first $n$ prime numbers (which can be computed in polynomial time due to \cite[Thm. 2]{rosser1939}).\IfApx{\footnote{The $i$-th prime number $p_i$ can be computed in polynomial time since it is bounded by $O(i \, \log(i))$ (cf.\ \cite[Thm. 2]{rosser1939}) and testing primality is trivially in polynomial time since $i$ is smaller than the number of binary relations, and hence than the input size.} The long paths are of length $p_n$ to ensure that the only `short' cycles of length $\leq p_n$ are the ones of these gadgets.}{}
   In this way, we can build both $G_q$ from $q$ and $G_D$ from $D$. 
   Since prime $p_i$-cycles can only be homomorphically mapped to $p_i$-cycles, and the long paths prevent creating any other short cycle, a homomorphism $q \homto D$ can be equivalently seen as a homomorphism $G_q \homto G_D$ from prime cycles to prime cycles. By taking any edge $e_\aFact$ of the cycle contained in the replacement of fact $\aFact$ in $G_D$, we have that $e_\aFact$ belongs to a "minimal@@ghim" $G_q$"-homomorphic image" on $G_D$ if{f} $\aFact$ is relevant for $q$ in $D$.
\end{proofsketch}
\begin{proof}
We reduce from the "relevance problem" for constant-free Boolean "CQs" of arity 2 (\Cref{th:cqsighard}).
Given a constant-free Boolean "CQ" $q$ on the signature $\Sigma$, a database $D$ and a fact $f\in D$, we can first assume that $D$ is on the same signature because any other fact would be trivially "irrelevant" to $q$. Then we can apply \Cref{lem:dbtodigraph} to get our instance $G_q,G_d,e_f$ of $\DMH$. From conditions \ref{lem:dbtodigrapha} and \ref{lem:dbtodigraphb}, we get that $e_f$ is in some "minimal $G_q$-homomorphic image" on $G_D$ iff $f$ is "relevant" to $q$ in $D$.
\end{proof}
As a corollary, we have that the "relevance problem" is already hard for "CQs" using a single binary relation.
\begin{corollary}[of \Cref{th:dmhsigtwo,lem:poly-equiv-DMG-relevance}]
    The "relevance problem" for "CQs" is $\sigmaptwo$-complete even on signatures having a single binary relation.
\end{corollary}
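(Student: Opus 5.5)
The corollary follows by chaining two results already in place. For membership, I will invoke \Cref{prop:rvltinsig2}. "Query evaluation" for "CQs" is in $\NP$, so the "relevance problem" for any class of "CQs" lies in $\NP^{\NP} = \sigmaptwo$. This covers in particular "CQs" over a single binary relation, including those with constants.

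For hardness, I start from \Cref{th:dmhsigtwo}, which states that $\DMH$ is $\sigmaptwo$-hard. I then apply the logspace many-one reduction from $\DMH$ to relevance given in \Cref{lem:poly-equiv-DMG-relevance}. Given "digraphs" $G, G'$ and an edge $e=(u_i,u_j)$, it produces three things. The constant-free "CQ" $q_G$ has one atom $R(x_a,x_b)$ per edge $(a,b)$ of $G$. The "database" $D_{G'}$ has one fact $R(c_a,c_b)$ per edge of $G'$. The distinguished fact is $R(c_i,c_j)$.

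The point to check is that "minimal@@ghim" "$G$-homomorphic images" on $G'$ correspond exactly to "minimal supports" of $q_G$ in $D_{G'}$. The "supports" of $q_G$ are the supersets of "homomorphic images@@cq". Hence the "minimal supports" are exactly the inclusion-minimal "homomorphic images@@cq", and these translate edge-for-edge into the minimal $G$-images. Vertices come along automatically, since "digraphs" have no isolated vertices. Membership of $e$ in such an image therefore coincides with "relevance" of $R(c_i,c_j)$.

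Logspace reductions compose with polynomial-time reductions, so relevance for "CQs" over the signature $\{R\}$ is $\sigmaptwo$-hard. Hardness even holds for constant-free queries.

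The only potential obstacle is the correspondence between minimal images and "minimal supports", and it reduces to the observation in the previous paragraph.
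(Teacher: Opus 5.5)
Your proposal is correct and is the argument the paper intends: membership comes from \Cref{prop:rvltinsig2} (equivalently, the upper bound in \Cref{th:cqsighard}), and hardness comes from composing the $\Sigma^p_2$-hardness of $\DMH$ (\Cref{th:dmhsigtwo}) with the reduction of \Cref{lem:poly-equiv-DMG-relevance}, using $D_{G'}$ as the target database. Your explicit check is also right: minimal supports are exactly the inclusion-minimal homomorphic images, and these match the minimal $G$-homomorphic images edge-for-edge because there are no isolated vertices. The paper's lemma leaves this correspondence as an observation.
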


\begin{toappendix}
\subsection{Undirected Graphs}
We show that the $\sigmaptwo$ hardness persists even for the "Minimal Homomorphism Problem" over undirected "graphs".    
\end{toappendix}

\begin{theoremrep}
    $\UMH$ is $\sigmaptwo$-complete.
\end{theoremrep}
\begin{proofsketch}
    The lower bound is more involved than for %
    $\DMH$ since for undirected "graphs" we no longer have that prime $p$-cycles can only be homomorphically mapped to $p$-cycles (in fact they can %
    map to any smaller odd cycle). We can however reduce from $\DMH$ by making use of graph-theoretic techniques known as `replacement methods'. 
\end{proofsketch}
\begin{proof}
    The upper bound is straightforward. 
    For the lower bound, we reduce from the directed version $\DMH$ of the problem already shown in \Cref{th:dmhsigtwo} to be $\sigmaptwo$-hard. 
    Let $G,G',(u',v')$ be an instance of $\DMH$. 
    We shall base the proof on some known graph-theoretic techniques collectively known as `replacement methods'. 
    The following preliminary definitions can be found in \cite[§4.4]{hell2004graphs}.
    Let us define a \AP""replacement graph"" as any triplet $J,s,t$ where $J$ is a "graph" and $s,t$ two distinct vertices thereof. For such a "replacement graph" and a "digraph" $H$, we denote by \AP$\intro*\starRep{H}{J}$ the (undirected) "graph" obtained from $H$ by replacing each directed edge $(x,y) \in E(H)$ by an isomorphic copy $J_{xy}$ of $J$, identifying $x$ with $s$ and $y$ with $t$; assuming that all the copies $J_{xy}$ are pairwise vertex disjoint (except perhaps for $x,y$). 
    Observe that in the notation $\starRep{H}{J}$ the vertices $s,t$ of $J$ are implicit and shall always be clear from the context.
    For any "digraph" "homomorphism" $h: H \homto H'$ let us define the "graph" "homomorphism" \AP$\intro*\homstarRep{h}{J} : \starRep{H}{J} \homto \starRep{H'}{J}$ as:
    \begin{itemize}
        \item $\homstarRep{h}{J}(w) = h(w)$ for every $w \in V(H)$,
        \item for any other vertex $w \in V(\starRep{H}{J}) \setminus V(H)$ which is in a $J_{xy}$ copy in $\starRep{H}{J}$, we define $\homstarRep{h}{J}(w)$ to be the corresponding vertex of the $J_{h(x)h(y)}$ copy in $\starRep{H'}{J}$.
    \end{itemize}

    \begin{claim}\label{cl:goodreplacement-undirected}
    There exists a "replacement graph" $\AP\intro*\goodJ,s,t$ such that every "graph" "homomorphism" $\starRep{G}{\goodJ} \homto \starRep{H}{\goodJ}$ is equal to $\homstarRep{f}{\goodJ}$ for some "digraph" "homomorphism" $f: G \homto G'$. Further, $\goodJ$ does not depend on $G,G'$ and contains an edge non-adjacent to $\set{s,t}$.
    \end{claim}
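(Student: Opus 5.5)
We plan to build $\goodJ$ as a rigid ``core'' gadget, in the spirit of the replacement constructions of \cite[§4.4]{hell2004graphs}. Then we show that any graph homomorphism $\phi:\starRep{G}{\goodJ}\homto\starRep{G'}{\goodJ}$ must send each copy $\goodJ_{xy}$ identically onto a single copy $\goodJ_{x'y'}$, with $s\mapsto x'$ and $t\mapsto y'$. (We read the $H$ in the statement as $G'$.)

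\textbf{Choice of the gadget.} Concretely, we require of $\goodJ,s,t$ the following properties.
\begin{enumerate}[(1)]
\item $\goodJ$ is \emph{rigid}: its only endomorphism is the identity. In particular it is a core and has no automorphism exchanging $s$ and $t$.
\item $\goodJ$ is non-bipartite and $2$-connected. Its odd girth is $g$, and it has a rich family of shortest odd cycles: every edge lies on a $g$-cycle, and every vertex other than $s,t$ lies on $g$-cycles avoiding both $s$ and $t$.
\item $s$ and $t$ are at distance $>g$, and there is an edge far from both.
\end{enumerate}
A candidate is a rigid graph obtained from \cite{hell2004graphs}, e.g.\ an asymmetric arrangement of odd cycles of the same prime length $g$ glued along a rigid skeleton. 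To it we attach $s$ and $t$ at the ends of long pendant odd-cycle chains of different lengths, so that they cannot be swapped. Property (3) immediately yields the required edge non-adjacent to $\set{s,t}$.

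\textbf{Localisation step.} Fix $\phi$ and a copy $\goodJ_{xy}$. The first step is to show that $\phi(\goodJ_{xy})$ lies inside one copy $\goodJ_{x'y'}$ of $\starRep{G'}{\goodJ}$. The tool is that in $\starRep{G'}{\goodJ}$ distinct copies meet only in vertices of $G'$, which are pairwise far apart by (3). Hence a closed walk of odd length $g$ cannot pass through two copies: it would need to travel between two attachment vertices, which costs more than $g$. So every $g$-cycle of $\goodJ_{xy}$ maps into a single copy. Since the $g$-cycles of $\goodJ$ overlap in a connected way by (2), all of them map into the \emph{same} copy. Every edge lies on such a cycle, so the whole of $\goodJ_{xy}$ maps into $\goodJ_{x'y'}$. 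Composing with the isomorphism $\goodJ_{x'y'}\cong\goodJ$ gives an endomorphism of $\goodJ$. By rigidity (1) this endomorphism is the identity, so $\phi(x)=x'$ and $\phi(y)=y'$ with orientation preserved, and $\phi$ agrees with the canonical copy map.

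\textbf{Defining $f$.} Set $f(x)\defeq\phi(x)$ for $x\in V(G)$. This is well defined because it is just the restriction of $\phi$. Since no vertex of $G$ is isolated, each $x$ sits in some copy, and the localisation step gives $(f(x),f(y))\in E(G')$ for every $(x,y)\in E(G)$, so $f:G\homto G'$. Finally, $\phi$ and $\homstarRep{f}{\goodJ}$ coincide on each copy by the previous step, hence $\phi=\homstarRep{f}{\goodJ}$. The gadget depends only on the fixed construction, not on $G,G'$.

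\textbf{Main obstacle.} The delicate step is the localisation argument: ruling out that $\phi$ folds parts of a copy across an attachment vertex into neighbouring copies, or collapses a copy onto a smaller odd structure. This is exactly why prime cycles no longer suffice in the undirected setting. I would first try to take $\goodJ$ with all odd cycles of length exactly $g$ and distances between attachments exceeding $g$, then verify that shortest odd cycles map bijectively onto shortest odd cycles. If that fails, I would fall back on citing the existence of suitable rigid replacement graphs from \cite[§4.4]{hell2004graphs}.
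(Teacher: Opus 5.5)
\paragraph{Verdict: genuine gap.} Your skeleton is the right one. Showing that each copy $J_{xy}$ lands inside a single copy $J_{x'y'}$ is what Hell and Ne\v{s}et\v{r}il call a \emph{strong} replacement graph. Concluding from rigidity that the map is the identity, and then reading off $f$, is their Proposition~4.12, which is what the paper cites. The paper pairs it with the family $G_k,a,b$ of \cite[Fig.~4.7]{hell2004graphs}, which is strong by Prop.~4.16 and rigid by Prop.~4.6.

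\paragraph{The gadget is never produced, and your candidate is not rigid.} The claim is an existence statement, and you never exhibit a $J$ satisfying (1)--(3). Rigidity, on which everything rests, is simply assumed. The candidate you sketch is not rigid, and not even a core. The last $g$-cycle of a pendant chain shares an edge (or a vertex) with its predecessor and has no other attachments. So it retracts onto that predecessor: send its remaining vertices, in order, onto those of the predecessor. Iterating this folds each chain back onto the skeleton, whatever its length. Taking chains of different lengths only prevents swapping $s$ and $t$; it does not prevent folding. Your fallback, ``cite a rigid replacement graph'', does not suffice either. Rigidity of $J$ says nothing about homomorphisms from $J$ into $G'\ast J$, which may spread a copy over several copies. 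Controlling those is the separate ``strong'' property that your (2)--(3) are meant to supply. So you need a gadget that is simultaneously strong and rigid, which is exactly the cited result.

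\paragraph{Two repairs needed in the localisation step.} These apply even granting a gadget.
\begin{itemize}
\item A closed walk of length $g$ can meet two copies through a \emph{single} attachment vertex, without travelling between two attachment vertices. Your distance argument does not cover this case. Use parity instead. Such a walk splits into closed subwalks at that vertex, each lying inside one copy. One of them is odd, hence has length at least $g$, so all the others are empty.
\item ``The $g$-cycles overlap in a connected way'' must mean connected through shared \emph{edges}. Two $g$-cycles of $J_{xy}$ sharing only a vertex $v$ may have images in two different copies glued at the attachment vertex $\phi(v)$. So (2) must require that the graph on shortest odd cycles, with adjacency meaning ``share an edge'', is connected and covers $E(J)$. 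This must then be verified, together with rigidity, for an actual gadget.
\end{itemize}
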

    \begin{nestedproof}
        \AP
        The statement follows directly from well-known results. Without getting into the details of the definitions, it suffices to know that a "replacement graph" may be ""strong@@graph"" or not, and that a "graph" may be ""rigid@@graph"" or not. There exist "replacement graphs" which are both "strong@@graph" and "rigid@@graph". For example, \cite[Figure 4.7]{hell2004graphs} defines a family of "replacement graphs" $G_k,a,b$ which are all both "strong@@graph" \cite[Proposition 4.16]{hell2004graphs} and "rigid@@graph" \cite[Proposition 4.6]{hell2004graphs} (and contain edges non-incident to $a,b$). The claim then follows from taking any such $G_k,a,b$ as $\reintro*{\goodJ},s,t$ and applying \cite[Proposition 4.12]{hell2004graphs}, whose statement reads:
            \textsl{For every "rigid@@graph" "strong@@graph" "replacement graph" $J,s,t$ and every pair $H,H'$ of "digraphs", we have that every "homomorphism" $\starRep{H}{J} \homto \starRep{H'}{J}$ is equal to $\homstarRep{f}{J}$ for some "homomorphism" $f:H \homto H'$.}
    \end{nestedproof}

    Consider the graphs $\starRep{G}{\goodJ}$, $\starRep{G'}{\goodJ}$ given by the claim above, and let $\set{\hat u',\hat v'}$ be an edge of the $\goodJ_{u'v'}$ copy of $\goodJ$ in $\starRep{G'}{\goodJ}$ non-adjacent to $u',v'$.
    \begin{claim}\label{cl:reduction-minhom-wrt-star}
        $G,G',(u',v')$ is a positive instance of the $\DMH$ problem if, and only if, $\starRep{G}{\goodJ},\starRep{G'}{\goodJ},\set{\hat u',\hat v'}$ is a positive instance of the $\UMH$ problem.
    \end{claim}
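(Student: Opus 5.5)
The plan is to show that the map $\hat G' \mapsto \homstarRep{}{}$-style encoding gives an inclusion-preserving bijection between the $G$-homomorphic images on $G'$ and the $\starRep{G}{\goodJ}$-homomorphic images on $\starRep{G'}{\goodJ}$, under which the edge $(u',v')$ lies in an image exactly when $\set{\hat u',\hat v'}$ does. Minimality is then transferred in both directions, which gives the claim.

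First, for any subdigraph $S$ of $G'$, write $\starRep{S}{\goodJ}$ for the subgraph of $\starRep{G'}{\goodJ}$ consisting of the copies $\goodJ_{xy}$ for $(x,y)\in E(S)$. This map from subdigraphs (without isolated vertices) to subgraphs is injective and preserves and reflects inclusion, since $\starRep{S_1}{\goodJ}\subseteq \starRep{S_2}{\goodJ}$ holds iff every copy $\goodJ_{xy}$ with $(x,y)\in E(S_1)$ is present in $\starRep{S_2}{\goodJ}$. Copies are disjoint apart from their endpoints, and $\goodJ$ has an edge not adjacent to $s,t$, so this forces $(x,y)\in E(S_2)$. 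Moreover, because $\set{\hat u',\hat v'}$ lies in the interior of $\goodJ_{u'v'}$ and belongs to no other copy, $\set{\hat u',\hat v'} \in E(\starRep{S}{\goodJ})$ holds iff $(u',v')\in E(S)$.

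Next, the $G$-homomorphic images $h(G)$ on $G'$ must correspond exactly to the $\starRep{G}{\goodJ}$-homomorphic images. In one direction, for $h: G\homto G'$ the map $\homstarRep{h}{\goodJ}$ sends each $\goodJ_{xy}$ isomorphically onto $\goodJ_{h(x)h(y)}$, so its image is exactly $\starRep{h(G)}{\goodJ}$. In the other direction, \Cref{cl:goodreplacement-undirected} says every homomorphism $\starRep{G}{\goodJ}\homto\starRep{G'}{\goodJ}$ equals $\homstarRep{f}{\goodJ}$ for some $f:G\homto G'$, so every image has the form $\starRep{f(G)}{\goodJ}$. I will also check that the vertex sets match: the vertices of $\homstarRep{f}{\goodJ}$'s image are exactly the vertices of the copies it hits, and $G$ has no isolated vertices.

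Finally, combining the two steps, the minimal images correspond under the encoding, because the encoding is a bijection between the two families that preserves and reflects inclusion. So $(u',v')$ lies in a minimal $G$-homomorphic image iff $\set{\hat u',\hat v'}$ lies in a minimal $\starRep{G}{\goodJ}$-homomorphic image. The main obstacle is the reflection of inclusion in the first step, which relies on the interior edge of $\goodJ$; I would settle it by noting that distinct copies share no edges. The rest follows directly from \Cref{cl:goodreplacement-undirected}. The construction of $\starRep{G}{\goodJ}$ and $\starRep{G'}{\goodJ}$ is polynomial because $\goodJ$ is fixed.
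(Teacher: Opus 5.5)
Your proposal is correct and follows essentially the paper's route: both arguments use \Cref{cl:goodreplacement-undirected} to write every homomorphism $\starRep{G}{\goodJ}\homto\starRep{G'}{\goodJ}$ as $\homstarRep{f}{\goodJ}$, and then transfer strict inclusion of images, together with membership of $\set{\hat u',\hat v'}$, through the encoding $S\mapsto\starRep{S}{\goodJ}$. You package this as an inclusion-preserving and inclusion-reflecting bijection between the two families of images, whereas the paper argues each direction separately by contradiction. Your explicit use of the interior edge to show that $\homstarRep{f}{\goodJ}(\starRep{G}{\goodJ})\subsetneq\homstarRep{h}{\goodJ}(\starRep{G}{\goodJ})$ forces $f(G)\subsetneq h(G)$ fills in a step the paper leaves as ``by definition''. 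Strictly, what this needs is that an edge between interior vertices lies in a single copy, not that distinct copies share no edges at all.
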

    \begin{nestedproof}
        \begin{figure}
      \begin{tikzpicture}
\coordinate (00) at (-3, 3);
\coordinate (01) at (-2, 3);
\coordinate (02) at (-1, 3);
\coordinate (03) at (1, 3);
\coordinate (04) at (2, 3);
\coordinate (05) at (3, 3);
\coordinate (06) at (2, 2);
\coordinate (07) at (-3.5, -1.75);
\coordinate (08) at (-1.5, -1.75);
\coordinate (09) at (0.5, -1.75);
\coordinate (010) at (0, 0.25);
\coordinate (011) at (2, 0.25);
\coordinate (012) at (4, 0.25);
\coordinate (013) at (2, -1.75);
\coordinate (014) at (-1, -1.75);
\coordinate (015) at (-0.25, -1.75);
\coordinate (016) at (2, -1);
\coordinate (017) at (2, -0.25);
\coordinate (018) at (-2, 3.75);
\coordinate (019) at (2.5, 3.75);
\coordinate (020) at (-1.5, -1);
\coordinate (021) at (3, 1);
\coordinate (022) at (3, 2);
\coordinate (023) at (4, -1.75);
\coordinate (024) at (0, 1.75);
\coordinate (025) at (0, 0.75);
\begin{pgfonlayer}{nodelayer}
\node [draw, circle] (0) at (00) {};
\node [label={[yshift=-.5ex]\footnotesize $u$}, draw, circle] (1) at (01) {};
\node [label={[yshift=-.5ex]\footnotesize $v$}, draw, circle] (2) at (02) {};
\node [draw, circle] (3) at (03) {};
\node [label={[yshift=-.5ex]\footnotesize $u'$}, draw, circle] (4) at (04) {};
\node [draw, circle] (5) at (05) {};
\node [label={[below, yshift=-1.5ex, xshift=.3ex]\footnotesize $v'$}, draw, circle] (6) at (06) {};
\node [draw, circle,fill=white] (7) at (07) {};
\node [label={[yshift=-.5ex]\footnotesize $u$}, draw, circle,fill=white] (8) at (08) {};
\node [label={[yshift=-.5ex]\footnotesize $v$}, draw, circle,fill=white] (9) at (09) {};
\node [draw, circle,fill=white] (10) at (010) {};
\node [label={[yshift=-.5ex]\footnotesize $u'$}, draw, circle,fill=white] (11) at (011) {};
\node [draw, circle,fill=white] (12) at (012) {};
\node [label={[below, yshift=-1.5ex, xshift=.3ex]\footnotesize $v'$}, draw, circle,fill=white] (13) at (013) {};
\node [] (18) at (018) {$G$};
\node [] (19) at (019) {$G'$};
\node [] (20) at (020) {$G*J$};
\node [] (21) at (021) {$G'*J$};
\node [draw, circle,fill=white] (22) at (022) {};
\node [draw, circle,fill=white] (23) at (023) {};
\end{pgfonlayer}

\newcommand{\Jsubgraph}[2]{%
   \draw [color=red,bend left,decorate,decoration={bumps, amplitude=.4ex, segment length=2ex}] (#1.center) to node[midway, below=-.1, sloped] {\tiny $J$} (#2.center);
   \draw [color=red,bend right,decorate,decoration={bumps, amplitude=-.4ex, segment length=2ex}] (#1.center) to (#2.center);
   \path [color=red] (#1) to node[pos=.05] {\tiny $s$} node[pos=.95] {\tiny $t$} (#2);
}
\begin{pgfonlayer}{edgelayer}
\Jsubgraph{8}{9}
\Jsubgraph{7}{8}
\Jsubgraph{10}{11}
\Jsubgraph{11}{13}
\Jsubgraph{11}{12}
\Jsubgraph{13}{23}
\Jsubgraph{23}{12}
\path (8) to coordinate[pos=.2,yshift=-.75ex] (0014) coordinate[pos=.5,yshift=-1.25ex] (0015) (9);
\path (11) to coordinate[pos=.5,xshift=-1.25ex] (0016) coordinate[pos=.2,xshift=-.75ex] (0017) (13);
\end{pgfonlayer}
\begin{pgfonlayer}{nodelayer}
\node [label={[yshift=-.6ex]\tiny $\hat u$}, draw, circle, inner sep=.5ex] (14) at (0014) {};
\node [label={[yshift=-.6ex]\tiny $\hat v$}, draw, circle, inner sep=.5ex] (15) at (0015) {}; 
\node [label={[right,yshift=-.3ex]\tiny $\hat v'$}, draw, circle, inner sep=.5ex] (16) at (0016) {}; 
\node [label={[right,yshift=-.6ex]\tiny $\hat u'$}, draw, circle, inner sep=.5ex] (17) at (0017) {}; 
\end{pgfonlayer}

\begin{pgfonlayer}{edgelayer}
\draw [->,>=stealth] (0) to (1);
\draw [->,>=stealth] (1) to (2);
\draw [->,>=stealth] (3) to (4);
\draw [->,>=stealth] (4) to (6);
\draw [->,>=stealth] (4) to (5);
\draw [->,>=stealth] (6) to (22);
\draw [->,>=stealth] (22) to (5);
\draw [color=red,->,>=stealth] (14) to (15);
\draw [color=red,->,>=stealth] (17) to (16);
\draw [color=blue, dashed,->,>=stealth, bend right] (0) to (3);
\draw [color=blue, dashed,->,>=stealth, bend left] (1) to node[midway, above] {$f$} (4);
\draw [color=blue, dashed,->,>=stealth, bend right=15] (2) to (6);
\draw [color=blue, dashed,->,>=stealth, bend left=15] (7) to node[midway, above] {$f^{*J}$} (10);
\draw [color=blue, dashed,->,>=stealth, bend left=15] (8) to (11);
\draw [color=blue, dashed,->,>=stealth, bend right=15] (9) to (13);
\draw [color=blue, dashed,->,>=stealth, bend left=15] (15) to (16);
\draw [color=blue, dashed,->,>=stealth, bend left=15] (14) to (17);
\draw [line width=.2ex,-{Stealth[length=2.5ex]}, double distance=.5ex, draw=black] (024) to node[midway, left] {$*J$} (025);
\end{pgfonlayer}
\end{tikzpicture}%
   
            \caption{Visual support for proof of \Cref{cl:reduction-minhom-wrt-star}}
            \label{fig:fstar-homs}
        \end{figure}
        From left to right, we have that $(u',v')$ belongs to a "minimal@@ghim" "$G$-homomorphic image" $h(G)$ via the "digraph" "homomorphism" $h: G \homto G'$. In particular, $h(u,v) = (u',v')$ for some $(u,v) \in E(G)$ -- cf.\ \Cref{fig:fstar-homs} for visual aid. The "graph" "homomorphism" $\homstarRep{h}{\goodJ} : \starRep{G}{\goodJ} \homto \starRep{G'}{\goodJ}$ must contain $\set{\hat u',\hat v'}$ by definition, since its image onto $\goodJ_{uv}$ is precisely the copy $\goodJ_{u'v'}$ where $\set{\hat u',\hat v'}$ lies. We now show that it is "minimal@@ghim". By means of contradiction, suppose it is not "minimal@@ghim", and hence that there is some "graph" "homomorphism" $g: \starRep{G}{\goodJ} \homto \starRep{G'}{\goodJ}$ such that $g(\starRep{G}{\goodJ}) \subsetneq \homstarRep{h}{\goodJ}(\starRep{G}{\goodJ})$. By \Cref{cl:goodreplacement-undirected} this means that $g = \homstarRep{f}{\goodJ}$ for some $f: G \homto G'$. By definition of $\homstarRep{f}{\goodJ}$ this would imply that $f(G) \subsetneq h(G)$ contradicting the "minimality@@ghim" of $h(G)$.

        From right to left, we have that $(\hat u',\hat v')$ belongs to a "minimal@@ghim" $\starRep{G}{\goodJ}$"-homomorphic image" $h(\starRep{G}{\goodJ})$ via the "graph" "homomorphism" $h: \starRep{G}{\goodJ} \homto \starRep{G'}{\goodJ}$. In particular, $h(\set{\hat u,\hat v}) = \set{\hat u',\hat v'}$ for some $\set{\hat u,\hat v} \in E(\starRep{G}{\goodJ})$ -- cf.\ \Cref{fig:fstar-homs} for visual aid.
        By \Cref{cl:goodreplacement-undirected}, $h = \homstarRep{f}{\goodJ}$ for some $f: G \homto G'$. 
        Let us show that $f(G)$ is a "minimal@@ghim" $G$"-homomorphic image" containing $(u',v')$.
        By definition of $\homstarRep{f}{\goodJ}$, $\set{\hat u,\hat v}$ is the edge of some copy $\goodJ_{uv}$ (for some $(u,v) \in E(G)$) corresponding to $\set{\hat u',\hat v'}$ in $\goodJ_{u'v'}$. Hence, $h(u,v) = (u',v')$ and thus $(u',v') \in f(G)$. If $f(G)$ is not "minimal@@ghim", meaning that there is some $g:G \homto G'$ with $g(G) \subsetneq f(G)$, note that $\homstarRep{g}{\goodJ}(\starRep{G}{\goodJ}) \subsetneq \homstarRep{f}{\goodJ}(\starRep{G}{\goodJ})$, which is in contradiction with our hypothesis.
    \end{nestedproof}
    Since, by \Cref{cl:goodreplacement-undirected}, $\goodJ$ is fixed and independent of $G,G'$, we have that \Cref{cl:reduction-minhom-wrt-star} yields a polynomial-time reduction and we conclude the proof.
\end{proof}

\section{Conclusion and Discussion}
\label{sec:discussion}
The main takeaway from our study is that the difficulty of the relevance problem for ($\dlliter$-mediated) CQs hinges fundamentally on the number of atoms (or of variables from different atoms) that may interact with a fact in the following sense: if we restrict the number of self-join variables or interacting atoms,
then %
the complexity of relevance is essentially as good or bad as that of query evaluation. These insights allowed us to identify natural classes of queries for which   relevance can be efficiently decided, and we expect that they will also prove useful when designing practical algorithms for the relevance problem. %
\medskip

\noindent\textbf{Complexity Dichotomies \& Parameterized Complexity} It is a natural question whether bounding the self-join width or interaction width is a necessary condition for obtaining tractability or lower complexity. More precisely, one might try to prove a dichotomy result along the following lines: ``For every recursively enumerable class C of conjunctive queries, if C has bounded self-join width and bounded (hyper)treewidth, then relevance is in polynomial time, and otherwise it is NP-hard''. Unfortunately, it is not at all clear that such a dichotomy holds, as  
there is currently no such %
dichtomy known for the combined complexity of CQ evaluation, %
and existing research provides strong evidence against its existence %
\cite{DBLP:journals/jacm/Grohe07}.  
By contrast, an FPT / W[1]-hard dichotomy has been proven for the \emph{parameterized} evaluation of CQs, %
using the treewidth (modulo cores) as the parameter~\cite{DBLP:journals/jacm/Grohe07}. Hence, a promising direction %
is to investigate the parameterized complexity of the relevance problem.\medskip

\noindent\textbf{Other Ontology Languages} %
We conjecture that the notion of interaction width we introduced for $\dlliter$ can be fruitfully applied also to linear existential rules, as they similarly enjoy the singleton support property (i.e.\ only one fact is needed to satisfy a query atom). For ontology languages which admit conjunction, and hence do not enjoy the singleton support property (since multiple facts may needed to infer a query atom), 
such as $\mathsf{DL\text{-}Lite_{Horn}}$, new restrictions will be needed to define well-behaved OMQ classes admitting tractable relevance computation. 
Indeed, as noted in Section \ref{sec:sources}, the NP-hardness of relevance for atomic queries precludes 
any positive results for classes of $\mathsf{DL\text{-}Lite_{Horn}}$ OMQs obtained via bounding parameters dominated by the query size.
For non-first-order-rewritable DLs like $\mathcal{EL}$, Proposition \ref{prop:elhard} effectively rules out any tractability results in combined complexity, but it may 
 still be worthwhile %
 to develop pragmatic algorithms for deciding relevance in such logics. 
\medskip

\noindent\textbf{Relevance w.r.t.\ Inconsistent KBs} When studying relevance of OMQs, we focused on \emph{consistent} knowledge bases. %
Since explaining the inconsistency of a given KB can be seen as explaining the Boolean query `is the KB consistent?', 
we could naturally consider the relevance problem for inconsistency. We should be able to transfer upper bounds from the query setting, 
but not necessarily lower bounds, %
as such `inconsistency queries' may be restricted in ways that lower the complexity.  
Indeed, in $\dlliter$, relevance of inconsistency is easily shown to be in $\NL$ as minimal inconsistent subsets are of size at most~2. 
Alternatively, one may study the relevance problem for explanations of query answers under inconsistency-tolerant semantics.
Interestingly, existing proposals of explanations for repair-based semantics \cite{DBLP:journals/jair/BienvenuBG19} are defined via minimal supports, so we expect that our results will prove useful.%
\medskip %

\noindent\textbf{Interaction Width Beyond Relevance} We expect that our new notion of interaction width may prove useful for identifying tractable OMQ classes for other explanation-related tasks. In particular, it was recently shown that the class of interaction-free OMQs admits efficient counting of minimal supports \cite{ourKR25}, and we are cautiously optimistic that this positive result can be extended classes of OMQs in $\dlliter$ with bounded interaction width (in line with an analogous result shown in the database setting for classes of CQs with bounded self-join width \cite{ourpods25paper}). As detailed in the cited works, counting minimal supports has applications in query answer explanation as the number of minimal supports can be used to assign numeric scores to %
facts based upon their contribution to making the query hold. %

\section*{Acknowledgements}
This work has been partially supported by ANR grants INTENDED (ANR-19-CHIA-0014) and EXPAND (ANR-25-CE23-1215).

\ifarxiv
\else
  \section*{AI Declaration}
  The authors have not employed any Generative AI tools.
\fi

\ifarxiv
  \bibliographystyle{kr}
\else
  \bibliographystyle{kr}
\fi
\bibliography{long,biblio}
\appendix
\end{document}